%% bare_conf.tex
%% V1.3
%% 2007/01/11
%% by Michael Shell
%% See:
%% http://www.michaelshell.org/
%% for current contact information.
%%
%% This is a skeleton file demonstrating the use of IEEEtran.cls
%% (requires IEEEtran.cls version 1.7 or later) with an IEEE conference paper.
%%
%% Support sites:
%% http://www.michaelshell.org/tex/ieeetran/
%% http://www.ctan.org/tex-archive/macros/latex/contrib/IEEEtran/
%% and
%% http://www.ieee.org/
%% File list of work: IEEEtran.cls, IEEEtran_HOWTO.pdf, bare_adv.tex,
%%                    bare_conf.tex, bare_jrnl.tex, bare_jrnl_compsoc.tex
%%*************************************************************************

% Also note that the "draftcls" or "draftclsnofoot", not "draft", option
% should be used if it is desired that the figures are to be displayed in
% draft mode.
%
%\documentclass[onecolumn,draftclsnofoot,12pt]{IEEEtran}
\documentclass[conference]{IEEEtran}
\usepackage{amsfonts}
\usepackage{amssymb}
\usepackage{amsmath}
\usepackage{verbatim}
\usepackage{setspace}

\newtheorem{lemma}{Lemma}
\newtheorem{proposition}{Proposition}
\newtheorem{corollary}{Corollary}

\newtheorem{remark}{Remark}

%\usepackage{amsthm} %uncomment if need default thrm format
%\newtheorem{theorem}{Theorem}
%\newtheorem{corollary}{Corollary}
%\newtheorem{lemma}{Lemma}
%\newtheorem{remarks}{Remark}
%\newtheorem{notation}{Notation}
%\newtheorem{definition}{Definition}
%\newtheorem{assumption}{Assumption}
%\newtheorem{proposition}{Proposition}

%\newtheorem{lemma}[theorem]{Lemma}
%\newtheorem{proposition}[theorem]{Proposition}
%\newtheorem{corollary}[theorem]{Corollary}

% Add the compsoc option for Computer Society conferences.
%
% If IEEEtran.cls has not been installed into the LaTeX system files,
% manually specify the path to it like:
% \documentclass[conference]{../sty/IEEEtran}

% Some very useful LaTeX packages include:
% (uncomment the ones you want to load)

% *** MISC UTILITY PACKAGES ***
%
%\usepackage{ifpdf}
% Heiko Oberdiek's ifpdf.sty is very useful if you need conditional
% compilation based on whether the output is pdf or dvi.
% usage:
% \ifpdf
%   % pdf code
% \else
%   % dvi code
% \fi
% The latest version of ifpdf.sty can be obtained from:
% http://www.ctan.org/tex-archive/macros/latex/contrib/oberdiek/
% Also, note that IEEEtran.cls V1.7 and later provides a builtin
% \ifCLASSINFOpdf conditional that works the same way.
% When switching from latex to pdflatex and vice-versa, the compiler may
% have to be run twice to clear warning/error messages.

% *** CITATION PACKAGES ***
%
\usepackage{cite}
% cite.sty was written by Donald Arseneau
% V1.6 and later of IEEEtran pre-defines the format of the cite.sty package
% \cite{} output to follow that of IEEE. Loading the cite package will
% result in citation numbers being automatically sorted and properly
% "compressed/ranged". e.g., [1], [9], [2], [7], [5], [6] without using
% cite.sty will become [1], [2], [5]--[7], [9] using cite.sty. cite.sty's
% \cite will automatically add leading space, if needed. Use cite.sty's
% noadjust option (cite.sty V3.8 and later) if you want to turn this off.
% cite.sty is already installed on most LaTeX systems. Be sure and use
% version 4.0 (2003-05-27) and later if using hyperref.sty. cite.sty does
% not currently provide for hyperlinked citations.
% The latest version can be obtained at:
% http://www.ctan.org/tex-archive/macros/latex/contrib/cite/
% The documentation is contained in the cite.sty file itself.

% *** GRAPHICS RELATED PACKAGES ***
%
\ifCLASSINFOpdf
   \usepackage[pdftex]{graphicx}
  % declare the path(s) where your graphic files are
  % \graphicspath{{../pdf/}{../jpeg/}}
  % and their extensions so you won't have to specify these with
  % every instance of \includegraphics
  % \DeclareGraphicsExtensions{.pdf,.jpeg,.png}
\else
  % or other class option (dvipsone, dvipdf, if not using dvips). graphicx
  % will default to the driver specified in the system graphics.cfg if no
  % driver is specified.
   \usepackage[dvips]{graphicx}
  % declare the path(s) where your graphic files are
  % \graphicspath{{../eps/}}
  % and their extensions so you won't have to specify these with
  % every instance of \includegraphics
  % \DeclareGraphicsExtensions{.eps}
\fi
\newcounter{mytempeqncnt}

% blackboard lowercase

\def\bb0{{\mathbb{0}}}

% Bold lowercase

\def\bb{{\mathbf{b}}}

\def\bg{{\mathbf{g}}}
\def\bh{{\mathbf{h}}}

\def\bw{{\mathbf{w}}}

\def\b0{{\mathbf{0}}}

% Bold capital letters

\def\bD{{\mathbf{D}}}
\def\bE{{\mathbf{E}}}

\def\bG{{\mathbf{G}}}
\def\bH{{\mathbf{H}}}
\def\bI{{\mathbf{I}}}

\def\bN{{\mathbf{N}}}

\def\bY{{\mathbf{Y}}}

% Blackboard capital letters

% Caligraphic capital letters

% Sans serif capital letters

% sans serif lowercase

\def\sf0{{\mathsf{0}}}

% correct bad hyphenation here
\hyphenation{op-tical net-works semi-conduc-tor wire-less sto-chas-tic geo-met-ry clus-ter}
\usepackage{algorithm}
\usepackage{algpseudocode}
\usepackage{bbm}
\usepackage{epstopdf}
\usepackage{balance}
\IEEEoverridecommandlockouts

\begin{document}
%
% paper title
% can use linebreaks \\ within to get better formatting as desired
%\title{Performance Analysis of Millimeterwave \\Energy Harvesting Networks}
\title{Optimization of Power Transfer Efficiency and Energy Efficiency for Wireless-Powered Systems with Massive MIMO}

% author names and affiliations
% use a multiple column layout for up to three different
% affiliations

%\author{Talha Ahmed Khan, Ali Yazdan Panah and Robert W. Heath Jr.\\
%%Wireless Networking and Communications Group\\
%The University of Texas at Austin, USA,
%Email:\{talhakhan, aalkhateeb, rheath\}@utexas.edu
%\thanks{ %The authors are with the Wireless Networking and Communications Group at The University of Texas at Austin. Email:\{talhakhan, aalkhateeb, rheath\}@\{utexas.edu\}. 
%This work was done while T. A. Khan was with Facebook Inc.}}

\author{Talha Ahmed Khan, Ali Yazdan, and Robert W. Heath Jr. \thanks{T. A. Khan and R. W. Heath Jr. are with the Department of Electrical and Computer Engineering at The University of Texas at Austin, 2501 Speedway, Austin, TX 78712, USA. A. Yazdan is with Facebook Inc., 1 Hacker Way, Menlo Park, CA 94025, USA (Email: talhakhan@utexas.edu, rheath@utexas.edu, ayp@fb.com).} 
\thanks{T. A. Khan and R. W. Heath Jr. were supported in part by grant W911NF-14-1-0460 from the Army Research Office and a gift from Mitsubishi Electric Research Labs.}
\thanks{This work was presented in part at IEEE 2017 VTC-Spring \cite{VTC17spr}.}}
% make the title area
\maketitle
\begin{abstract}
Massive MIMO is attractive for wireless information and energy transfer due to its ability to focus energy towards desired spatial locations. In this paper, the overall power transfer efficiency (PTE) and the energy efficiency (EE) of a wirelessly powered massive MIMO system is investigated where a multi-antenna base-station (BS) uses wireless energy transfer to charge single-antenna energy harvesting users on the downlink. The users may exploit the harvested energy to transmit information to the BS on the uplink. The overall system performance is analyzed while accounting for the nonlinear nature of practical energy harvesters. First, for wireless energy transfer, the PTE is characterized using a scalable model for the BS circuit power consumption. The PTE-optimal number of BS antennas and users are derived. Then, for wireless energy and information transfer, the EE performance is characterized. The EE-optimal BS transmit power is derived in terms of the key system parameters such as the number of BS 
antennas and the number of users. As the number of antennas becomes large, increasing the transmit power improves the energy efficiency for moderate to large number of antennas. Simulation results suggest that it is energy efficient to operate the system in the massive antenna regime. 
\end{abstract}
\begin{IEEEkeywords}
Energy efficiency, power transfer efficiency, wireless power transfer, wireless-powered communications, massive MIMO, energy harvesting, wireless information and power transfer.
\end{IEEEkeywords}
% For peer review papers, you can put extra information on the cover
% page as needed:
% \ifCLASSOPTIONpeerreview
% \begin{center} \bfseries EDICS Category: 3-BBND \end{center}
% \fi
%
% For peerreview papers, this IEEEtran command inserts a page break and
% creates the second title. It will be ignored for other modes.
\IEEEpeerreviewmaketitle

%\section{Introduction}
% no \IEEEPARstart
%\renewcommand{\baselinestretch}{1.50}
%\onehalfspacing
\section{Introduction}
Massive multiple-input multiple-output (MIMO) architecture is a key technology for enabling future 5G networks \cite{heath2014five,marzetta2010massive,khan2017mag}.
Due to its ability to beam energy towards desired spatial regions, massive MIMO is attractive for wireless energy transfer\cite{kayshap2015massivewet,Bi2015wpcsurvey,ref2}. This could enable a wirelessly powered operation for the massive number of RF (radio frequency) energy harvesting devices in future paradigms such as the Internet of Things (IoT) \cite{EnergyHarvestWirelessCommSurvey2015,GollakotaRF,IoT2014}. An
RF or wireless energy harvesting device extracts energy from the incident RF signals. Such wirelessly powered systems are becoming more feasible due to the reduction in the power consumption requirements of devices and the advancement in energy harvesting technologies\cite{GollakotaRF,RFsurveyLondon,valenta2014,talla2015powering}. 

\subsection{Motivation and Related Work} Energy efficiency (EE) has been a key consideration in the system-level analyses of massive MIMO systems\cite{Ngo2013EE,ref3,liu2015massive}. It is often characterized by the ratio of the achievable data rate (bits/sec) and the total power consumption (watts).
While deploying more antennas at the base-station (BS) boosts the data rate, the additional antenna circuitry leads to increased power consumption. This motivates the need for an energy efficient system design. 
In \cite{Ngo2013EE}, the energy efficiency of a massive MIMO system was analyzed while ignoring the circuit power consumption. It was shown that the energy efficiency improves as more antennas are added to the BS. %It was shown that the transmit power can be reduced proportional to the reciprocal of the number of antennas without loss in performance. 
Unlike \cite{Ngo2013EE} which considered the transmit power consumption only, the work in \cite{ref3,liu2015massive} investigated the energy efficiency of a massive MIMO system while accounting for the BS circuit power consumption. In \cite{ref3}, it was shown that the transmit power should be increased with the number of antennas for an energy efficient system operation. Moreover, the energy efficiency eventually vanishes in the large-antenna regime. In \cite{liu2015massive}, the downlink energy efficiency of a massive MIMO system was analyzed for a spatially correlated channel model. It was shown that the optimal transmit power is independent of the number of antennas in pilot-contaminated systems. None of this work \cite{Ngo2013EE,ref3,liu2015massive} considered wireless energy and information transfer. 

The energy efficiency and power transfer efficiency (PTE) of RF-powered systems have also been investigated \cite{ref1,ee2017massive,pte2015}.
In \cite{ref1}, a single-user wireless information and power transfer system with a massive antenna array was considered. By jointly optimizing the power transfer duration and the transmit power, an energy efficient resource allocation strategy was proposed under a delay constraint.
In \cite{ee2017massive}, the energy efficiency of a wirelessly powered multi-user massive MIMO system with imperfect channel knowledge was considered. A resource allocation algorithm was designed for optimizing the system parameters such as the number of antennas and power transfer duration.   
In \cite{pte2015}, the power transfer efficiency of a multi-user wireless energy transfer system was investigated. It was shown that the power transfer efficiency can be improved with opportunistic scheduling as the number of users is increased. In other related work, the throughput optimization of massive MIMO wireless information and power transfer systems has also been studied\cite{ref2}. In \cite{ref2}, a throughput-optimal resource allocation policy was proposed for the large-antenna regime. A key limitation of  \cite{ref1,ee2017massive,pte2015,ref2} lies in 
assuming an ideal energy harvesting model and/or a fixed BS power consumption model, which may lead to misleading conclusions in practice.

\subsection{Contributions} In this paper, we characterize the power transfer efficiency and the energy efficiency of a massive MIMO wireless energy and information transfer system using a scalable power consumption model.
Using a piecewise linear energy harvesting model, we derive the average harvested power at a user while accounting for imperfect channel knowledge.
We first focus on wireless energy transfer and analyze the system-level power transfer efficiency. We characterize the optimal number of BS antennas
and users that maximize the power transfer efficiency. We find that the optimal design is guided by the BS power consumption as well as the energy harvesting parameters.
We then consider the case of wireless energy and information transfer where the users exploit the harvested energy to communicate with the BS. We analytically characterize the optimal BS transmit power for an energy efficient system operation. Moreover, we examine the interplay between energy efficiency and the key system parameters. Numerical results suggest that both power transfer efficiency and energy efficiency benefit from operating the system in the massive antenna regime.

\begin{figure}[t]
\centering
{\resizebox{\columnwidth}{!} 
{\includegraphics{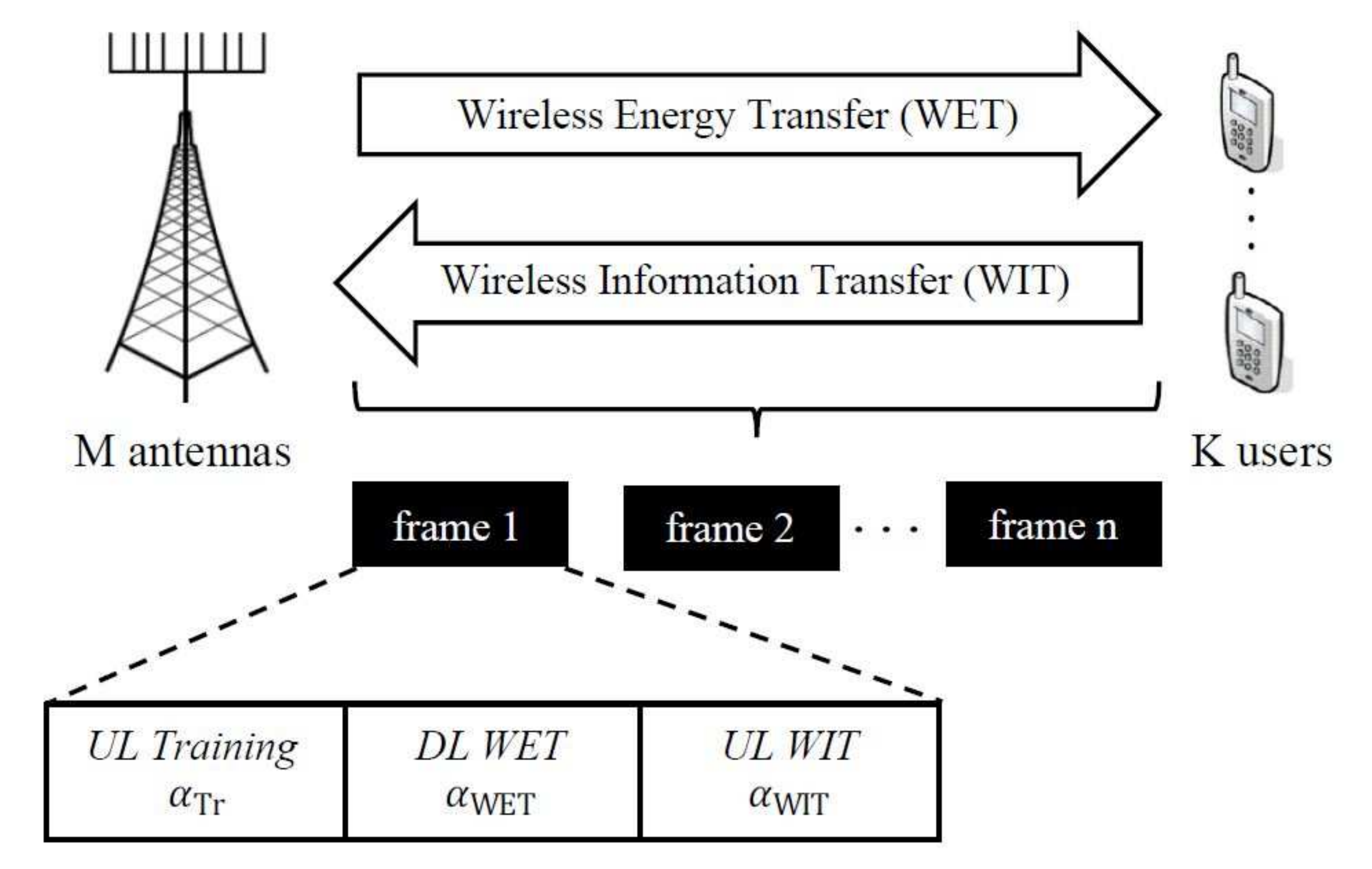}}}  
\caption{System Model.} 
   \label{fig:sys}
\end{figure} 
%-------------------

This paper differs from other related work in several important ways.
First, most prior work investigating the power transfer efficiency or energy efficiency of wireless-powered systems either ignores the BS circuit power consumption or treats it as a fixed component
\cite{pte2015,ref2,ref1}.
%Most prior work on wirelessly powered massive MIMO systems either 
This could be misleading since the total power consumption varies with various system parameters such as the number of antennas, the number of users, and the choice of the transmit/receive filters. 
We address this concern by using a scalable power consumption model.
Second, the existing analyses \cite{ref1,ee2017massive,pte2015,ref2} typically consider an ideal energy harvester (EH), where the output power is a scalar multiple of the input. This affords analytical simplicity but it could be misleading in practice. This is because the output power of a practical energy harvester is a nonlinear function of the input. More recently, nonlinear energy harvesting models have been proposed to address this concern \cite{nonlinearEH2015,nonlinearEH2017}. 
In \cite{nonlinearEH2015}, a logistic function was considered for modeling the harvester, and a resource allocation algorithm was designed to maximize the harvested power. In \cite{nonlinearEH2017}, a similar model was used while studying the throughput maximization problem in a multi-user MIMO wireless-powered communication system with separate stations for energy transfer and data reception. To determine the model parameters, the model from \cite{nonlinearEH2015,nonlinearEH2017} relies on curve fitting using measurement data for the energy harvesting circuit under consideration. 
In this paper, we use a piecewise linear model for the energy harvester which abstracts the input-output relationship via activation and saturation thresholds. This captures the key limitations of a practical energy harvester while keeping the analysis tractable. With this motivation, we investigate the power transfer efficiency and energy efficiency of a remotely-powered system using realistic models for energy harvesting and power consumption. This paper is an extension of our previous conference/magazine article \cite{VTC17spr,khan2017mag} where only the energy efficiency of a similar setup was analyzed for the case of ideal energy harvesters. Unlike our previous work, this paper also provides an analytical treatment of power transfer efficiency while considering a realistic model for energy harvesting and power consumption.   

The rest of the paper is organized as follows. Section \ref{Sec: sytem model} describes the system model. Section \ref{SecAvg} derives the average received power at a user which sets the stage for the ensuing analysis. Section \ref{Sec: wet} characterizes the performance of wireless energy transfer in terms of the average harvested power and the power transfer efficiency. Section \ref{sec WIPT} analyzes the performance of wireless energy and information transfer in terms of the data rate and the energy efficiency. Finally, Section \ref{secConc} concludes the paper.

\section{System Model}  \label{Sec: sytem model}
\subsubsection*{Channel Model} \label{Sec: channel model}
We consider a wireless-powered communications system consisting of a BS with $M$ antennas and $K$ single-antenna users. 
%We consider a wireless energy and information transfer system consisting of a BS with $M$ antennas and $K$ single-antenna users. 
We assume each user is equipped with an RF energy harvesting module. 
%of conversion efficiency $\eta_{\rm{EH}}$\footnote{Despite being a nonlinear device, energy harvester is often modeled as a linear device with a constant conversion efficiency for ease of analysis  \cite{EnergyHarvestWirelessCommSurvey2015}.}. 
The BS charges the users on the downlink (DL) and the users exploit the harvested energy to communicate with the BS on the uplink (UL).
%attempts to communicate to the BS on the uplink using the energy extracted from the incident RF signals.
 %The users attempt to communicate to the BS on the uplink using the energy harvested.
 We assume a TDD (time division duplex) mode of operation consisting of a downlink wireless energy transfer (WET) phase, an uplink wireless information transfer (WIT) phase, and an uplink training phase (see Fig. \ref{fig:sys}) \cite{ref2}. We assume the BS learns the uplink channels for each user in the uplink training phase, and uses channel reciprocity to learn the downlink channels. It uses the estimated channel for decoding information on the uplink, and beamforming energy on the downlink. The energy harvesting users, however, are not assumed to have any channel knowledge. 
 
We define $S=T_{\rm{c}}B_{\rm{c}}$ as the length of the coherence block or the frame size, where $T_{\rm{c}}$ and $B_{\rm{c}}$ denote the coherence time and the coherence bandwidth of the wireless channel. The frame is divided into three phases such that a fraction $\alpha_{\textrm{Tr}}\in(0,1)$ is reserved for uplink training, a fraction $\alpha_{\textrm{WET}}\in(0,1)$ for wireless energy transfer, and a fraction $\alpha_{\textrm{WIT}}\in(0,1)$ for wireless information transfer. Moreover, we assume that $\alpha_{\textrm{WET}}+\alpha_{\textrm{WIT}}+\alpha_{\textrm{Tr}}=1$, and set $\alpha_{\textrm{Tr}}=\frac{\tau}{S}$ (where $K\leq\tau<S$) proportional to the number of users. 
%We now define the channel model. 
We let ${\textbf{h}}_i=\left[h_{i1},\cdots,h_{iM}\right]^T\in\mathbb{C}^{M\times1}$ be the uplink channel from a user $i$ to the BS, where $i\in\{1,\cdots,K\}$. We assume a rich scattering environment with sufficiently spaced antennas such that $h_{ij}$~${\raise.17ex\hbox{$\scriptstyle\mathtt{\sim}$}}$
 $\mathcal{CN}(0,1)$ is a zero-mean complex Gaussian random variable with unit variance, which is independent across $i$ and $j$. While our model assumes Rayleigh fading, it will also be useful to study other fading environments such as Ricean fading in future.
  We use $\beta_{ij}=Cd_{ij}^{-\alpha}$ to model the large-scale gain for the link from user $i$ to the $j$th BS antenna, where $d_{ij}$ denotes the link distance, $\alpha>2$ is the path loss exponent, and $C>0$ is the path loss intercept. We define $\beta_{i}=\sum_{j=1}^{M}\frac{\beta_{ij}}{M}$ as the average large scale gain for user $i$, and treat $d_i=\left(C/\beta_i\right)^{1/\alpha}$ as the corresponding link distance for user $i$.
  %We use $\beta_i=Cd_i^{-\alpha}$ to model the average large-scale gain for the link between the BS and a user $i$ where $d_i$ denotes the link distance, $\alpha>2$ is the path loss exponent, and $C>0$ is the path loss intercept. 
 We assume that the users are uniformly distributed around the BS in an annulus with inner radius $r_{\min}$ and outer radius $r_{\max}$ such that $\forall~i\in\mathcal{I}_K$,
 the probability density function $f_{d_i}(r)=2r/(r_{\max}^2-r_{\min}^2)$ for
  $r_{\min}\leq r\leq r_{\max}$ and $f_{d_i}(r)=0$ otherwise. By averaging over the user locations, it follows that $\mathbb{E}[{d_i}^{-\alpha}]=\int\limits_{r_{\min}}^{r_{\max}}r^{-\alpha}f_{d_i}(r) \text{d}r
  =\frac{{r_{\max}}^{2-\alpha}-{r_{\min}}^{2-\alpha}}{(1-0.5\alpha)\left({r_{\max}}^{2}-{r_{\min}}^{2}\right)}$.
 We further define $\bH=\begin{bmatrix}\bh_1, \bh_2,\cdots,\bh_K\end{bmatrix}\in\mathbb{C}^{M\times K}$ and 
 $\bG=\begin{bmatrix}\bg_1, \bg_2,\cdots,\bg_K\end{bmatrix}\in\mathbb{C}^{M\times K}$ such that $\bG=\bH\bD^{1/2}$, where $\bD$ is a diagonal matrix with $(\beta_1,\cdots,\beta_K)$ as the entries of the main diagonal. 
We let $\hat{\bG}=\begin{bmatrix}\hat{\bg_1}, \hat{\bg_2},\cdots,\hat{\bg}_K\end{bmatrix}$ denote the channel estimate of $\bG$ at the BS. 
 For the uplink, we denote the average transmit power (in watts) at a user by $P_{\rm{ul}}=\alpha_{\rm{WIT}}B{p_{\rm{ul}}}$, where $p_{\rm{ul}}$ gives the average transmit symbol energy (in joules/symbol), while $B$ denotes the system bandwidth. 
 The user draws the uplink transmit power from the energy it harvests in the downlink. Similarly, for the downlink, 
 $P_{\rm{dl}}={\alpha_{\rm{WET}}B}{p_{\rm{dl}}}$ denotes the average BS transmit power (in watts), and $p_{\rm{dl}}$ (joules/symbol) gives the average downlink transmit energy in one symbol duration. We clarify that the downlink symbol is only an energy-bearing symbol which carries no information.
 We further note that the transmit signal waveform for wireless power transfer presents another design degree of freedom \cite{bruno2017waveform}. We do not consider waveform optimization in this paper.

\subsubsection*{Energy Harvesting Model}We assume that each user is equipped with an RF energy harvesting module with a sufficiently large battery. To simplify the analysis, prior work mostly assumes an \textit{ideal} energy harvester where the harvested energy scales linearly with the input power. In practice, however, an energy harvester is a nonlinear device with a small operating range, which may lead to vastly different performance trends compared to the ideal case\cite{nonlinearEH2015}. For example, the incident energy should be sufficiently high to activate the harvester; not all the incident energy can be harvested; and the harvester output eventually saturates beyond a certain input power. 
We, therefore, strengthen the analysis by parameterizing the harvester operation using $\{\theta_\textrm{act},\theta_\textrm{sat},\eta_\textrm{EH}\}$: $\theta_\textrm{act}$ is the harvester activation threshold (watts), $\theta_\textrm{sat}$ is the harvester saturation threshold (watts), and $\eta_\textrm{EH}\in(0,1]$ is the rectifier efficiency. An ideal energy harvester has $\theta_\textrm{act}=0$ and $\theta_\textrm{sat}=\infty$. We will often call a harvester in the active mode to be in the non-saturated mode.
%We assume the same energy is used for the training as well as data phases.
\subsubsection*{Notation}
For a positive integer $K$, we define the index set $\mathcal{I}_K=\{1,\cdots,K\}$. We use the superscripts $*$ and $\rm{H}$ to denote conjugate and conjugate transpose of a matrix. We use $\lceil x \rceil$ and $\lfloor x \rfloor$ to denote the integer ceiling or the integer floor of a real number $x$.
\section{Average Received Energy}\label{SecAvg}
In this section, we analytically characterize the average received (incident) energy at the users assuming perfect and imperfect channel state information (CSI) at the BS. The corresponding harvested energy is characterized in the next section. 
\subsection{Average Received Energy: Perfect CSI}
We assume that the BS transmits with the average transmit energy $p_{\rm{dl}}$ (in joules/symbol) in the downlink. 
%We use conjugate beamforming for wireless energy transfer in the downlink, since it has been shown to be asymptotically optimal for wireless energy transfer\cite{zhang2013mimo}. 
The BS uses a weighted sum of conjugate beamformers for each user in the downlink, since it has been shown to be asymptotically optimal for wireless energy transfer\cite{zhang2013mimo}.
%The BS designs a precoder $\bw_{\rm{dl}}$ for beamforming energy on the downlink. It is a weighted sum of energy beamformers for each user, i.e., $\bw_{\rm{dl}}=\sum_{i=1}^{K} \sqrt{\zeta_i}\frac{\bw_i}{\|\bw_i\|}$ where $\bw_i=\bg_i$, and $\zeta_i\in(0,1)~\forall~i$ such that $\sum_{i=1}^{K}\zeta_i=1$. For simplicity, we assume equal-weight combining and let $\zeta_i=1/K~\forall~i$.
The precoder $\bw_{\rm{dl}}=\sum_{i=1}^{K} \sqrt{\zeta_i}\frac{\bw_i}{\|\bw_i\|}$ where $\bw_i=\hat{\bg}_i$, and $\zeta_i\in(0,1)~\forall~i$ such that $\sum_{i=1}^{K}\zeta_i=1$. 
Assuming the BS transmits a signal $s$ with $\mathbb{E}[|s|^2]=p_{\rm{dl}}$, the signal $y_i$ received at user $i$ can be expressed as
\begin{align}\label{eq:signal main}
y_i&=\bg_i^{\rm{H}}\bw_{\rm{dl}}s+n_i=\sqrt{\zeta_i}{\bg_i}^{\rm{H}}\frac{\hat{\bg}_i}{\|\hat{\bg}_i\|}s+
\sum_{j\neq i}^{K}\sqrt{\zeta_j}\bg_i^{\rm{H}}\frac{\hat{\bg}_j}{\|\hat{\bg}_j\|}s+n_i,
\end{align}
where $n_i$ is the receiver noise. A user harvests energy from the beam directed towards it, as well as from those directed towards other users. Assuming perfect channel knowledge at the BS such that $\hat{\bg_i}=\bg_i\,\forall\,i\in\mathcal{I}_K$, (\ref{eq:signal main}) simplifies to   
\begin{align}\label{eq:signal}
y_i&=\sqrt{\zeta_i}{\|\bg_i\|}s+
\sum_{j\neq i}^{K}\sqrt{\zeta_j}\bg_i^{\rm{H}}\frac{\bg_j}{\|\bg_j\|}s+n_i.
\end{align}
The contribution from the noise term is usually negligible and is therefore ignored. This results in the following analytical expression for the average received energy $\bar{\gamma}_i=\alpha_{\rm{WET}}\,\mathbb{E}\left[|y_i|^2\right]$ at a user $i$.
\begin{lemma}\label{lemma:avgrcvperfectcsi}
When a BS with $M$ antennas serves $K$ single-antenna energy harvesting users, the average received energy $\bar{\gamma}_i$ (in joules/symbol) at a user $i$, assuming perfect channel knowledge at the BS, is given by 
\begin{align}\label{eq:avg rcv perfect lem1}
	\bar{\gamma}_i= \alpha_{\rm{WET}}\,p_{\rm{dl}}\,\beta_i\left(\zeta_iM+ \left(1-\zeta_i\right)\right)
\end{align}
where $\alpha_{\rm{WET}}$ denotes the fraction reserved for downlink energy transfer, $p_{\rm{dl}}$ gives the transmit symbol energy (joules/symbol), and $\beta_i$ gives the large-scale channel gain. 
%We can take expectation with respect to the user locations by replacing $\beta_i$ in (\ref{eq:avg rcv perfect lem1}) by $C\mathbb{E}[d_i^{-\alpha}]$ from Section \ref{Sec: channel model}.
\end{lemma}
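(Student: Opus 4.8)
The plan is to compute $\bar\gamma_i=\alpha_{\rm WET}\,\mathbb{E}\!\left[|y_i|^2\right]$ directly from the simplified received signal (\ref{eq:signal}), exploiting two independence facts: the transmit symbol $s$ is independent of all channels, and the user channels $\{\bg_j\}_{j=1}^K$ are mutually independent. Since $\mathbb{E}[|s|^2]=p_{\rm dl}$ and the noise is dropped, the expectation factors as $\mathbb{E}[|y_i|^2]=p_{\rm dl}\,\mathbb{E}\bigl[\,\bigl|\sqrt{\zeta_i}\|\bg_i\|+\sum_{j\neq i}\sqrt{\zeta_j}\,\bg_i^{\rm H}\bg_j/\|\bg_j\|\bigr|^2\,\bigr]$. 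My first step is to expand this squared magnitude into the desired-beam power $\zeta_i\|\bg_i\|^2$, the interference powers $\sum_{j\neq i}\zeta_j\,|\bg_i^{\rm H}\bg_j/\|\bg_j\||^2$, and the cross terms (desired-versus-interferer and interferer-versus-interferer).

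The second step is to show that every cross term vanishes in expectation. The crucial observation is that, conditioned on $\bg_i$, each interfering direction $\bg_j/\|\bg_j\|$ is isotropic in $\mathbb{C}^M$, so $\mathbb{E}[\bg_j/\|\bg_j\|\mid\bg_i]=\b0$ and hence $\mathbb{E}[\bg_i^{\rm H}\bg_j/\|\bg_j\|\mid\bg_i]=0$. This annihilates the desired-interferer cross terms after conditioning on $\bg_i$; and since $\bg_j\perp\bg_k$ for $j\neq k$, each interferer-interferer cross term factors into a product of two such zero-mean quantities and likewise vanishes.

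The third step evaluates the two surviving terms. The desired term gives $\mathbb{E}[\zeta_i\|\bg_i\|^2]=\zeta_i M\beta_i$, since each of the $M$ entries of $\bg_i$ has variance $\beta_i$. For each interference term I would condition on $\bg_i$ and invoke the isotropy identity $\mathbb{E}\!\left[\bg_j\bg_j^{\rm H}/\|\bg_j\|^2\right]=\tfrac1M\bI$ (the second-moment matrix of a uniform unit vector in $\mathbb{C}^M$), which yields $\mathbb{E}[|\bg_i^{\rm H}\bg_j/\|\bg_j\||^2\mid\bg_i]=\|\bg_i\|^2/M$ and, after averaging over $\bg_i$, the value $\beta_i$. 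Hence the interference contributes $\beta_i\sum_{j\neq i}\zeta_j$. Combining the two and substituting $\sum_{j\neq i}\zeta_j=1-\zeta_i$ from the constraint $\sum_{j=1}^K\zeta_j=1$ gives $\mathbb{E}[|y_i|^2]=p_{\rm dl}\beta_i\bigl(\zeta_i M+(1-\zeta_i)\bigr)$, and scaling by $\alpha_{\rm WET}$ produces (\ref{eq:avg rcv perfect lem1}).

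The main obstacle is the isotropy identity $\mathbb{E}[\bg_j\bg_j^{\rm H}/\|\bg_j\|^2]=\tfrac1M\bI$, together with the accompanying decoupling of a complex Gaussian vector's direction from its norm; once this is in hand, the remainder is bookkeeping. A minor subtlety worth flagging is that pulling $\|\bg_i\|^2$ outside the inner expectation is legitimate precisely because, conditioned on $\bg_i$, the interferer direction's second moment depends on $\bg_i$ only through the quadratic form $\bg_i^{\rm H}\bg_i$.
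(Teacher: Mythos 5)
Your proof is correct and follows essentially the same route as the paper's: expand $\mathbb{E}[|y_i|^2]$ from (\ref{eq:signal}) into the desired, interference, and cross terms, kill the cross terms using independence of $\{\bg_j\}$ and the zero mean of the interferer directions, and evaluate the surviving terms to get $\zeta_i M\beta_i p_{\rm dl}$ and $\beta_i p_{\rm dl}(1-\zeta_i)$. The only cosmetic difference is that for the interference term you condition on $\bg_i$ and invoke $\mathbb{E}\bigl[\bg_j\bg_j^{\rm H}/\|\bg_j\|^2\bigr]=\tfrac{1}{M}\bI$, whereas the paper implicitly conditions on the unit vector $\hat{\bg}_j$ and uses that the projection of the Gaussian $\bg_i$ onto any fixed unit direction has variance $\beta_i$; both are valid and yield the same value $\beta_i$.
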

\begin{proof}
See Appendix.
\end{proof}
The parameter $\alpha_{\rm{WET}}$ captures the fact that the users receive energy for a fraction $\alpha_{\rm{WET}}$ of the frame. The average received energy during the entire frame is given by $S\bar{\gamma}_i$.
We note that the average received power $B\bar{\gamma}_i$ increases with an increase in the number of BS antennas $M$. Its dependency on the number of users is captured by the energy allocation parameter $\zeta_i$, which tends to decrease as more users are added to the system. Moreover, the term $\zeta_iM$ is due to the BS transmission intended for user $i$, while $1-\zeta_i$ results from the transmissions intended for other users. 
%------------equation
%\begin{figure*}[!t]
%% ensure that we have normalsize text
%% %\normalsize
%% Store the current equation number.
%\setcounter{MYtempeqncnt}{\value{equation}}
%% Set the equation number to one less than the one
%% desired for the first equation here.
%% The value here will have to changed if equations
%% are added or removed prior to the place these
%% equations are referenced in the main text.
%\setcounter{equation}{7}
%\begin{align}\label{eq: rcv ls act}
%{\psi}_{i}^{\rm{LS,act}}&=\frac{A_1 M+A_2-A_3+\sqrt{\left(A_1 M+A_2-A_3\right)^2+4\left(A_1+A_2\right)A_3}}{2},
%\end{align}
%% Restore the current equation number.
%\setcounter{MYtempeqncnt2}{\value{equation}}
%\setcounter{equation}{\value{MYtempeqncnt}}
%% IEEE uses as a separator
%\hrulefill
%% The spacer can be tweaked to stop underfull vboxes.
%\vspace*{4pt}
%\end{figure*}
%------------
\begin{corollary}
The average received energy $\bar{\gamma}_i\leq \alpha_{\rm{WET}}\,p_{\rm{dl}}\,\beta_i M$, which holds with equality for the single-user scenario where $\zeta_i=1$.
\end{corollary}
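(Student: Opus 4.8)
The plan is to start directly from the closed-form expression for $\bar{\gamma}_i$ established in Lemma~\ref{lemma:avgrcvperfectcsi}, namely $\bar{\gamma}_i = \alpha_{\rm{WET}}\,p_{\rm{dl}}\,\beta_i\left(\zeta_i M + (1 - \zeta_i)\right)$, and to bound the bracketed factor $\zeta_i M + (1 - \zeta_i)$ from above by $M$. Since $\alpha_{\rm{WET}}$, $p_{\rm{dl}}$, and $\beta_i$ are all nonnegative, the claimed inequality on $\bar{\gamma}_i$ follows immediately once this scalar bound is in hand, so the entire argument reduces to an elementary statement about the convex combination hidden in the factor.

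To establish $\zeta_i M + (1 - \zeta_i) \leq M$, I would rearrange it as $\zeta_i(M - 1) \leq (M - 1)$, equivalently $(M-1)(\zeta_i - 1) \leq 0$. The key observation is that each factor has a definite sign: $M \geq 1$ gives $M - 1 \geq 0$, while the energy-allocation constraint $\zeta_i \in (0,1]$ together with $\sum_i \zeta_i = 1$ gives $\zeta_i - 1 \leq 0$. The product of a nonnegative and a nonpositive quantity is nonpositive, which is exactly the bound needed. Intuitively, $\zeta_i M + (1-\zeta_i)$ is a convex combination of $M$ and $1$, and hence cannot exceed the larger endpoint $M$.

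For the equality condition, I would note that $(M-1)(\zeta_i - 1) = 0$ forces either $M = 1$ or $\zeta_i = 1$. In the single-user scenario $K = 1$, the normalization $\sum_{i=1}^{K} \zeta_i = 1$ collapses to $\zeta_1 = 1$, so the second factor vanishes and equality is attained; substituting $\zeta_i = 1$ back into the expression from Lemma~\ref{lemma:avgrcvperfectcsi} recovers $\bar{\gamma}_i = \alpha_{\rm{WET}}\,p_{\rm{dl}}\,\beta_i M$ exactly. There is no genuine obstacle here, as the corollary is a direct consequence of the preceding lemma; the only point requiring care is to flag explicitly the sign conditions ($M \geq 1$ and the simplex constraint on the $\zeta_i$) that render the one-line bound valid and that pin down the equality case.
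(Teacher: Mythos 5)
Your proof is correct and follows the same route the paper intends: the corollary is an immediate consequence of Lemma~\ref{lemma:avgrcvperfectcsi}, bounding the convex combination $\zeta_i M + (1-\zeta_i)$ by its larger endpoint $M$ (the paper states the corollary without proof, treating exactly this observation as obvious). Your explicit handling of the sign conditions and the equality case $\zeta_i=1$ matches what the paper leaves implicit.
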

\begin{corollary} 
\label{cor: rcv pwr perfect}
Under an equal transmit energy allocation at the BS, i.e., $\zeta_i=\frac{1}{K}\,\forall\,i\in \mathcal{I}_K$, the average received energy is given by
\begin{align}\label{eq:avg rcv cor 1}
	%\gamma_i=\eta_{\rm{EH}}\alpha_{\rm{WET}}p_{\rm{dl}}\beta_i\left(\zeta_iM+(1-\zeta_i)\right)
	%\bar{\gamma}_i= \alpha_{\rm{WET}}\,p_{\rm{dl}}\,\beta_i\left(\zeta_iM+ \left(1-\zeta_i\right)\right)\\
	\bar{\gamma}_i= \alpha_{\rm{WET}}\,p_{\rm{dl}}\,\beta_i\left(1+\frac{M-1}{K}\right).
\end{align}
\begin{proof}
This follows by plugging $\zeta_i=\frac{1}{K}$ in (\ref{eq:avg rcv perfect lem1}).
\end{proof}
\end{corollary}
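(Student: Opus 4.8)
The plan is to obtain this directly from Lemma~\ref{lemma:avgrcvperfectcsi}, which already supplies the average received energy $\bar\gamma_i$ for an \emph{arbitrary} admissible energy-allocation profile $\{\zeta_i\}$, namely one satisfying $\zeta_i\in(0,1)$ and $\sum_{i=1}^{K}\zeta_i=1$. The first step is simply to verify that the equal-allocation choice $\zeta_i=\frac{1}{K}$ is itself admissible, so that the lemma applies: it clearly meets $\sum_{i=1}^{K}\frac{1}{K}=1$, and $\frac{1}{K}\in(0,1)$ whenever $K\geq 2$ (the single-user case $K=1$ is already covered by the preceding corollary, where $\zeta_i=1$ yields $\bar\gamma_i=\alpha_{\rm{WET}}\,p_{\rm{dl}}\,\beta_i M$, which agrees with the $K=1$ limit of the claimed expression).

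The second step is to substitute $\zeta_i=\frac{1}{K}$ into the bracketed factor $\zeta_iM+(1-\zeta_i)$ appearing in (\ref{eq:avg rcv perfect lem1}) and collect terms: this gives $\frac{M}{K}+1-\frac{1}{K}=1+\frac{M-1}{K}$, so the whole expression reduces to $\alpha_{\rm{WET}}\,p_{\rm{dl}}\,\beta_i\bigl(1+\tfrac{M-1}{K}\bigr)$, which is precisely (\ref{eq:avg rcv cor 1}).

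There is essentially no obstacle here: all the analytical content is carried by Lemma~\ref{lemma:avgrcvperfectcsi}, and the remaining work is a one-line algebraic substitution together with the trivial feasibility check of the uniform allocation. What is worth flagging is interpretive rather than technical, namely that the resulting form $1+\frac{M-1}{K}$ cleanly isolates the array-gain contribution $\frac{M-1}{K}$ that scales with the antenna-to-user ratio $M/K$, which is exactly the quantity driving the power-transfer-efficiency analysis in Section~\ref{Sec: wet}.
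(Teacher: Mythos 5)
Your proposal is correct and matches the paper's proof exactly: both simply substitute $\zeta_i=\frac{1}{K}$ into the expression of Lemma~\ref{lemma:avgrcvperfectcsi} and simplify $\frac{M}{K}+1-\frac{1}{K}=1+\frac{M-1}{K}$. The added feasibility check and interpretive remark are harmless elaborations but not needed beyond the one-line substitution the paper gives.
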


\begin{corollary}\label{cor:m/k ratio}
The average received energy converges to  $\lim\limits_{M,K\rightarrow\infty}\bar{\gamma}_i=\alpha_{\rm{WET}}\,p_{\rm{dl}}\,\beta_i\,\left(1+r\right)$ as both $M$ and $K$ grow large with $\frac{M}{K}=r>1$ held constant.
\begin{proof}
The result follows directly from Corollary \ref{cor: rcv pwr perfect}.
\end{proof}
\end{corollary}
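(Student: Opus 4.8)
The plan is to invoke the closed-form expression already established in Corollary~\ref{cor: rcv pwr perfect} and evaluate the stated limit directly. Under equal power allocation, that corollary gives
\begin{align}
\bar{\gamma}_i = \alpha_{\rm{WET}}\,p_{\rm{dl}}\,\beta_i\left(1+\frac{M-1}{K}\right),
\end{align}
so the only quantity whose behavior must be understood in the joint limit is the ratio $\frac{M-1}{K}$; the prefactor $\alpha_{\rm{WET}}\,p_{\rm{dl}}\,\beta_i$ does not depend on $M$ or $K$.

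First I would split this ratio so as to isolate the constrained quantity $M/K$ from a vanishing remainder, writing $\frac{M-1}{K} = \frac{M}{K} - \frac{1}{K} = r - \frac{1}{K}$, where I have substituted the hypothesis $\frac{M}{K}=r$. Because the constraint fixes $M/K=r$ for every pair $(M,K)$ along the sequence, this expression depends on $(M,K)$ only through the single scalar $\frac{1}{K}$.

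Next I would take the limit: as $K\to\infty$ (and hence $M=rK\to\infty$ as well), the remainder $\frac{1}{K}\to 0$, so $\frac{M-1}{K}\to r$. Since the prefactor is constant, continuity of multiplication lets me pass the limit through it, giving $\bar{\gamma}_i\to \alpha_{\rm{WET}}\,p_{\rm{dl}}\,\beta_i\,(1+r)$, which is exactly the claimed limit.

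Every step here is an elementary algebraic rearrangement followed by a single limit of a null sequence, so I do not anticipate a genuine obstacle. The one point that warrants care is to keep $M$ and $K$ coupled through $M=rK$ throughout, rather than letting them vary independently, so that $\frac{M-1}{K}$ converges to $r$ instead of reducing to an indeterminate $\infty/\infty$ form. I would also note that the additional hypothesis $r>1$ is not actually used in establishing convergence; it presumably matters for the downstream design interpretation rather than for the limit itself.
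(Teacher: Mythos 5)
Your proof is correct and follows exactly the paper's route: both invoke Corollary~\ref{cor: rcv pwr perfect} and take the limit of $\frac{M-1}{K} = r - \frac{1}{K}$ under the constraint $M = rK$. Your write-up simply makes explicit the elementary limit step that the paper leaves implicit, and your side remark that $r>1$ is not needed for the convergence itself is accurate.
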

Therefore, increasing the ratio $r$ helps improve the average received energy at the users. This is because adding more antennas boosts the beamforming gain, and serving fewer users increases the per user energy allocation at the BS.
\subsection{Average Received Energy: Imperfect CSI}
We now characterize the average received energy while incorporating the channel estimation errors in the analysis. Imperfect channel estimation causes a reduction in the amount of energy reaching the harvesters. 
We recall that a fraction ${\alpha_{\rm{Tr}}}$ of the frame is reserved for uplink training.
We assume that the $K$ users simultaneously transmit their training signals consisting of $\tau$ symbols, where $\tau\geq K$ and ${\alpha_{\rm{Tr}}}=\frac{\tau}{S}$. We define a $\tau\times K$ matrix $\bf{\Phi}$ where the $i$th column contains the training sequence of user $i$. We assume that the users transmit orthogonal training sequences such that $\bf{\Phi^{\rm{H}}\Phi}={\bf{I}}_{\rm{K}}$. Let us define a diagonal matrix $\bf{\Delta}$ with $\{\tau p_{{\rm{Tr}},1},\cdots,\tau p_{{\rm{Tr}},K}\}$ as its diagonal entries, where $p_{{\rm{Tr}},i}$ denotes the training symbol energy of user $i$.
The signal received at the BS during the training phase can be expressed as
\begin{align}\label{pilot}
\bf{ Y_{\rm{Tr}} = G (\Phi \Delta^{\frac{1}{2}})^{\rm{T}} + N} 
\end{align} 
where the $M\times\tau$ matrix $\bN$ denoting the BS thermal noise consists of IID Gaussian entries with mean zero and variance $\sigma^2$. 
For a user $i$, we define $\xi_i\in (0,1)$ as the fraction of the total energy harvested $\bar{\delta}_i^{\rm{I}}S$ (treated in Lemma \ref{lem:ls harv}) in a frame that is reserved for uplink pilot transmission.
Therefore, %assume that a user $i$ transmits uplink training symbols with an average energy 
$\tau p_{{\rm{Tr}},i}=\eta^{\rm{EH}}_{\rm{PA}}\xi_i\bar{\delta}_i^{\rm{I}}S$, where $\eta^{\rm{EH}}_{\rm{PA}}\in(0,1)$ is the power amplifier (PA) efficiency at the energy harvesting user.
\subsubsection{LS Channel Estimation}\label{SecLSest}
We first consider the case where the BS estimates the UL channel from the $K$ EHs using linear least squares (LS) approach. The resulting channel estimate ${\hat{\bG}}_{\rm{LS}}$ is given by
\begin{align}\label{eq:ls estimate}
{\hat{\bG}}_{\rm{LS}}=\bY_{\rm{Tr}}\bf{\Phi}^{*}{\Delta}^{\rm{-\frac{1}{2}}}=G+\bN\bf{\Phi}^{*}{\Delta}^{\rm{-\frac{1}{2}}}.
\end{align}
The corresponding estimation error matrix $\bE_{\rm{LS}}={\hat{\bG}}_{\rm{LS}}-\bG_{\rm{LS}}=\bN\bf{\Phi}^{*}{\Delta}^{\rm{-\frac{1}{2}}}$ consists of independent Gaussian entries $e^{\rm{LS}}_{ij} \left(i\in\mathcal{I}_M, j\in\mathcal{I}_K\right)$ with mean zero and variance $\frac{\sigma^2}{\tau p_{{\rm{Tr}},i}}$.
The following expression characterizes the mean incident power at a user $i$. 
%It is this received power that is fed to the harvester for rectification. 

\begin{lemma}\label{lem:ls}
When the BS designs the downlink energy beamformer based on the LS channel estimate, the average received energy $\bar{\gamma}_i^{\rm{LS}}$ (in joules/symbol) at a user $i$ is given by 
\begin{align}\label{eq:ls rcv}
\bar{\gamma}_{i}^{\rm{LS}}&=
\begin{cases}
& {\psi}_{i}^{\rm{LS,act}},\quad  \frac{\theta_{{{\rm{act}}}}}{B}\leq{\psi}_{i}^{\rm{LS,act}}<\frac{\theta_{{{\rm{sat}}}}}{B}\\
& {\psi}_{i}^{\rm{LS,sat}},\quad\qquad\,\,\,\, {\psi}_{i}^{\rm{LS,act}}\geq\frac{\theta_{{{\rm{sat}}}}}{B}\\
\end{cases}
%\bar{\gamma}_{i}^{\rm{LS}}=\frac{A_1 M+A_2+\sqrt{\left(A_1 M+A_2\right)^2-4\left(M-1\right)A_1A_3}}{2}
\end{align}
where %${\psi}_{i}^{\rm{LS,act}}$ follows from (\ref{eq: rcv ls act}),
\begin{align}\label{eq: rcv ls act}
&{\psi}_{i}^{\rm{LS,act}}=
\nonumber\\
&\frac{A_1 M+A_2-A_3+\sqrt{\left(A_1 M+A_2-A_3\right)^2+4\left(A_1+A_2\right)A_3}}{2},
\end{align}
% Restore the current equation number.
%\setcounter{equation}{\value{MYtempeqncnt}}
\begin{align}\label{eq:rcv ls sat}
{\psi}_{i}^{\rm{LS,sat}}&=
A_1M\left(1-\frac{M-1}{M}
\frac{1}{1+\frac{\theta_{\rm{sat}}}{BA_3}}\right)+A_2,
%\frac{A_3B}{\theta_{{\rm{sat}}}}\right)+A_2,
%A_1M\left(1-\frac{M-1}{M}\frac{\sigma^2}{\beta_i\zeta_i\eta_{\rm{EH}}\eta_{\rm{PA}}\theta_{{\rm{sat}}}/B}\right)
\end{align}
\begin{align}
A_1=\alpha_{\rm{WET}}p_{\rm{dl}}\beta_i\zeta_i,
\end{align}
\begin{align}
A_2=\alpha_{\rm{WET}}p_{\rm{dl}}\beta_i\left(1-\zeta_i\right),
\end{align}
and
\begin{align}\label{eq: xyz}
A_3=\frac{\sigma^2}{\xi_i\beta_i\eta^{\rm{EH}}_{\rm{PA}}\eta_{\rm{EH}}S}.
\end{align}
\begin{proof}
See Appendix.  
\end{proof}
\end{lemma}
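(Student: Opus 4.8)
The plan is to first evaluate $\bar{\gamma}_i^{\rm{LS}}=\alpha_{\rm{WET}}\,\mathbb{E}[|y_i|^2]$ for a \emph{fixed} estimation-error level, and then to close the loop by imposing that the energy powering the uplink pilots is itself harvested from this received energy. Writing the LS estimate as $\hat{\bg}_i=\bg_i+\bee_i$, where $\bee_i$ (the $i$th column of $\bE_{\rm{LS}}$) has independent entries of variance $\epsilon_i^2=\sigma^2/(\tau p_{{\rm{Tr}},i})$, I would insert this into \eqref{eq:signal main} and treat the desired and interfering beams separately. For the interfering beams ($j\neq i$) the unit vector $\hat{\bg}_j/\|\hat{\bg}_j\|$ is independent of $\bg_i$, so each term contributes $\beta_i$ and the weighted sum gives $A_2$, precisely the interference term of Lemma~\ref{lemma:avgrcvperfectcsi}. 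For the desired beam I would use the large-$M$ concentrations $\|\bg_i\|^2\approx M\beta_i$ and $\|\hat{\bg}_i\|^2\approx M(\beta_i+\epsilon_i^2)$ together with $\mathbb{E}[|\bg_i^{\rm{H}}\hat{\bg}_i|^2]\approx M\beta_i(M\beta_i+\epsilon_i^2)$ (the leading term $M^2\beta_i^2$ of $\mathbb{E}[\|\bg_i\|^4]$ plus the term $M\beta_i\epsilon_i^2$ from $\mathbb{E}[|\bg_i^{\rm{H}}\bee_i|^2]$) to get the desired-beam power gain $\beta_i(M\beta_i+\epsilon_i^2)/(\beta_i+\epsilon_i^2)$. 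Collecting the prefactors yields the intermediate expression
\begin{align}
\bar{\gamma}_i^{\rm{LS}}=A_1\,\frac{M\beta_i+\epsilon_i^2}{\beta_i+\epsilon_i^2}+A_2 .
\end{align}
As a sanity check, letting $\epsilon_i^2\to0$ collapses this to $A_1M+A_2$ and recovers Lemma~\ref{lemma:avgrcvperfectcsi}; note that retaining only the leading moment $M^2\beta_i^2$ (rather than the exact $M(M+1)\beta_i^2$) is exactly what makes this limit match, so I would justify that truncation as a large-$M$ approximation.

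The crux is the self-consistency closure. Since the pilots are drawn from the harvested energy, $\tau p_{{\rm{Tr}},i}=\eta^{\rm{EH}}_{\rm{PA}}\xi_i\bar{\delta}_i^{\rm{I}}S$, and in the active (non-saturated) regime the harvester is linear, so $\bar{\delta}_i^{\rm{I}}=\eta_{\rm{EH}}\bar{\gamma}_i^{\rm{LS}}$. Substituting both into $\epsilon_i^2=\sigma^2/(\tau p_{{\rm{Tr}},i})$ produces the compact coupling $\epsilon_i^2=A_3\beta_i/\bar{\gamma}_i^{\rm{LS}}$ with $A_3$ exactly as in \eqref{eq: xyz}; the extra factor $\eta_{\rm{EH}}$ in $A_3$ is the footprint of this harvest-then-transmit loop. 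Inserting $\epsilon_i^2=A_3\beta_i/x$ (with $x=\bar{\gamma}_i^{\rm{LS}}$) into the intermediate expression simplifies the ratio to $(Mx+A_3)/(x+A_3)$, giving
\begin{align}
x=A_1\,\frac{Mx+A_3}{x+A_3}+A_2 .
\end{align}
Clearing the denominator turns this into $x^2-(A_1M+A_2-A_3)x-(A_1+A_2)A_3=0$, whose unique positive root is $\psi_i^{\rm{LS,act}}$ in \eqref{eq: rcv ls act}. I would note that the constant term $-(A_1+A_2)A_3$ is negative, so the two roots have opposite signs and the physically meaningful (positive) solution is unambiguous.

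For the saturated branch the feedback loop breaks: once $B\bar{\gamma}_i^{\rm{LS}}\geq\theta_{\rm{sat}}$ the harvester output is clamped at its ceiling, so the per-symbol harvested energy is the constant $\eta_{\rm{EH}}\theta_{\rm{sat}}/B$ and the pilot power, hence $\epsilon_i^2$, is \emph{fixed} rather than coupled to $x$. Replacing $\bar{\delta}_i^{\rm{I}}$ by this clamped value gives $\epsilon_i^2=A_3\beta_i B/\theta_{\rm{sat}}$; inserting it into the intermediate expression (no quadratic to solve now) and writing $u:=\theta_{\rm{sat}}/(BA_3)$ gives $A_1(Mu+1)/(u+1)+A_2$, which rearranges to the stated $\psi_i^{\rm{LS,sat}}$ in \eqref{eq:rcv ls sat}. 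The case split in \eqref{eq:ls rcv} then follows by testing the active-mode fixed point against $\theta_{\rm{sat}}/B$: if $\psi_i^{\rm{LS,act}}<\theta_{\rm{sat}}/B$ the linear-regime solution is self-consistent and is retained, otherwise the harvester saturates and $\psi_i^{\rm{LS,sat}}$ applies. The main obstacle I anticipate is not any individual calculation but arguing the self-consistency closure cleanly—establishing that within a frame the steady-state received energy, the harvested energy, and the error variance $\epsilon_i^2$ are mutually consistent—while simultaneously controlling the large-$M$ moment approximations so that they are accurate and compatible with the perfect-CSI baseline of Lemma~\ref{lemma:avgrcvperfectcsi}.
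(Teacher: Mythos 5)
Your proposal is correct, and its central mechanism is exactly the paper's: both arguments first express the received energy as a function of the pilot energy $\tau p_{{\rm{Tr}},i}$, then close the harvest-then-transmit loop by substituting $\tau p_{{\rm{Tr}},i}=\eta^{\rm{EH}}_{\rm{PA}}\xi_i\bar{\delta}_i^{\rm{I}}S$ together with $\bar{\delta}_i^{\rm{I}}=\eta_{\rm{EH}}\bar{\gamma}_i^{\rm{I}}$ in the linear regime (producing your quadratic $x^2-(A_1M+A_2-A_3)x-(A_1+A_2)A_3=0$, whose positive root is ${\psi}_{i}^{\rm{LS,act}}$) and $\bar{\delta}_i^{\rm{I}}=\eta_{\rm{EH}}\theta_{{\rm{sat}}}/B$ in the saturated regime (producing ${\psi}_{i}^{\rm{LS,sat}}$). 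Where you genuinely differ is the derivation of the intermediate expression, and there the paper is exact while you are heuristic. The paper never expands the LS estimate: it invokes the LS/MMSE equivalence of Remark \ref{rem: ls=mmse} (the two estimates differ by a positive scalar, hence the normalized beamformers coincide) and computes with the MMSE error statistics, for which estimate and error are \emph{independent} by the orthogonality principle. Writing $\bg_i=\hat{\bg}_i-\tilde{\bee}_i$ with $\tilde{\bee}_i$ independent of $\hat{\bg}_i$, the desired-beam term is just $\mathbb{E}[\|\hat{\bg}_i\|^2]$ plus the per-entry error variance, with no large-$M$ argument, giving $A_1M\left[1-\frac{M-1}{M}\left(1+\beta_i\tau p_{{\rm{Tr}},i}/\sigma^2\right)^{-1}\right]+A_2$, which is algebraically identical to your $A_1\frac{M\beta_i+\epsilon_i^2}{\beta_i+\epsilon_i^2}+A_2$. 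Your LS decomposition $\hat{\bg}_i=\bg_i+\bee_i$ instead leaves $\bg_i$ inside both the inner product and the normalization $\|\hat{\bg}_i\|$ (the LS \emph{estimate} is correlated with the true channel even though the \emph{error} is not), which is what forces your two approximations: ratio of expectations in place of the expectation of the ratio, and truncating $\mathbb{E}[\|\bg_i\|^4]=M(M+1)\beta_i^2$ to $M^2\beta_i^2$. It happens that these two errors cancel \emph{exactly} -- the ratio of exact moments overshoots the true value by $\beta_i^2/(\beta_i+\epsilon_i^2)$, and the truncation removes precisely that amount -- so your formula is exactly, not just asymptotically, right; but your derivation as written cannot certify this, and the lemma asserts an exact expression valid for all $M$. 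Swapping that single step for the MMSE-decomposition computation (which your setup already licenses via the beamformer equivalence) removes the only soft spot and makes your argument coincide with the paper's.
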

We can interpret (\ref{eq:ls rcv}) as follows. The average received energy is given by the expression ${\psi}_{i}^{\rm{LS,act}}$ as long as the corresponding incident power falls within the linear range $\left[\theta_{{{\rm{act}}}},\theta_{{\rm{sat}}}\right)$ of the harvester. It is given by the expression ${\psi}_{i}^{\rm{LS,sat}}$ when the corresponding received power exceeds the saturation threshold of the harvester. 
When the incident power level is within the linear regime of the harvester, the harvested power increases with the incident power. As it exceeds the saturation threshold, however, the harvested power remains the same regardless of the incident power. This explains why different analytical expressions are required to characterize the incident energy. 
The following remark explains why the incident power, which is not the same as the harvested power, also depends on the harvesting parameters. 

\begin{corollary}\label{cor: ideal harv incident}
For an ideal energy harvester with activation threshold $\theta_{{\rm{act}}}\rightarrow 0$ and saturation threshold $\theta_{{\rm{sat}}}\rightarrow\infty$, the average received energy simplifies to $\bar{\gamma}_i^{\rm{LS}}={\psi}_i^{\rm{LS,act}}$. 
\end{corollary}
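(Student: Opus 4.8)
The plan is to show that the two defining conditions of the piecewise expression (\ref{eq:ls rcv}) in Lemma \ref{lem:ls} both collapse in the ideal-harvester limit, so that only the first (non-saturated) branch survives, leaving $\bar{\gamma}_i^{\rm{LS}} = \psi_i^{\rm{LS,act}}$. This is a direct limit argument applied to the branch guards $\frac{\theta_{\rm{act}}}{B} \le \psi_i^{\rm{LS,act}} < \frac{\theta_{\rm{sat}}}{B}$ and $\psi_i^{\rm{LS,act}} \ge \frac{\theta_{\rm{sat}}}{B}$, and the only preliminary work is to confirm that $\psi_i^{\rm{LS,act}}$ is a finite, strictly positive number that does not itself depend on $\theta_{\rm{act}}$ or $\theta_{\rm{sat}}$.

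First I would note from their definitions that $A_1$, $A_2$, and $A_3$ in (\ref{eq: rcv ls act})--(\ref{eq: xyz}) are strictly positive and finite, being positive multiples of $\alpha_{\rm{WET}} p_{\rm{dl}} \beta_i$ and of $\sigma^2$ respectively; crucially, the explicit expression (\ref{eq: xyz}) for $A_3$ involves only $\sigma^2$, $\xi_i$, $\beta_i$, $\eta^{\rm{EH}}_{\rm{PA}}$, $\eta_{\rm{EH}}$ and $S$, none of which is a threshold, so the limit $\theta_{\rm{act}} \to 0$, $\theta_{\rm{sat}} \to \infty$ (taken with $\eta_{\rm{EH}}$ held fixed) leaves $\psi_i^{\rm{LS,act}}$ unchanged. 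Writing $\psi_i^{\rm{LS,act}} = \frac{1}{2}\left(x + \sqrt{x^2 + c}\right)$ with $x = A_1 M + A_2 - A_3$ and $c = 4(A_1 + A_2)A_3 > 0$, the radicand strictly exceeds $x^2$, hence $\sqrt{x^2 + c} > |x| \ge -x$ and therefore $\psi_i^{\rm{LS,act}} > 0$; finiteness is immediate. This short inequality is essentially the only computation required.

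With positivity and finiteness in hand, I would pass to the limit in the two guards. As $\theta_{\rm{act}} \to 0$, the activation condition $\frac{\theta_{\rm{act}}}{B} \le \psi_i^{\rm{LS,act}}$ becomes $0 \le \psi_i^{\rm{LS,act}}$, which holds; as $\theta_{\rm{sat}} \to \infty$, the quantity $\frac{\theta_{\rm{sat}}}{B}$ diverges and so strictly exceeds the finite value $\psi_i^{\rm{LS,act}}$, making the upper guard $\psi_i^{\rm{LS,act}} < \frac{\theta_{\rm{sat}}}{B}$ hold and the saturation guard $\psi_i^{\rm{LS,act}} \ge \frac{\theta_{\rm{sat}}}{B}$ fail. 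Thus the first case of (\ref{eq:ls rcv}) is selected and $\bar{\gamma}_i^{\rm{LS}} = \psi_i^{\rm{LS,act}}$, as claimed. I anticipate no real obstacle: the entire content lies in reading the branch guards of Lemma \ref{lem:ls} correctly and in the elementary sign argument above. As an optional sanity check I would verify that the saturated branch (\ref{eq:rcv ls sat}) stays finite in this regime — indeed $\frac{1}{1 + \theta_{\rm{sat}}/(BA_3)} \to 0$ gives $\psi_i^{\rm{LS,sat}} \to A_1 M + A_2$ — confirming that nothing degenerates even though that branch is never activated.
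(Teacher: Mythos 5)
Your proof is correct and follows the same (essentially immediate) route the paper intends: the corollary has no explicit proof in the paper because it is read directly off the case structure of Lemma \ref{lem:ls}, with the ideal limits $\theta_{\rm{act}}\rightarrow 0$, $\theta_{\rm{sat}}\rightarrow\infty$ forcing the non-saturated branch. Your added check that $\psi_i^{\rm{LS,act}}=\tfrac{1}{2}\bigl(x+\sqrt{x^2+c}\,\bigr)$ with $c=4(A_1+A_2)A_3>0$ is finite, strictly positive, and threshold-independent is exactly the detail that makes this rigorous.
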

\begin{remark}
We note from (\ref{eq:ls rcv})--(\ref{eq:rcv ls sat}) and (\ref{eq: xyz}) that the incident energy at a user also depends on the energy harvesting parameters. This is because the channel estimation error is a function of the uplink transmit power, which is drawn from the energy harvested in the previous frames. 
This introduces a dependency between the downlink energy beamformer and the energy harvesting parameters, as evident from the analytical expressions in Lemma \ref{lem:ls}.
We further add that the average received energy could be smaller than $\frac{\theta_{{\rm{act}}}}{B}$. Since this amount would be insufficient to activate the harvester, we do not consider this case in Lemma \ref{lem:ls}. In principle, we may characterize this by assuming omnidirectional transmission, since the BS would not have any channel knowledge in the absence of uplink training. 
\end{remark}

\begin{remark}
The sum $A_1M+A_2$ in Lemma \ref{lem:ls} equals $\bar{\gamma}_i$, which is the average received energy with perfect CSI. Moreover, the term $A_3$ captures the dependency on the EH parameters and the BS noise. As $\sigma^2\rightarrow 0$, so does the estimation error and we recover the expression for the case with perfect CSI. Similarly, the degradation due to imperfect CSI vanishes as the frame size $S\rightarrow\infty$ and $A_3\rightarrow 0$. This is because the users can afford a larger transmit power during pilot transmission due to an underlying increase in the energy harvested in a frame.
\end{remark}
Finally, we note that the average received energy increases with an increase in the number of BS antennas, the EH conversion efficiency, as well as the PA efficiency at the user. It reduces with an increase in the number of users due to a decrease in the per-user transmit energy allocation at the BS.

\subsubsection{MMSE Channel Estimation}\label{SecMMSEest}
We now consider the case where the BS estimates the uplink channel using (linear) minimum mean squared error (MMSE) estimation. The estimated channel is given by
\begin{align}\label{eq: mmse}
{\hat{\bG}}_{\rm{MMSE}}=\bY_{\rm{Tr}}{\bf{\Phi}}^{*}{\left({\bf{D\Delta}}+\sigma^2{\bI}_{K}\right)^{-1}}
{\bf{\Delta}}^{\rm{\frac{1}{2}}}\bD %=G+\bN\bf{\Phi}^{*}{\Delta}^{\rm{-\frac{1}{2}}}.
\end{align}
The corresponding estimation error matrix $\bE_{\rm{MMSE}}={\hat{\bG}}_{\rm{MMSE}}-\bG_{\rm{MMSE}}$ consists of entries $e^{\rm{MMSE}}_{ij} \left(i\in\mathcal{I}_M, j\in\mathcal{I}_K\right)$ with mean zero and variance $\frac{\beta_i}{1+\frac{\beta_i\tau p_{{\rm{Tr}},i}}{\sigma^2}}$. This error variance is smaller than that obtained with LS estimation. Moreover, the matrices $\bE_{\rm{MMSE}}$ and ${\hat{\bG}}_{\rm{MMSE}}$ are independent by virtue of the orthogonality principle and the fact that uncorrelated Gaussian random variables are independent. 

\begin{remark}\label{rem: ls=mmse}
We note that the LS channel estimate is a scalar multiple of that obtained with the MMSE approach. Specifically, $\hat{{\bg}}_i^{\rm{LS}}= \left(1+\frac{\sigma^2}{\beta_i\tau p_{{\rm{Tr}},i}}\right)\hat{{\bg}}_i^{\rm{MMSE}}$ for $i\in\mathcal{I}_K$, which follows by simplifying (\ref{eq:ls estimate}) and (\ref{eq: mmse}). In other words, the phase of the estimated channel remains the same with LS and MMSE. 
\end{remark}

The following expression characterizes the mean incident energy at a user $i$. 

\begin{lemma}\label{lem:msse}
When the BS designs the downlink energy beamformer based on the MMSE channel estimate, the average received energy at a user $i$ is given by $\bar{\gamma}_{i}^{\rm{MMSE}}=\bar{\gamma}_{i}^{\rm{LS}}$ where $\bar{\gamma}_{i}^{\rm{LS}}$ follows from
%\begin{align}
%\bar{\gamma}_{i}^{\rm{MMSE}}=\bar{\gamma}_{i}^{\rm{LS}}=\frac{\bar{\gamma}_{i}-A_3+\sqrt{\left(\bar{\gamma}_{i}-A_3\right)^2+4\left(A_1+A_2\right)A_3}}{2},
%\end{align}
%where $A_1$, $A_2$ and $A_3$ follow from 
Lemma \ref{lem:ls}.
\begin{proof}
The proof follows from Remark \ref{rem: ls=mmse} and by noting that the beamfomer in (\ref{eq:signal main}) consists of normalized vectors such that $\frac{\hat{\bg}_i^{\rm{LS}}}{\|\hat{\bg}_i^{\rm{LS}}\|}=
\frac{\hat{\bg}_i^{\rm{MMSE}}}{\|\hat{\bg}_i^{\rm{MMSE}}\|}$,  resulting in the same energy.
\end{proof}
\end{lemma}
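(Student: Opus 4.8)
The plan is to exploit the structural relationship between the two channel estimates established in Remark~\ref{rem: ls=mmse} together with the normalization built into the downlink precoder. First I would recall that, by Remark~\ref{rem: ls=mmse}, the LS and MMSE estimates of the $i$th user channel satisfy $\hat{\bg}_i^{\rm{LS}}=c_i\,\hat{\bg}_i^{\rm{MMSE}}$ with scaling factor $c_i=1+\frac{\sigma^2}{\beta_i\tau p_{{\rm{Tr}},i}}$. The essential observation is that $c_i$ is a strictly positive real scalar, since $\sigma^2$, $\beta_i$, $\tau$, and $p_{{\rm{Tr}},i}$ are all positive; hence the two estimates point in exactly the same direction in $\mathbb{C}^{M\times 1}$ and differ only in length.

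Next I would examine how the estimate actually enters the received energy. From (\ref{eq:signal main}), the received signal $y_i$ --- and therefore $\bar{\gamma}_i=\alpha_{\rm{WET}}\,\mathbb{E}\left[|y_i|^2\right]$ --- depends on the channel estimates only through the unit-norm vectors $\frac{\hat{\bg}_j}{\|\hat{\bg}_j\|}$ that make up $\bw_{\rm{dl}}$. Substituting the relation from Remark~\ref{rem: ls=mmse} and using $c_i>0$ yields
\begin{align}
\frac{\hat{\bg}_i^{\rm{LS}}}{\|\hat{\bg}_i^{\rm{LS}}\|}=\frac{c_i\,\hat{\bg}_i^{\rm{MMSE}}}{c_i\,\|\hat{\bg}_i^{\rm{MMSE}}\|}=\frac{\hat{\bg}_i^{\rm{MMSE}}}{\|\hat{\bg}_i^{\rm{MMSE}}\|},
\end{align}
so the precoders $\bw_{\rm{dl}}$ built from the LS and MMSE estimates coincide term by term. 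Since the true channels $\bg_i$ and all remaining system parameters are unchanged, the induced received signals, and hence $\mathbb{E}\left[|y_i|^2\right]$, are identical under the two estimators, giving $\bar{\gamma}_i^{\rm{MMSE}}=\bar{\gamma}_i^{\rm{LS}}$ with $\bar{\gamma}_i^{\rm{LS}}$ as computed in Lemma~\ref{lem:ls}.

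There is no genuinely hard step here, and I would expect the whole argument to be a short reduction rather than a computation. The one point that deserves care is verifying that the proportionality constant $c_i$ is both real and positive, so that it cancels cleanly between numerator and denominator in the normalization; a complex or sign-indefinite constant would alter the phase of the beamforming vector and break the equality. Once that is confirmed, no fresh evaluation of $\mathbb{E}\left[|y_i|^2\right]$ is needed --- the conclusion follows directly by invoking the already-derived expression for $\bar{\gamma}_i^{\rm{LS}}$ from Lemma~\ref{lem:ls}.
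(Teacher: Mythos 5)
Your proposal is correct and follows essentially the same route as the paper's proof: invoke Remark~\ref{rem: ls=mmse} and observe that the positive scalar relating the LS and MMSE estimates cancels under the normalization in the precoder of (\ref{eq:signal main}), so the beamformers and hence the received energies coincide. Your explicit check that the proportionality constant $1+\frac{\sigma^2}{\beta_i\tau p_{{\rm{Tr}},i}}$ is real and strictly positive is a worthwhile detail the paper leaves implicit, but it does not change the argument.
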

The average received energy obtained with MMSE estimation is the same as that obtained with the LS approach\footnote{While this equivalence holds in a rich scattering environment modeled by IID Rayleigh fading, this may not be the case in other scenarios such as a Ricean fading environment \cite{kayshap2015massivewet}. %Nonetheless, due to meager difference between perfect and imperfect CSI cases, this is insignificant.}.
}. This is because the channel estimates obtained with both approaches differ only by a scaling factor. The downlink beamfomer in (\ref{eq:signal main}) consists of normalized vectors such that $\frac{\hat{\bg}_i^{\rm{LS}}}{\|\hat{\bg}_i^{\rm{LS}}\|}=
\frac{\hat{\bg}_i^{\rm{MMSE}}}{\|\hat{\bg}_i^{\rm{MMSE}}\|}$. This means that the downlink energy beamformer and the received energy is the same with both approaches. Under this equivalence, the LS approach is preferable as it does not require statistical knowledge about the channel covariance matrix.
In the rest of the paper, we will not distinguish between LS or MMSE estimation. We use $\bar{\gamma}_{i}^{\rm{I}}\triangleq\bar{\gamma}_{i}^{\rm{MMSE}}=\bar{\gamma}_{i}^{\rm{LS}}$ to refer to the average received energy with imperfect channel knowledge.

\section{Wireless Energy Transfer}\label{Sec: wet}
In this section, we focus on wireless energy transfer where the BS attempts to charge users, but no information transfer is considered. 
We analyze the average harvested energy and the power transfer efficiency in terms of the system parameters.
%This is suitable for scenarios where the goal is to remotely power low-energy devices. 
Wireless energy and information transfer is treated in Section \ref{sec WIPT}.
\subsection{Average Harvested Energy}
%We now characterize the average harvested energy at a user
Leveraging the analysis in Section \ref{SecAvg}, %Using the asymptotic behavior of $|y_i|^2$, 
we provide analytical expressions for the average harvested energy for the case of perfect ($\bar{\delta}_i$) and imperfect channel knowledge ($\bar{\delta}_i^{\rm{I}}$).
%We define $\xi_i\in (0,1)$ as the fraction of the harvested energy at user $i$ reserved for uplink data transmission, whereas the remaining fraction $1-\xi_i$ is used for uplink training. %
\subsubsection{Perfect CSI}
We first consider the case where the BS has perfect channel knowledge. %Leveraging the analysis in Section \ref{SecAvg}, %Using the asymptotic behavior of $|y_i|^2$, 
%we provide an analytical expression for the average harvested energy $\bar{\delta}_i$. 
We use $\vartheta_{i,s}$ to denote the energy harvested by a user $i$ in slot $s$ $\left(s\in\mathcal{I}_{\lfloor\alpha_{\rm{WET}}S\rfloor}\right)$ during the harvesting phase of the frame.
We let $\bar{\delta}_i^{}=\alpha_{\rm{WET}}\mathbb{E}\left[\vartheta_{i}\right]$ denote the average harvested energy (in an arbitrary slot) at a user $i$, where we have dropped the subscript $s$ as the mean is identical across the slots in the harvesting phase.
% XXX \textbf{Not Really due the way alphawet is handled}
Due to the piecewise linear energy harvesting model, the average harvested energy $\bar{\delta}_{i}$ (in joules/symbol) at a user $i$ is given by
\begin{align}\label{eq: instantaneous harvested energy}
\bar{\delta}_i&=\eta_{\rm{EH}}\bar{\gamma}_i{\mathbbm{1}}_{\left[\frac{\theta_{{\rm{act}}}}{B}\leq\bar{\gamma}_i<\frac{\theta_{{\rm{sat}}}}{B}\right]}
+\frac{\eta_{\rm{EH}}\theta_{{\rm{sat}}}}{B} {\mathbbm{1}}_{\left[\bar{\gamma}_i\geq\frac{\theta_{{\rm{sat}}}}{B}\right]}
%\bar{\delta}_i=\eta_{\rm{EH}}&\bar{\gamma}_i\Pr\left[\frac{\theta_{{\rm{act}}}}{\alpha_{\rm{WET}}B}<|y_i|^2\leq\frac{\theta_{{\rm{sat}}}}{\alpha_{\rm{WET}}B}\right]
%+\frac{\eta_{\rm{EH}}\theta_{{\rm{sat}}}}{B}\Pr\left[|y_i|^2>\frac{\theta_{{\rm{sat}}}}{\alpha_{\rm{WET}}B}\right]
\end{align}
where $\mathbbm{1}_{\left[\cdot\right]}$ is the indicator function which is 1 when the condition in the parenthesis is true, and zero otherwise.
%where the random variable $|y_i|^2$ gives the instantaneous received energy (in joules/symbol) at a user $i$. 
%While (\ref{eq: instantaneous harvested energy}) requires the distribution of $|y_i|^2$, the ensuing analysis leverages the scaling behavior of $|y_i|^2$ in the large $M$ regime to obtain simpler analytical expressions. 

\begin{lemma}\label{lem:harperfect}
%As $M$ grows large, 
The average harvested energy $\bar{\delta}_i$ at a user $i$ can be expressed as
\begin{equation}\label{eq: average harvested energy}
\bar{\delta}_i=
\begin{cases} 
0, & M<{M_{{\rm{act}},i}}\\
\eta_{\rm{EH}}\bar{\gamma}_i, & M_{{\rm{act}},i}\leq M< M_{{\rm{sat}},i}\\
\frac{\eta_{\rm{EH}}\theta_{{\rm{sat}}}}{B}, & M\geq {M_{{\rm{sat}},i}} 
\end{cases}
\end{equation}
where $M_{{{\rm{act}}},i}=
%\left\lceil1+\frac{1}{\zeta_i}\left(\frac{\theta_{{\rm{act}}}}{\alpha_{\rm{WET}}B\beta_ip_{\rm{dl}}}-1\right)\right\rceil=
\left\lceil1+\frac{1}{\zeta_i}\left(\frac{\theta_{{\rm{act}}}}{\beta_iP_{\rm{dl}}}-1\right)\right\rceil$ for $\theta_{{\rm{act}}}\in(0,\infty)$
and
%$=\left\lceil1+\frac{1}{\zeta_i}\left(\frac{\theta_{{\rm{sat}}}}{\alpha_{\rm{WET}}B\beta_ip_{\rm{dl}}}-1\right)\right\rceil$
$M_{{{\rm{sat}}},i} =\left\lceil1+\frac{1}{\zeta_i}\left(\frac{\theta_{{\rm{sat}}}}{\beta_iP_{\rm{dl}}}-1\right)\right\rceil$ for $\theta_{{\rm{sat}}}\in(0,\infty)$ give the minimum number of antennas needed to activate or saturate the harvester. 
\end{lemma}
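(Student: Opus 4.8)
The plan is to translate the two threshold conditions on the incident energy $\bar{\gamma}_i$ appearing in the piecewise definition (\ref{eq: instantaneous harvested energy}) into equivalent conditions on the number of antennas $M$, using the closed-form expression for $\bar{\gamma}_i$ from Lemma \ref{lemma:avgrcvperfectcsi}. The key structural observation is that $\bar{\gamma}_i=\alpha_{\rm{WET}}p_{\rm{dl}}\beta_i(\zeta_i M+(1-\zeta_i))$ is an affine and strictly increasing function of $M$ (since $\zeta_i>0$), so each harvester threshold is crossed at a single value of $M$ and the three regimes in (\ref{eq: average harvested energy}) are consecutive intervals of the integer line. The whole proof therefore reduces to solving two linear inequalities in $M$ and rounding appropriately.

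Concretely, I would first substitute the Lemma \ref{lemma:avgrcvperfectcsi} expression for $\bar{\gamma}_i$ into the activation condition $\bar{\gamma}_i\geq\theta_{\rm{act}}/B$. Using the identity $P_{\rm{dl}}=\alpha_{\rm{WET}}Bp_{\rm{dl}}$ to replace $\alpha_{\rm{WET}}p_{\rm{dl}}$ by $P_{\rm{dl}}/B$, the common factor $1/B$ cancels and the condition becomes $\zeta_i M+(1-\zeta_i)\geq\theta_{\rm{act}}/(\beta_i P_{\rm{dl}})$. Dividing by $\zeta_i>0$ and isolating $M$ gives the real-valued bound $M\geq 1+\frac{1}{\zeta_i}(\frac{\theta_{\rm{act}}}{\beta_i P_{\rm{dl}}}-1)$; since $M$ is a positive integer, the smallest admissible value is the ceiling of the right-hand side, which is exactly $M_{\rm{act},i}$. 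Repeating the identical steps with $\theta_{\rm{sat}}$ in place of $\theta_{\rm{act}}$ yields the saturation threshold $M_{\rm{sat},i}$.

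It then remains to assemble the three cases by reading off (\ref{eq: instantaneous harvested energy}). For $M<M_{\rm{act},i}$ the incident energy is below the activation threshold, both indicators vanish, and $\bar{\delta}_i=0$. For $M_{\rm{act},i}\leq M<M_{\rm{sat},i}$ the incident energy lies in the linear band $[\theta_{\rm{act}}/B,\theta_{\rm{sat}}/B)$, so the first indicator is active and $\bar{\delta}_i=\eta_{\rm{EH}}\bar{\gamma}_i$. For $M\geq M_{\rm{sat},i}$ the incident energy exceeds the saturation threshold and $\bar{\delta}_i=\eta_{\rm{EH}}\theta_{\rm{sat}}/B$, which recovers the stated expression.

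The only point requiring care, rather than a genuine obstacle, is matching the strict versus non-strict inequalities in (\ref{eq: instantaneous harvested energy}) against the ceiling operation: the definition uses $\bar{\gamma}_i<\theta_{\rm{sat}}/B$ for the top of the linear band and $\bar{\gamma}_i\geq\theta_{\rm{sat}}/B$ for saturation, so I must verify that the real-valued solution, after taking the ceiling, partitions the integers with $M<M_{\rm{sat},i}$ falling in the linear regime and $M\geq M_{\rm{sat},i}$ in saturation, with neither gap nor overlap at the boundary (including the edge case in which the real-valued threshold is itself an integer). Finally, noting that $\theta_{\rm{act}}\leq\theta_{\rm{sat}}$ together with monotonicity of the ceiling gives $M_{\rm{act},i}\leq M_{\rm{sat},i}$, which confirms that the middle regime is well-defined (possibly empty). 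Everything else is routine algebra.
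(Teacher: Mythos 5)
Your proposal is correct and follows essentially the same route as the paper's proof: invoke Lemma \ref{lemma:avgrcvperfectcsi}, substitute $\bar{\gamma}_i$ into the indicator conditions of (\ref{eq: instantaneous harvested energy}), and solve the resulting linear inequalities in $M$ with a ceiling to get $M_{{\rm{act}},i}$ and $M_{{\rm{sat}},i}$. Your explicit treatment of the ceiling/strict-inequality boundary (via the equivalence $M\geq x \iff M\geq\lceil x\rceil$ for integer $M$) is a detail the paper leaves implicit, but it is the same argument.
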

\begin{proof}
The proof follows from invoking Lemma \ref{lemma:avgrcvperfectcsi} and the definition of the average harvested energy in (\ref{eq: instantaneous harvested energy}). Let us consider the first term in (\ref{eq: instantaneous harvested energy}). Using (\ref{eq:avg rcv perfect lem1}), we find an analytical expression for $M_{{{\rm{act}}},i}$ such that the condition $\frac{\theta_{{\rm{act}}}}{B}\leq\bar{\gamma}_i<\frac{\theta_{{\rm{sat}}}}{B}$ of the indicator function is satisfied. Similarly, we derive an expression for $M_{{{\rm{sat}}},i}$ such that the condition $\bar{\gamma}_i\geq\frac{\theta_{{\rm{sat}}}}{B}$ in the indicator function of the second term in (\ref{eq: instantaneous harvested energy}) is satisfied. 
%Using $\lim\limits_{M\rightarrow\infty}\frac{\|\bg_i\|^2}{M}\rightarrow \beta_k$ and $\lim\limits_{M\rightarrow\infty}\frac{\bg_i^{\rm{H}}\bg_j}{M}\rightarrow 0$ for $i\neq j$ in (\ref{eq:signal}), it follows that the random variable $|y_i|^2$ hardens to a deterministic value $p_{\rm{dl}}\zeta_iM\beta_i$ as $M\rightarrow\infty$. Using this asymptotic limit for $|y_i|^2$ in (\ref{eq: instantaneous harvested energy}) yields (\ref{eq: average harvested energy}).
\end{proof}

\begin{corollary}
For an ideal energy harvester with activation threshold $\theta_{{\rm{act}}}\rightarrow 0$ and saturation threshold $\theta_{{\rm{sat}}}\rightarrow\infty$, the average harvested energy simplifies to $\bar{\delta}_i=\eta_{\rm{EH}}\bar{\gamma}_i$ where $\bar{\gamma}_i$ follows from Lemma \ref{lemma:avgrcvperfectcsi}. 
\end{corollary}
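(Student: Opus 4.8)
The plan is to specialize the piecewise characterization of Lemma~\ref{lem:harperfect} by sending $\theta_{\rm{act}}\to 0$ and $\theta_{\rm{sat}}\to\infty$, and to show that under these limits the first and third branches of the cases expression become unreachable for any admissible antenna count, leaving only the middle branch $\bar{\delta}_i=\eta_{\rm{EH}}\bar{\gamma}_i$.

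First I would examine the activation threshold. Substituting $\theta_{\rm{act}}\to 0$ into the defining expression $M_{{\rm{act}},i}=\left\lceil 1+\frac{1}{\zeta_i}\left(\frac{\theta_{\rm{act}}}{\beta_i P_{\rm{dl}}}-1\right)\right\rceil$ gives $M_{{\rm{act}},i}\to\left\lceil 1-\frac{1}{\zeta_i}\right\rceil$. Since the energy-allocation weight satisfies $\zeta_i\le 1$, we have $\frac{1}{\zeta_i}\ge 1$, so $1-\frac{1}{\zeta_i}\le 0$ and hence $M_{{\rm{act}},i}\le 0$. Because the number of BS antennas obeys $M\ge 1$, the inequality $M<M_{{\rm{act}},i}$ can never hold, so the branch $\bar{\delta}_i=0$ is vacuous.

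Next I would treat the saturation threshold. As $\theta_{\rm{sat}}\to\infty$, the expression $M_{{\rm{sat}},i}=\left\lceil 1+\frac{1}{\zeta_i}\left(\frac{\theta_{\rm{sat}}}{\beta_i P_{\rm{dl}}}-1\right)\right\rceil$ diverges to $+\infty$, so for any finite $M$ the condition $M\ge M_{{\rm{sat}},i}$ fails and the branch $\bar{\delta}_i=\frac{\eta_{\rm{EH}}\theta_{\rm{sat}}}{B}$ is likewise unreachable. With the first and third cases eliminated, the middle case $\bar{\delta}_i=\eta_{\rm{EH}}\bar{\gamma}_i$ survives for all $M\ge 1$, which is the claim.

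An equivalent and perhaps cleaner route works directly from the definition~(\ref{eq: instantaneous harvested energy}): as $\theta_{\rm{act}}\to 0$ the event $\frac{\theta_{\rm{act}}}{B}\le\bar{\gamma}_i$ reduces to $\bar{\gamma}_i\ge 0$, which always holds, while as $\theta_{\rm{sat}}\to\infty$ the event $\bar{\gamma}_i<\frac{\theta_{\rm{sat}}}{B}$ always holds and $\bar{\gamma}_i\ge\frac{\theta_{\rm{sat}}}{B}$ never holds; the first indicator collapses to $1$ and the second to $0$, giving $\bar{\delta}_i=\eta_{\rm{EH}}\bar{\gamma}_i$ at once. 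There is essentially no obstacle here---the only point requiring care is the sign argument for the ceiling, namely that $\zeta_i\le 1$ forces $M_{{\rm{act}},i}$ below the physical floor $M=1$, so that no admissible antenna count can trigger the deactivated branch.
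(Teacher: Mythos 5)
Your proposal is correct and matches the paper's reasoning: the paper states this corollary without proof precisely because it follows immediately by sending $\theta_{\rm{act}}\rightarrow 0$ and $\theta_{\rm{sat}}\rightarrow\infty$ in the piecewise characterization, which is exactly what you do. Both of your routes are valid, and your second one (collapsing the indicators in the definition of $\bar{\delta}_i$ directly) is the cleanest statement of the intended argument; the extra care you take with the sign of $M_{{\rm{act}},i}$ via $\zeta_i\le 1$ is a nice touch but not needed beyond what you wrote.
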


The antenna thresholds $\{M_{{\rm{act}},i}, M_{{\rm{sat}},i}\}$ depend on the downlink BS transmit power $\zeta_iP_{\rm{dl}}=\zeta_i\alpha_{\rm{WET}}Bp_{\rm{dl}}$ for user $i$ and the link attenuation $\beta_i$.
Increasing the BS transmit power, serving fewer users, or deploying a harvester with a smaller activation threshold reduces the number of required antennas. 
%In Section \ref{sec}, we will show that the asymptotic approximation in (\ref{eq: average harvested energy}) is fairly accurate for realistic values of $M$, and is therefore used for analysis.
In other words, a minimum of $M_{{\rm{act}},i}$ antennas are required for a successful wireless energy transfer to a user $i$, and at least $M_{{\rm{sat}},i}$ antennas are required to operate the harvester at its maximum potential. In a multi-user system, the BS should have 
at least $M=\max\limits_{i\in\mathcal{I}_K} M_{{\rm{act}},i}$ antennas to ensure that all users are served. Similarly, having $M=\max\limits_{i\in\mathcal{I}_K} M_{{\rm{sat}},i}$ BS antennas ensures that each user attains the maximum possible harvested energy. 
With fewer than $M=\min\limits_{i\in\mathcal{I}_K} M_{{\rm{act}},i}$ antennas, all the transmitted energy will go to waste as none of the harvesters will be activated. Likewise, having more than $M=\max\limits_{i\in\mathcal{I}_K} M_{{\rm{sat}},i}$ antennas will not further improve the harvested energy since all the users will be in saturated mode. 
%It may still be beneficial to have excess antennas 
This behavior is markedly different from that observed with an (ideal) linear energy harvesting model where the average harvested energy keeps on increasing with $M$.

\subsubsection{Imperfect CSI}
We now characterize the average harvested energy while accounting for the channel estimation errors at the BS. %Similar to the case of perfect CSI, we derive approximate analytical expressions by leveraging the scaling behavior of the received signal in the large-antenna regime.
%\subsubsection*{LS Channel Estimation}
Since LS/MMSE estimation results in the same received energy, we denote the harvested energy as $\bar{\delta}_i^{\rm{I}}$, where the superscript $``\rm{I}"$ signifies imperfect channel knowledge.
\begin{lemma}\label{lem:ls harv}
With LS/MMSE channel estimation, the average harvested energy $\bar{\delta}_i^{\rm{I}}$ at a user $i$ can be approximated as
\begin{equation}\label{eq: average harvested energy ls}
\bar{\delta}_i^{\rm{I}}=
\begin{cases} 
0, & M<{M_{{\rm{act}},i}^{\rm{I}}}\\
\eta_{\rm{EH}}\bar{\gamma}_i^{\rm{I}}, & M_{{\rm{act}},i}^{\rm{I}}\leq M< M_{{\rm{sat}},i}^{\rm{I}}\\
\frac{\eta_{\rm{EH}}\theta_{{\rm{sat}}}}{B}, & M\geq {M_{{\rm{sat}},i}^{\rm{I}}} 
\end{cases}
\end{equation}
where 
\begin{align}\label{eq: mact mmse}
{M_{{\rm{act}},i}^{\rm{I}}}
=\min\left\{M:~ \frac{\theta_{\rm{act}}}{B}\leq {\bar{\gamma}}_{i}^{\rm{I}}<\frac{\theta_{\rm{sat}}}{B}\right\}
%\approx
%\left\lceil{\frac{\left(\frac{\theta_{{\rm{act}}}}{\alpha_{\rm{WET}}B}-{A_2}{}\right)+\sqrt{\left(\frac{\theta_{{\rm{act}}}}{\alpha_{\rm{WET}}B}-{A_2}{}\right)^ +{4A_3A_4}{}\frac{\theta_{{\rm{act}}}}{\alpha_{\rm{WET}}B}}}{2A_1}}\right\rceil,
\end{align}
and
\begin{align}\label{eq: msat mmse}
{M_{{\rm{sat}},i}^{\rm{I}}}
=\min\left\{M:~  {\bar{\gamma}}_{i}^{\rm{I}}\geq\frac{\theta_{\rm{sat}}}{B}\right\}
%\approx
%\left\lceil{\frac{\left(\frac{\theta_{{\rm{sat}}}}{\alpha_{\rm{WET}}B}-{A_2}{}\right)+\sqrt{\left(\frac{\theta_{{\rm{sat}}}}{\alpha_{\rm{WET}}B}-{A_2}{}\right)^2+{4A_3A_4}{}\frac{\theta_{{\rm{sat}}}}{\alpha_{\rm{WET}}B}}}{2A_1}}\right\rceil,
\end{align}
denote the minimum number of antennas required to activate or saturate the harvesters, while ${\bar{\gamma}}_{i}^{\rm{I}}$ follows from Lemma \ref{lem:ls}.
%and $A_4=1+\frac{A_2}{A_3}$.
\begin{proof}
The proof is similar to that of Lemma \ref{lem:harperfect} and follows from invoking Lemma \ref{lem:ls} along with the definition in (\ref{eq: instantaneous harvested energy}) assuming imperfect channel knowledge. Unlike Lemma \ref{lem:harperfect}, however, the analytical expressions for ${M_{{\rm{act}},i}^{\rm{I}}}$ and ${M_{{\rm{sat}},i}^{\rm{I}}}$ are rather unwieldy. Therefore, we express them in terms of ${\bar{\gamma}}_{i}^{\rm{I}}$ for simplicity.
\end{proof}
\end{lemma}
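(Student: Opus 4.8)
The plan is to start from the defining relation (\ref{eq: instantaneous harvested energy}) for the average harvested energy, but with the perfect-CSI quantity $\bar{\gamma}_i$ replaced by its imperfect-CSI counterpart $\bar{\gamma}_i^{\rm{I}}$ supplied by Lemma \ref{lem:ls}. This gives
\begin{align*}
\bar{\delta}_i^{\rm{I}}&=\eta_{\rm{EH}}\bar{\gamma}_i^{\rm{I}}\,\mathbbm{1}_{\left[\frac{\theta_{\rm{act}}}{B}\leq\bar{\gamma}_i^{\rm{I}}<\frac{\theta_{\rm{sat}}}{B}\right]}\\
&\quad+\frac{\eta_{\rm{EH}}\theta_{\rm{sat}}}{B}\,\mathbbm{1}_{\left[\bar{\gamma}_i^{\rm{I}}\geq\frac{\theta_{\rm{sat}}}{B}\right]}.
\end{align*}
As in the proof of Lemma \ref{lem:harperfect}, the task reduces to identifying, as a function of $M$, which of the two indicators is active. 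The essential difference is that, unlike the perfect-CSI case where $\bar{\gamma}_i$ is affine in $M$ and the antenna thresholds admit closed forms, here $\bar{\gamma}_i^{\rm{I}}$ is the more involved expression of Lemma \ref{lem:ls}, so I would retain the thresholds in the implicit form (\ref{eq: mact mmse})--(\ref{eq: msat mmse}) rather than attempt to invert them.

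The key step is to establish that $\bar{\gamma}_i^{\rm{I}}$ is non-decreasing in $M$; this is what guarantees that each indicator condition resolves to a contiguous range of $M$ and that the minima in (\ref{eq: mact mmse})--(\ref{eq: msat mmse}) are well defined. I would verify monotonicity piecewise: on the active branch, $\psi_i^{\rm{LS,act}}$ in (\ref{eq: rcv ls act}) is increasing in $M$ because $A_1=\alpha_{\rm{WET}}p_{\rm{dl}}\beta_i\zeta_i>0$ and both $A_1M+A_2-A_3$ and the radicand increase with $A_1M$; on the saturated branch, $\psi_i^{\rm{LS,sat}}$ in (\ref{eq:rcv ls sat}) is increasing in $M$ since the prefactor $A_1M$ grows while the bracketed correction increases monotonically towards one. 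I would then check that the two branches agree at the crossover, i.e., that $\bar{\gamma}_i^{\rm{I}}$ reaches $\frac{\theta_{\rm{sat}}}{B}$ continuously, so that no gap or overlap is introduced between the regimes.

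Given monotonicity, the three cases follow. Let $M_{\rm{act},i}^{\rm{I}}$ be the smallest $M$ with $\bar{\gamma}_i^{\rm{I}}\geq\frac{\theta_{\rm{act}}}{B}$ and $M_{\rm{sat},i}^{\rm{I}}$ the smallest $M$ with $\bar{\gamma}_i^{\rm{I}}\geq\frac{\theta_{\rm{sat}}}{B}$; by monotonicity $M_{\rm{act},i}^{\rm{I}}\leq M_{\rm{sat},i}^{\rm{I}}$. For $M<M_{\rm{act},i}^{\rm{I}}$ neither indicator fires and $\bar{\delta}_i^{\rm{I}}=0$; for $M_{\rm{act},i}^{\rm{I}}\leq M<M_{\rm{sat},i}^{\rm{I}}$ only the first fires, giving $\eta_{\rm{EH}}\bar{\gamma}_i^{\rm{I}}$; for $M\geq M_{\rm{sat},i}^{\rm{I}}$ only the second fires, giving $\frac{\eta_{\rm{EH}}\theta_{\rm{sat}}}{B}$.

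The main obstacle I anticipate is not this bookkeeping but the monotonicity argument itself, complicated by the self-consistent nature of $\bar{\gamma}_i^{\rm{I}}$: through $A_3$ the received energy depends on the pilot power, which is in turn drawn from the harvested energy, so on the active branch the dependence on $M$ is implicit and already resolved (to a fixed point) inside Lemma \ref{lem:ls}. This coupling is precisely why the statement is phrased as an approximation, and I would flag that the monotonicity claim inherits whatever approximation was used to close that fixed point in Lemma \ref{lem:ls}.
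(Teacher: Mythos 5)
Your route is essentially the paper's: its entire proof consists of applying the definition (\ref{eq: instantaneous harvested energy}) with $\bar{\gamma}_i$ replaced by the imperfect-CSI energy $\bar{\gamma}_i^{\rm{I}}$ from Lemma \ref{lem:ls}, and leaving the antenna thresholds in the implicit form (\ref{eq: mact mmse})--(\ref{eq: msat mmse}) because closed-form inversion is unwieldy. The extra content you supply --- monotonicity of $\bar{\gamma}_i^{\rm{I}}$ in $M$, which is what makes the three $M$-ranges contiguous and the minima well defined, plus the observation that the ``approximated as'' wording traces back to the fixed-point closure inside Lemma \ref{lem:ls} --- is left tacit in the paper, and spelling it out is a genuine strengthening.

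However, both of your monotonicity justifications are incorrect as stated, even though their conclusions are true. First, on the active branch the radicand $(A_1M+A_2-A_3)^2+4(A_1+A_2)A_3$ is \emph{not} increasing in $M$ whenever $A_1M+A_2-A_3<0$ (possible for small $M$ and large $A_3$, i.e., strong noise or short frames): the squared term then shrinks as $M$ grows. The correct argument is to set $u=A_1M+A_2-A_3$ and $c=4(A_1+A_2)A_3>0$, so that ${\psi}_{i}^{\rm{LS,act}}=\frac{1}{2}\left(u+\sqrt{u^2+c}\right)$ and
\begin{align*}
\frac{\partial {\psi}_{i}^{\rm{LS,act}}}{\partial u}=\frac{1}{2}\left(1+\frac{u}{\sqrt{u^2+c}}\right)>0
\end{align*}
regardless of the sign of $u$; the chain rule with $\partial u/\partial M=A_1>0$ then gives monotonicity. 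Second, on the saturated branch the bracketed factor $1-\frac{M-1}{M}\,\frac{1}{1+\theta_{\rm{sat}}/(BA_3)}$ \emph{decreases} with $M$ (toward $1-\frac{1}{1+\theta_{\rm{sat}}/(BA_3)}$); it does not ``increase monotonically towards one.'' Monotonicity of ${\psi}_{i}^{\rm{LS,sat}}$ instead follows by expanding: with $c=\frac{1}{1+\theta_{\rm{sat}}/(BA_3)}\in(0,1)$,
\begin{align*}
{\psi}_{i}^{\rm{LS,sat}}=A_1M(1-c)+A_1c+A_2,
\end{align*}
which is affine in $M$ with positive slope $A_1(1-c)>0$. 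With these two repairs (and your crossover check, which is indeed the right place to verify that the branch switch in Lemma \ref{lem:ls}, stated in terms of ${\psi}_{i}^{\rm{LS,act}}$, is consistent with the thresholds defined through $\bar{\gamma}_i^{\rm{I}}$), your argument is complete and records strictly more than the paper's proof does.
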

\begin{corollary}
For an ideal energy harvester with activation threshold $\theta_{{\rm{act}}}\rightarrow 0$ and saturation threshold $\theta_{{\rm{sat}}}\rightarrow\infty$, the average harvested energy simplifies to $\bar{\delta}_i^{\rm{I}}=\eta_{\rm{EH}}\bar{\gamma}_i^{\rm{I}}$ where $\bar{\gamma}_i^{\rm{I}}$ follows from Lemma \ref{lem:ls}. 
\end{corollary}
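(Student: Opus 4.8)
The plan is to derive this corollary directly from Lemma \ref{lem:ls harv} by evaluating the two antenna thresholds $M_{{\rm{act}},i}^{\rm{I}}$ and $M_{{\rm{sat}},i}^{\rm{I}}$ in the ideal-harvester limit, and showing that the three-way piecewise expression for $\bar{\delta}_i^{\rm{I}}$ in (\ref{eq: average harvested energy ls}) collapses onto its middle branch for every finite $M$.

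First I would treat the activation threshold. By its definition in (\ref{eq: mact mmse}), $M_{{\rm{act}},i}^{\rm{I}}=\min\{M:\frac{\theta_{\rm{act}}}{B}\leq\bar{\gamma}_i^{\rm{I}}<\frac{\theta_{\rm{sat}}}{B}\}$. As $\theta_{\rm{act}}\rightarrow 0$, the lower constraint $\frac{\theta_{\rm{act}}}{B}\leq\bar{\gamma}_i^{\rm{I}}$ is satisfied by any strictly positive received energy, so the activation threshold falls to the smallest admissible antenna count; consequently the first (zero) branch of (\ref{eq: average harvested energy ls}), which requires $M<M_{{\rm{act}},i}^{\rm{I}}$, becomes vacuous. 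Next I would treat the saturation threshold. By (\ref{eq: msat mmse}), $M_{{\rm{sat}},i}^{\rm{I}}=\min\{M:\bar{\gamma}_i^{\rm{I}}\geq\frac{\theta_{\rm{sat}}}{B}\}$, and since $\bar{\gamma}_i^{\rm{I}}$ is finite at every finite $M$, letting $\theta_{\rm{sat}}\rightarrow\infty$ renders the inequality $\bar{\gamma}_i^{\rm{I}}\geq\frac{\theta_{\rm{sat}}}{B}$ infeasible, so $M_{{\rm{sat}},i}^{\rm{I}}\rightarrow\infty$ and the third (saturation) branch, which requires $M\geq M_{{\rm{sat}},i}^{\rm{I}}$, is never triggered.

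Combining the two limits, every finite $M$ lies in the middle range $M_{{\rm{act}},i}^{\rm{I}}\leq M<M_{{\rm{sat}},i}^{\rm{I}}$, so Lemma \ref{lem:ls harv} yields $\bar{\delta}_i^{\rm{I}}=\eta_{\rm{EH}}\bar{\gamma}_i^{\rm{I}}$ as claimed. To pin down the received-energy factor, I would invoke Corollary \ref{cor: ideal harv incident}, which already shows that in the same ideal-harvester limit $\bar{\gamma}_i^{\rm{I}}=\psi_i^{\rm{LS,act}}$, i.e., the unsaturated expression of Lemma \ref{lem:ls}. I do not expect a genuine obstacle: this is a limiting argument on the thresholds rather than a fresh calculation. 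The only point that warrants mild care is that the defining $\min$-sets in (\ref{eq: mact mmse}) and (\ref{eq: msat mmse}) are genuine intervals in $M$, which I would justify by recalling the monotone increase of $\bar{\gamma}_i^{\rm{I}}$ with the number of BS antennas noted after Lemma \ref{lem:ls}; this monotonicity guarantees the limiting thresholds behave as stated.
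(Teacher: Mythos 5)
Your proof is correct and matches the argument the paper leaves implicit: the corollary is stated without proof precisely because, as you show, the limits $\theta_{\rm{act}}\rightarrow 0$ and $\theta_{\rm{sat}}\rightarrow\infty$ make the zero branch vacuous and push $M_{{\rm{sat}},i}^{\rm{I}}$ to infinity, so every finite $M$ falls in the linear branch of Lemma \ref{lem:ls harv}. Your added care about the threshold sets being well-behaved (via monotonicity of $\bar{\gamma}_i^{\rm{I}}$ in $M$) and the appeal to Corollary \ref{cor: ideal harv incident} for the received-energy expression are exactly the right supporting observations.
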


%	Due to channel estimation errors, a larger $M$ is required to drive the user into linear or saturated mode, as compared to the perfect CSI case. 
	As compared to the case with perfect CSI, a larger number of BS antennas is required to drive the users into activation or saturation mode. The expressions (\ref{eq: mact mmse}) and (\ref{eq: msat mmse}) characterize the required number of antennas with LS/MMSE channel estimation. %These expressions are validated in Section \ref{SecSim} of the paper.
	
\begin{remark}\label{rem: eff}
Let us consider the quantity $\left(1-\xi_i\right)\bar{\delta}_i^{\rm{I}}$, which represents the \emph{effective} harvested energy at a user $i$. This is because a fraction $\xi_i$ of the harvested energy is reserved for uplink training. On one hand, increasing $\xi_i$ improves the channel estimation accuracy at the BS, thereby enhancing the incident power at the user. On the other hand, this means that a smaller fraction of the harvested energy will be available to the user. Therefore, $\xi_i$ should be tuned so as to maximize the effective harvested energy at each user. We refer the interested readers to \cite{Zhang2015training} for a comprehensive treatment of training design for wireless energy transfer systems.   
\end{remark}

\subsection{Power Transfer Efficiency}\label{sec:PTE}
We define the system-level power transfer efficiency (PTE) as the ratio of the total average power harvested by all users
to the total average BS power consumption. We find the optimal number of antennas and the optimal number of users to maximize the system-level power transfer efficiency. 

\subsubsection{Perfect CSI}
We first consider the case where the BS has perfect channel knowledge.
For ease of exposition, we set $\beta_i$ %~\forall~i\in\mathcal{I}_K$ 
%$\beta_i=\beta~\forall~i\in\mathcal{I}_K$.
to the average
$\mathbb{E}[\beta_i]=C\mathbb{E}[d_i^{-\alpha}]\triangleq\beta$, where $\mathbb{E}[d_i^{-\alpha}]$ is given in Section \ref{Sec: sytem model}. 
%symmetric case where $\beta_i\triangleq\beta$ such that $\bar{\delta}_i\triangleq\bar{\delta},\,\forall\, i\in\mathcal{I}_K$.
As $\beta_i\triangleq\beta$ such that $\bar{\delta}_i\triangleq\bar{\delta}\,\,\forall\, i\in\mathcal{I}_K$, we can equivalently view this as a 
symmetric setup where the users are located on a circle of radius $\left({C}{\beta}^{-1}\right)^{\frac{1}{\alpha}}$ around the BS.
%scenario 
We model the total BS power consumption as a sum of  $P_{\rm{TX}}$ and $P_{\rm{c}}$, where $P_{\rm{TX}}$ denotes the total average transmit (PA) power consumption (in watts) and $P_{\rm{c}}$ the total average circuit power consumption (in watts) at the BS.
The PTE of overall system can be formulated as
%We define the total \textit{power transfer efficiency} $\left(\text{PTE}\right)$ of the overall system as the ratio of the average sum harvested power to the total average power consumed, i.e., 
\allowdisplaybreaks{
\begin{align}\label{eq:PTE}
\text{PTE}\left(M,K\right)&=\frac{\sum_{i=1}^{K}B\bar{\delta}_i}{P_{\rm{TX}}+P_{\rm{c}}}\nonumber\\
&=\frac{B{K}\bar{\delta}}{
P_{\rm{TX}}
+P_{\rm{FIX}}+ MP_{\rm{BS}}+P_{\rm{CE}}+P_{\rm{LP}}
}
\end{align}}%where $P_{\rm{TX}}$ and $P_{\rm{c}}$ respectively denote the total average transmit power consumption (in watts) and the total average circuit power consumption (in watts) at the BS. 
%defining 2-col equations for next page
\begin{figure*}[!t]
	% ensure that we have normalsize text
	\normalsize
	% Store the current equation number.
%	\renewcommand{\theequation}{C.\arabic{equation}}
	\setcounter{mytempeqncnt}{\value{equation}}
	% Set the equation number to one less than the one
	% desired for the first equation here.
	% The value here will have to changed if equations
	% are added or removed prior to the place these
	% equations are referenced in the main text.
	\setcounter{equation}{21}

% % % % %
\begin{align} \label{eq:PTE K condition}
{K}^*_{\rm{PTE}}=
\begin{cases}
{K}^*_{\rm{sat}}, &
K^{*}_{\rm{sat}}>\frac{\left[K_{\rm{act}}^*+M-1\right]\beta P_{\rm{dl}}}{\theta_{\rm{sat}}}
\frac{\left(K^{*}_{\rm{sat}}\right)^2\dot{P}_{\rm{CE}}+ K^{*}_{\rm{sat}}\dot{P}_{\rm{LP}}+P_{\rm{TX}}+P_{\rm{FIX}}+MP_{\rm{BS}}}{\left(K^{*}_{\rm{act}}\right)^2\dot{P}_{\rm{CE}}+ K^{*}_{\rm{act}}\dot{P}_{\rm{LP}}+P_{\rm{TX}}+P_{\rm{FIX}}+MP_{\rm{BS}}}\\
%{K}^*_{\rm{sat}}, & \text{eq.}~(\ref{eq:inequalitycond})~\text{holds}\\
{K}^*_{\rm{act}}, & \text{else}
\end{cases}
\end{align}
% % % % %
	% Restore the current equation number.
	\setcounter{equation}{\value{mytempeqncnt}}
	% IEEE uses as a separator
	\hrulefill
	% The spacer can be tweaked to stop underfull vboxes.
	\vspace*{4pt}
\end{figure*}

In particular, $P_{\rm{TX}}=\frac{P_{\rm{dl}}}{\eta^{\rm{BS}}_{\rm{PA}}}=\frac{\alpha_{\rm{WET}}p_{\rm{dl}}B}{\eta^{\rm{BS}}_{\rm{PA}}}$ where
$\eta^{\rm{BS}}_{\rm{PA}}\in(0,1)$ denotes the BS PA efficiency. Note that the uplink transmit power, which is a fraction of the average harvested power, only appears in the numerator (via the expression for the harvested energy $\bar{\delta}$). This is because the energy harvesting users do not have any power source except for the wireless energy delivered by the BS. 
Inspired by \cite{ref3}, we allow the circuit power consumption $P_{{c}}$ to scale with the key parameters such as $M$ and $K$: $P_{\rm{FIX}}$ lumps the fixed power spent on running the BS; $P_{\rm{BS}}$ models the circuit power consumed by an RF chain such that $MP_{\rm{BS}}$ gives the total power consumed by the antenna circuitry. Let us use $\kappa_{\rm{BS}}$ to denote the BS computational efficiency in flops/watt, and recall that there are $\frac{B}{S}$ coherence blocks per second. Then, 
 $P_{\rm{CE}}=M\tilde{P}_{\rm{CE}}=\frac{2MK^2B}{S\kappa_{\rm{BS}}}$ models the power consumed while computing the channel estimates on the uplink during each coherence block (includes the power consumed in multiplying an $M\times K$ received pilot signal with a length $K$ pilot sequence for each of the $K$ users \cite[Appendix C]{boyd2004convex}); $P_{\rm{LP}}=M\tilde{P}_{\rm{LP}}=\left(\frac{3MKB}{S\kappa_{\rm{BS}}}\right)$ accounts for the power consumption due to linear processing at the BS, i.e., for computing the downlink energy beamformer. 
 %and the uplink ZF filter $\left(\frac{B(\frac{K^3}{3}+3MK^2+MK)}{S\kappa_{\rm{BS}}}\right)$ once per coherence block, and for evaluating a matrix-vector multiplication for each data symbol $\frac{2\alpha_{\rm{WIT}}MKB}{\kappa_{\rm{BS}}}$\cite{ref3,boyd2004convex}. %Finally, $P_{\rm{DEC}}KR$ models the power consumed in decoding the received data, where $P_{\rm{DEC}}$ parameterizes the decoder power consumption (in W/bit/s)\cite{ref3} at the BS. 
 We note that the computational power consumption is usually negligible compared to the antenna power consumption in the large-antenna regime. 

%We now characterize the optimal parameters that maximize the power transfer efficiency.
\subsubsection*{Optimal M} 
We now characterize the number of antennas required to optimize the power transfer efficiency, assuming the other parameters to be fixed. Let us denote this quantity by ${M}^*_{\rm{PTE}}$.
\begin{proposition}\label{lem:pte M}
In a system with $K$ users, the PTE-optimal number of antennas is given by ${M}^*_{\rm{PTE}}=$
\begin{align}\label{eq:pte m}
\begin{cases} 
{M_{{\rm{act}}}^{\rm{}}}=\left\lceil1+K\left(\frac{\theta_{{\rm{act}}}}{\beta P_{\rm{dl}}}-1\right)\right\rceil
, & K\geq 1+
\frac{P_{\rm{TX}}
+P_{\rm{FIX}}}
{P_{\rm{BS}}+{{\tilde{P}_{\rm{CE}}}}+{\tilde{P}_{\rm{LP}}}}\\
M_{{\rm{sat}}}=\left\lceil1+K\left(\frac{\theta_{{\rm{sat}}}}{\beta P_{\rm{dl}}}-1\right)\right\rceil, & K< 1+
\frac{P_{\rm{TX}}
+P_{\rm{FIX}}}
{P_{\rm{BS}}+{{\tilde{P}_{\rm{CE}}}}+{\tilde{P}_{\rm{LP}}}} 
%{M_{{\rm{act}}}^{\rm{}}}=\left\lceil\frac{K\theta_{{\rm{act}}}}{\alpha_{\rm{WET}}B\beta p_{\rm{dl}}}\right\rceil, & K>1+
%\frac{P_{\rm{TX}}
%+P_{\rm{FIX}}}
%{P_{\rm{BS}}+{{\tilde{P}_{\rm{CE}}}}+{\tilde{P}_{\rm{LP}}}}\\
%M_{{\rm{sat}}}=\lceil\frac{K\theta_{{\rm{sat}}}}{\alpha_{\rm{WET}}B\beta p_{\rm{dl}}}\rceil, & K\leq 1+
%\frac{P_{\rm{TX}}
%+P_{\rm{FIX}}}
%{P_{\rm{BS}}+{{\tilde{P}_{\rm{CE}}}}+{\tilde{P}_{\rm{LP}}}} 
\end{cases}
\end{align}
where ${M_{{\rm{act}}}^{\rm{}}}$ and ${M_{{\rm{sat}}}^{\rm{}}}$ are as defined in Lemma \ref{lem:harperfect}.
\begin{proof}
See Appendix.
%First, note that the PTE is strictly sub-optimal when $M\notin\left[M_{\rm{act}},M_{\rm{sat}}\right]$, as it is zero for $M<M_{\rm{act}}$
%and is upper bounded by ${\rm{PTE}}\left(M_{\rm{sat}},K\right)$ for $M>M_{\rm{act}}$.
%The next step is to solve the linear fractional program $\underset{x}{\rm{arg~max}}~f(x)=\underset{x}{\rm{arg~max}}~\frac{N_1x+N_2}{D_1x+D_2}$ under the constraint 
%$x\in\left[M_{\rm{act}},M_{\rm{sat}}\right]$, where
%$N_1=BKA_1$, $N_2=BKA_2$, $D_1=P_{\rm{BS}}+{{\tilde{P}_{\rm{CE}}}}+{\tilde{P}_{\rm{LP}}}$, and $D_2=P_{\rm{TX}}
%+P_{\rm{FIX}}$. 
%We note that $f(x)$ is quasilinear and monotonic in $x$ \cite[Section 4.3]{boyd2004convex},\cite{charnes1962programming}. 
%%This fractional program can be equivalently casted as a linear program , which results in the solution $x^*=M_{\rm{act}}$ or $x^*=M_{\rm{sat}}$ depending on the system parameters: 
%When $N_1D_2>N_2D_1$ such that
%$\frac{\partial}{\partial x}f(x)>0$, $x^*=M_{\rm{sat}}$ as $f(x)$ is an increasing function of $x$. Conversely, when $N_1D_2\leq N_2D_1$, $x^*=M_{\rm{act}}$ maximizes $f(x)$.  
\end{proof}
\end{proposition}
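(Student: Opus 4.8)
The plan is to substitute the perfect-CSI results into the PTE expression (\ref{eq:PTE}) and then optimize the resulting piecewise-rational function of $M$, holding $K$ fixed. First I would specialize Lemma \ref{lem:harperfect} and Corollary \ref{cor: rcv pwr perfect} to the symmetric setup with $\zeta_i=\frac{1}{K}$ and $\beta_i\triangleq\beta$, using $P_{\rm{dl}}=\alpha_{\rm{WET}}Bp_{\rm{dl}}$ to write $\bar{\gamma}_i=\frac{P_{\rm{dl}}}{B}\beta\left(1+\frac{M-1}{K}\right)$. The numerator $BK\bar{\delta}$ then becomes $\eta_{\rm{EH}}P_{\rm{dl}}\beta(K+M-1)$ on the active region $M_{\rm{act}}\le M<M_{\rm{sat}}$, and the constant $K\eta_{\rm{EH}}\theta_{\rm{sat}}$ on the saturation region $M\ge M_{\rm{sat}}$. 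Writing $P_{\rm{CE}}=M\tilde{P}_{\rm{CE}}$ and $P_{\rm{LP}}=M\tilde{P}_{\rm{LP}}$ collapses the denominator to the affine form $a+bM$ with $a=P_{\rm{TX}}+P_{\rm{FIX}}$ and $b=P_{\rm{BS}}+\tilde{P}_{\rm{CE}}+\tilde{P}_{\rm{LP}}$, yielding a three-piece objective: zero for $M<M_{\rm{act}}$, the ratio $f(M)=c(K+M-1)/(a+bM)$ with $c=\eta_{\rm{EH}}P_{\rm{dl}}\beta$ on the active region, and $K\eta_{\rm{EH}}\theta_{\rm{sat}}/(a+bM)$ on the saturation region.

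The next step is to relax $M$ to a continuous variable and differentiate the active-region ratio. A direct quotient-rule computation gives $f'(M)\propto a+b-bK$, whose sign is independent of $M$ and equals the sign of $1+\frac{a}{b}-K$. Hence $f$ is monotone increasing on the whole active region when $K<1+\frac{a}{b}$, monotone decreasing when $K>1+\frac{a}{b}$, and constant at equality. The saturation-region value has a constant numerator over a strictly increasing affine denominator, so it is strictly decreasing in $M$ in all cases.

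I would then stitch the three pieces together. The key observation is continuity at $M=M_{\rm{sat}}$: at the (unrounded) saturation threshold the condition $\bar{\gamma}_i=\frac{\theta_{\rm{sat}}}{B}$ forces $P_{\rm{dl}}\beta(K+M_{\rm{sat}}-1)=K\theta_{\rm{sat}}$, so the active-region numerator $\eta_{\rm{EH}}P_{\rm{dl}}\beta(K+M_{\rm{sat}}-1)$ coincides with the saturation numerator $K\eta_{\rm{EH}}\theta_{\rm{sat}}$ there, and the two branches match. Combined with the monotonicity results, the profile of $\text{PTE}(\cdot,K)$ is: a jump from $0$ up at $M_{\rm{act}}$, then an increase to a peak at $M_{\rm{sat}}$ followed by a decrease when $K<1+\frac{a}{b}$; or a jump up at $M_{\rm{act}}$ followed by a monotone decrease when $K\ge 1+\frac{a}{b}$. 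The maximizer is therefore $M_{\rm{sat}}$ in the former case and $M_{\rm{act}}$ in the latter; substituting $\zeta_i=\frac{1}{K}$ into the Lemma \ref{lem:harperfect} expressions for $M_{{\rm{act}},i}$ and $M_{{\rm{sat}},i}$ produces the stated ceilings.

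I expect the main obstacle to be the discreteness of $M$ together with the jump discontinuity at the activation threshold: because PTE is identically zero for $M<M_{\rm{act}}$ but strictly positive at $M_{\rm{act}}$, one cannot invoke the continuous derivative near activation, and in the decreasing case ($K\ge 1+\frac{a}{b}$) I must argue separately that the smallest admissible integer $M=M_{\rm{act}}$ beats every larger $M$ rather than any sub-activation value. Care is also needed because the ceiling operations in $M_{\rm{act}}$ and $M_{\rm{sat}}$ break exact continuity at the saturation boundary, so the continuous-variable monotonicity argument must be paired with a check on the rounded integer endpoints to confirm that the claimed integer maximizers remain optimal.
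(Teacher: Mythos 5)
Your proposal is correct and follows essentially the same route as the paper's own proof: reduce to the interval $\left[M_{\rm{act}},M_{\rm{sat}}\right]$ (zero below activation, constant numerator over growing denominator above saturation), then observe that on the active region the PTE is a linear-fractional function of $M$ whose monotonicity direction is fixed by the sign of $a+b-bK$, placing the optimum at an endpoint. Your explicit quotient-rule computation is just the unpacked version of the paper's appeal to the quasilinearity/monotonicity of $\frac{N_1x+N_2}{D_1x+D_2}$ (your $c\left[a+b-bK\right]$ is exactly the paper's $N_1D_2-N_2D_1$), so the two arguments coincide.
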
 
From the perspective of power transfer efficiency, it is optimal that the harvesters operate at the vertices of the linear regime. This is evident from Proposition \ref{lem:pte M} since ${M}^*_{\rm{PTE}}\in\{M_{{\rm{act}}},M_{{\rm{sat}}}\}$.  
When $M$
is smaller than $M_{{\rm{act}}}$, the power transfer efficiency is (trivially) zero. 
When $M$ is increased beyond $M_{{\rm{sat}}}$, the total power consumption increases while the average harvested power remains the same, reducing the power transfer efficiency. 

%--------------figure
%\begin{figure*}[!t]
%% ensure that we have normalsize text
%\normalsize
%% Store the current equation number.
%\setcounter{MYtempeqncnt}{\value{equation}}
%% Set the equation number to one less than the one
%% desired for the first equation here.
%% The value here will have to changed if equations
%% are added or removed prior to the place these
%% equations are referenced in the main text.
%\setcounter{equation}{21}
%\begin{align} \label{eq:PTE K condition}
%{K}^*_{\rm{PTE}}=
%\begin{cases}
%{K}^*_{\rm{sat}}, &
%K^{*}_{\rm{sat}}>\frac{\left[K_{\rm{act}}^*+M-1\right]\beta P_{\rm{dl}}}{\theta_{\rm{sat}}}
%\frac{\left(K^{*}_{\rm{sat}}\right)^2\dot{P}_{\rm{CE}}+ K^{*}_{\rm{sat}}\dot{P}_{\rm{LP}}+P_{\rm{TX}}+P_{\rm{FIX}}+MP_{\rm{BS}}}{\left(K^{*}_{\rm{act}}\right)^2\dot{P}_{\rm{CE}}+ K^{*}_{\rm{act}}\dot{P}_{\rm{LP}}+P_{\rm{TX}}+P_{\rm{FIX}}+MP_{\rm{BS}}}\\
%{K}^*_{\rm{act}}, & \text{else}
%\end{cases}
%\end{align}
%% Restore the current equation number.
%\setcounter{equation}{\value{MYtempeqncnt}}
%% IEEE uses as a separator
%\hrulefill
%% The spacer can be tweaked to stop underfull vboxes.
%\vspace*{4pt}
%\end{figure*}
%--------------

\begin{remark}
The PTE-optimal number of antennas depends on the number of users as well as the BS power consumption. 
We note that ${M}^*_{\rm{PTE}}$ increases linearly with the number of users. This is because, with other parameters fixed, adding more users reduces the per user average received energy. Therefore, a larger number of BS antennas are required to activate or saturate the harvesters. 
When $K=1$, it is PTE-optimal to operate with $M_{{\rm{sat}}}$ antennas, since it maximizes the average harvested energy at the user. 
In a multi-user system, however, the condition $
K\geq1+\frac{P_{\rm{TX}}
+P_{\rm{FIX}}}
{P_{\rm{BS}}+{{\tilde{P}_{\rm{CE}}}}+{\tilde{P}_{\rm{LP}}}}$ informs the optimal solution. Here, $\frac{P_{\rm{TX}}
+P_{\rm{FIX}}}
{P_{\rm{BS}}+{{\tilde{P}_{\rm{CE}}}}+{\tilde{P}_{\rm{LP}}}}$ is the ratio of the fixed power consumption to the scalable (circuit/computational) power consumption at the BS -- normalized by the number of antennas.
Note that this condition can be equivalently expressed as a cubic inequality in $K$ as 
\begin{align}
\frac{2B}{S\kappa_{\rm{BS}}}K^3 +\frac{B}{S\kappa_{\rm{BS}}}K^2 + \left(P_{\rm{BS}}\frac{3B}{S\kappa_{\rm{BS}}}\right)K\nonumber\\ - \left(P_{\rm{BS}}+P_{\rm{TX}}
+P_{\rm{FIX}}\right)\geq 0.
\end{align} 
Under realistic assumptions, we obtain some useful insights from this relation. 
  Typically, 
$\frac{P_{\rm{TX}}
+P_{\rm{FIX}}}
{P_{\rm{BS}}+{{\tilde{P}_{\rm{CE}}}}+{\tilde{P}_{\rm{LP}}}}\approx
\frac{P_{\rm{TX}}
+P_{\rm{FIX}}}
{P_{\rm{BS}}}$ since the computational power consumption is usually much smaller than the antenna power consumption.  
This means that when the scalable power consumption exceeds the fixed power consumption, i.e., ${P_{\rm{BS}}}>{P_{\rm{TX}}
+P_{\rm{FIX}}} 
$, it is PTE-optimal to operate with the fewest possible ($M_{{\rm{act}}}$) antennas. This is because the improvement in the harvested energy due to any additional antennas will be overshadowed by the increase in the BS power consumption. Conversely, when the fixed power consumption dominates the scalable power consumption \emph{and} $K< 1+
\frac{P_{\rm{TX}}
+P_{\rm{FIX}}}
{P_{\rm{BS}}}$, it is optimal to operate with $M_{{\rm{sat}}}$ antennas as it maximizes the net harvested energy. 
\end{remark}

\subsubsection*{Optimal K}
We now characterize the number of users ${K}^*_{\rm{PTE}}$ required to optimize the power transfer efficiency, assuming the other parameters to be fixed. In a multi-user system, there is a certain number of users a BS can simultaneously support. If it exceeds this limit, none of the harvesters will be activated due to insufficient received power. We call this limit ${K}_{\rm{max}}$ and express it in terms of the system parameters as
%\begin{align}\label{eq:Kmax}
${K}_{\rm{max}}=\left\lfloor\frac{M-1}{\frac{\theta_{\rm{act}}}{P_{\rm{dl}}\beta} -1}\right\rfloor$.
%\end{align}
Similarly, we define ${K}_{\rm{sat}}=\left\lfloor\frac{M-1}{\frac{\theta_{\rm{sat}}}{P_{\rm{dl}}\beta} -1}\right\rfloor$ as the maximum number of allowed users such that each harvester operates in the saturated mode. The maximum sum harvested power $B{K}\bar{\delta}$ in the saturated mode is given by $K_{\rm{sat}}\eta_{\rm{EH}}\theta_{\rm{sat}}$. Depending on the system parameters, serving more than $K_{\rm{sat}}$ users may decrease the total harvested power due to a reduction in the per-user harvested power in the non-saturated mode.
\begin{proposition}\label{lem: pte K}
%With an $M$-antenna BS, the PTE-optimal number of users is given by ${K}^*_{\rm{PTE}}=\min\{{K}_{\rm{max}},\tilde{K}\}$
%where $\tilde{K}$ is either the integer floor or the integer ceiling of the function
Let us define ${K}^*_{\rm{act}}=\min\left\{{K}_{\rm{max}},\tilde{K}\right\}$
where $\tilde{K}$ is either the integer floor or the integer ceiling of
%\begin{align}
%\label{eq:pte k}
$\left(M-1\right)\left[1+\sqrt{1+\frac{4}{\left(M-1\right){\dot{P}}_{\rm{CE}}} \left[\frac{P_{\rm{TX}}+P_{\rm{FIX}}+MP_{\rm{BS}}}{M-1}-\dot{P}_{\rm{LP}}\right] }\right]$;
%\end{align}
 ${K}^*_{\rm{sat}}=\min\left\{{K}_{\rm{sat}},{\hat{{K}}}\right\}$ where $\hat{K}$ is either the integer floor or the integer ceiling of $\sqrt{\frac{P_{\rm{TX}}+P_{\rm{FIX}}+MP_{\rm{BS}}}{\dot{P}_{\rm{CE}}}}$.
Here, $\dot{P}_{\rm{CE}}=\frac{2MB}{S\kappa_{\rm{BS}}}$ and $\dot{P}_{\rm{LP}}=\frac{3MB}{S\kappa_{\rm{BS}}}$ respectively denote the power required for channel estimation and precoding in a single-user system. 
With an $M$-antenna BS, the PTE-optimal number of users ${K}^*_{\rm{PTE}}$ is given in (\ref{eq:PTE K condition}).
%\begin{align} \label{eq:PTE K condition}
%{K}^*_{\rm{PTE}}=
%\begin{cases}
%%{K}^*_{\rm{sat}}, &
%%K^{*}_{\rm{sat}}>\frac{\left[K_{\rm{act}}^*+M-1\right]\beta P_{\rm{dl}}}{\theta_{\rm{sat}}}
%%\frac{\left(K^{*}_{\rm{sat}}\right)^2\dot{P}_{\rm{CE}}+ K^{*}_{\rm{sat}}\dot{P}_{\rm{LP}}+P_{\rm{TX}}+P_{\rm{FIX}}+MP_{\rm{BS}}}{\left(K^{*}_{\rm{act}}\right)^2\dot{P}_{\rm{CE}}+ K^{*}_{\rm{act}}\dot{P}_{\rm{LP}}+P_{\rm{TX}}+P_{\rm{FIX}}+MP_{\rm{BS}}}\\
%{K}^*_{\rm{sat}}, & \text{eq.}~(\ref{eq:inequalitycond})~\text{holds}\\
%{K}^*_{\rm{act}}, & \text{else}
%\end{cases}
%\end{align}
%where the condition in (\ref{eq:inequalitycond}) is
%\begin{align}\label{eq:inequalitycond}
%K^{*}_{\rm{sat}}&>\frac{\left[K_{\rm{act}}^*+M-1\right]\beta P_{\rm{dl}}}{\theta_{\rm{sat}}}\nonumber\\
%&\times\frac{\left(K^{*}_{\rm{sat}}\right)^2\dot{P}_{\rm{CE}}+ K^{*}_{\rm{sat}}\dot{P}_{\rm{LP}}+P_{\rm{TX}}+P_{\rm{FIX}}+MP_{\rm{BS}}}{\left(K^{*}_{\rm{act}}\right)^2\dot{P}_{\rm{CE}}+ K^{*}_{\rm{act}}\dot{P}_{\rm{LP}}+P_{\rm{TX}}+P_{\rm{FIX}}+MP_{\rm{BS}}}.
%\end{align}
\begin{proof}
See Appendix.
%The proof follows by noting that the function $f(K)=\frac{N_1K+N_2}{D_1K^2+D_2K+D_3}$ is quasi-concave for $K\in\mathbb{R}$ when the level sets $\mathcal{S}_\nu=\{K:f(K)\geq\nu\}$ are convex for any $\nu\in\mathbb{R}$\cite[Section 3.4]{boyd2004convex}. Using differentiation, we can prove the convexity of the level sets for nonnegative values of $\{N_i\}_{i=1,2}$ and $\{D_i\}_{i=1,2,3}$, where $D_1=\dot{P}_{\rm{CE}}$, $D_2=\dot{P}_{\rm{LP}}$, and $D_3={P_{\rm{TX}}+P_{\rm{FIX}}+MP_{\rm{BS}}}$.
%$N_1=\eta_{\rm{EH}}\theta_{\rm{sat}}$ and $N_2=0$ in the saturated mode, whereas $N_1=\eta_{\rm{EH}}P_{\rm{dl}}\beta$ and $N_2=\eta_{\rm{EH}}P_{\rm{dl}}\beta\left(M-1\right)$ in the non-saturated mode.  
%Solving $\frac{\partial}{\partial K} f(K)=0$, we obtain the optimal $K$ for the saturated $\left(K^{*}_{\rm{sat}}\right)$ or non-saturated mode $\left(K^{*}_{\rm{act}}\right)$, which follows from the quasi-concavity of $f(K)$ since it is an increasing function of $K$ for $K<K^*$ where $K^*$ is the stationary point of $f(K)$; and because $K \leq K_{\rm{sat}}$ and $K \leq K_{\rm{max}}$ in the saturated or non-saturated mode, respectively. 
%Finally, the condition in (\ref{eq:PTE K condition}) is obtained by comparing the maximal PTE in the saturated and non-saturated modes. %$K_{\rm{act}}^*\eta_{\rm{EH}}\beta P_{\rm{dl}}+\beta P_{\rm{dl}}\left(M-1\right)$. This is because the power consumption in the     
\end{proof}
\end{proposition}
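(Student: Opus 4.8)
The plan is to treat $\mathrm{PTE}$ as a function of the integer $K$ with $M$ and all other parameters held fixed, and to exploit the piecewise structure of the harvested energy from Lemma \ref{lem:harperfect}. In the symmetric equal-allocation setting of Section \ref{sec:PTE} ($\beta_i\triangleq\beta$, $\zeta_i=1/K$), Corollary \ref{cor: rcv pwr perfect} gives $\bar\gamma=\alpha_{\rm{WET}}p_{\rm{dl}}\beta\left(1+\tfrac{M-1}{K}\right)$, so the sum harvested power is piecewise in $K$: the harvesters saturate when $K\le K_{\rm{sat}}$, operate in the active (linear) regime when $K_{\rm{sat}}<K\le K_{\rm{max}}$, and are inactive when $K>K_{\rm{max}}$, these boundaries being the $K$-indexed counterparts of the antenna thresholds $M_{\rm{sat}}$ and $M_{\rm{act}}$. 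Writing the denominator of (\ref{eq:PTE}) as $C_0 + K\dot P_{\rm{LP}} + K^2\dot P_{\rm{CE}}$ with $C_0\triangleq P_{\rm{TX}}+P_{\rm{FIX}}+MP_{\rm{BS}}$ (using $P_{\rm{CE}}=K^2\dot P_{\rm{CE}}$ and $P_{\rm{LP}}=K\dot P_{\rm{LP}}$), I would maximize $\mathrm{PTE}$ separately over each of the first two regimes and then compare the two maxima.

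In the saturated regime the sum harvested power is $K\eta_{\rm{EH}}\theta_{\rm{sat}}$, so $\mathrm{PTE}=\eta_{\rm{EH}}\theta_{\rm{sat}}K/(C_0+K\dot P_{\rm{LP}}+K^2\dot P_{\rm{CE}})$. Relaxing $K$ to the reals and differentiating, the numerator of $\mathrm{PTE}'$ is $C_0-\dot P_{\rm{CE}}K^2$, which vanishes at $\hat K=\sqrt{C_0/\dot P_{\rm{CE}}}$; the ratio is increasing below $\hat K$ and decreasing above it, hence unimodal. Since $K$ is a positive integer confined to $K\le K_{\rm{sat}}$, the constrained optimum is $K^*_{\rm{sat}}=\min\{K_{\rm{sat}},\hat K\}$, with $\hat K$ taken as whichever of its floor or ceiling yields the larger $\mathrm{PTE}$.

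In the active regime, $BK\bar\delta=\eta_{\rm{EH}}P_{\rm{dl}}\beta(K+M-1)$ (using $P_{\rm{dl}}=\alpha_{\rm{WET}}Bp_{\rm{dl}}$), so $\mathrm{PTE}=\eta_{\rm{EH}}P_{\rm{dl}}\beta(K+M-1)/(C_0+K\dot P_{\rm{LP}}+K^2\dot P_{\rm{CE}})$. Differentiating in the continuous variable and clearing the positive denominator, the stationarity condition is the quadratic $\dot P_{\rm{CE}}K^2+2(M-1)\dot P_{\rm{CE}}K+\big[(M-1)\dot P_{\rm{LP}}-C_0\big]=0$, whose unique positive root is $\tilde K$; the ratio is again unimodal, so after rounding $\tilde K$ to the better integer neighbour and intersecting with the feasibility bound the constrained optimum is $K^*_{\rm{act}}=\min\{K_{\rm{max}},\tilde K\}$.

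The global optimizer is then whichever of $K^*_{\rm{sat}}$ and $K^*_{\rm{act}}$ attains the larger $\mathrm{PTE}$. Imposing $\mathrm{PTE}(K^*_{\rm{sat}})>\mathrm{PTE}(K^*_{\rm{act}})$, cancelling the common factor $\eta_{\rm{EH}}$ and cross-multiplying the two denominators $D_{\rm{sat}}=(K^*_{\rm{sat}})^2\dot P_{\rm{CE}}+K^*_{\rm{sat}}\dot P_{\rm{LP}}+C_0$ and $D_{\rm{act}}=(K^*_{\rm{act}})^2\dot P_{\rm{CE}}+K^*_{\rm{act}}\dot P_{\rm{LP}}+C_0$, produces exactly the inequality in (\ref{eq:PTE K condition}); hence ${K}^*_{\rm{PTE}}=K^*_{\rm{sat}}$ precisely when that inequality holds and ${K}^*_{\rm{PTE}}=K^*_{\rm{act}}$ otherwise. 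I expect the main obstacle to be the bookkeeping around integrality and the regime boundaries rather than the calculus: each relaxed problem is a smooth unimodal ratio, but one must justify that rounding the interior stationary point to its floor or ceiling and then clipping at $K_{\rm{sat}}$ or $K_{\rm{max}}$ genuinely returns the integer maximizer on that regime, and one must verify that the regimes are nonempty and correctly delimited by $K_{\rm{sat}}$ and $K_{\rm{max}}$ before the final cross-regime comparison is meaningful.
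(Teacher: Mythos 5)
Your overall approach is the same as the paper's own proof: the appendix likewise treats the PTE in each mode as a linear-over-quadratic fractional function $f(K)=\frac{N_1K+N_2}{D_1K^2+D_2K+D_3}$ with $D_1=\dot{P}_{\rm{CE}}$, $D_2=\dot{P}_{\rm{LP}}$, $D_3=P_{\rm{TX}}+P_{\rm{FIX}}+MP_{\rm{BS}}$, establishes quasiconcavity (you argue unimodality from the sign pattern of the derivative's numerator; the paper argues convexity of superlevel sets --- equivalent in substance here), solves $\partial f/\partial K=0$ in each mode, clips at $K_{\rm{sat}}$ or $K_{\rm{max}}$, and compares the two modal maxima to obtain (\ref{eq:PTE K condition}). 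Your saturated-mode root $\hat{K}=\sqrt{D_3/D_1}$, the numerators $N_1,N_2$ in each mode, and the final cross-multiplication all match the paper exactly.

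There is, however, one point you glossed over that matters. Your active-mode stationarity condition $\dot{P}_{\rm{CE}}K^2+2(M-1)\dot{P}_{\rm{CE}}K+\bigl[(M-1)\dot{P}_{\rm{LP}}-C_0\bigr]=0$, with $C_0\triangleq P_{\rm{TX}}+P_{\rm{FIX}}+MP_{\rm{BS}}$, is correct, but its unique positive root is
\begin{equation*}
K^{*}=-(M-1)+\sqrt{(M-1)^2+\frac{C_0-(M-1)\dot{P}_{\rm{LP}}}{\dot{P}_{\rm{CE}}}},
\end{equation*}
whereas the proposition defines $\tilde{K}$ as (the floor/ceiling of) $(M-1)+\sqrt{(M-1)^2+\frac{4\left[C_0-(M-1)\dot{P}_{\rm{LP}}\right]}{\dot{P}_{\rm{CE}}}}$. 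These are not equal: the displayed formula is the positive root of $K^2-2(M-1)K-4\bigl[C_0-(M-1)\dot{P}_{\rm{LP}}\bigr]/\dot{P}_{\rm{CE}}=0$, i.e., it differs from your quadratic by a sign flip on the linear term and a spurious factor of $4$. A quick numerical check (say $M-1=9$, $\dot{P}_{\rm{CE}}=\dot{P}_{\rm{LP}}=1$, $C_0=100$) shows $f$ is maximized near your root ($\approx 4.1$), not near the displayed value ($\approx 30$), so your calculus is right and the closed form printed in the proposition statement appears to be a typo; the paper's own appendix never writes the root out explicitly, so it is consistent with your derivation. As written, though, your proof asserts ``whose unique positive root is $\tilde{K}$,'' identifying your root with the stated formula without verification --- and that identification is false. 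You should either display the root you actually obtain and flag the mismatch with the statement, or acknowledge that you are proving a corrected version of the proposition rather than the literal one.
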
 
The condition $K^{*}_{\rm{sat}}>\frac{\left[K_{\rm{act}}^*+M-1\right]\beta P_{\rm{dl}}}{\theta_{\rm{sat}}}
$ implies that the aggregate user harvested power in the saturated mode exceeds that in the non-saturated mode. The expression $\frac{\left(K^{*}_{\rm{sat}}\right)^2\dot{P}_{\rm{CE}}+ K^{*}_{\rm{sat}}\dot{P}_{\rm{LP}}+P_{\rm{TX}}+P_{\rm{FIX}}+MP_{\rm{BS}}}{\left(K^{*}_{\rm{act}}\right)^2\dot{P}_{\rm{CE}}+ K^{*}_{\rm{act}}\dot{P}_{\rm{LP}}+P_{\rm{TX}}+P_{\rm{FIX}}+MP_{\rm{BS}}}$ is the ratio of the BS power consumption in the two modes. When the condition in (\ref{eq:PTE K condition}) holds, it is PTE-optimal to operate the system in the saturated mode since serving more (than $K^{*}_{\rm{sat}}$) users not only increases the BS power consumption, but also reduces the sum harvested power if the harvesters operate in the non-saturated mode.  
Under realistic values of the power consumption model, however, PTE typically improves as more users are added to the system. 
This is because the sum power delivered to the users grows with $K$, despite a decrease in the per-user harvested power. In contrast, the BS power consumption registers only a minor increase since the computational power is negligible compared to the hardware/transmit power. Therefore, it is PTE-optimal to serve the maximum allowed ${K}^*_{\rm{PTE}}={K}_{\rm{max}}$ users in typical systems. 

\subsubsection{Imperfect CSI}
The PTE formulation for the case of imperfect CSI is similar to that of perfect CSI. It is, however, analytically challenging to derive the PTE-optimal solution for this case. Because the two solutions will be qualitatively similar, we do not elaborate this case further. We use simulation results in the following subsection to corroborate this observation.  
\subsection{Simulation Results}\label{secSim:wet} 
We now present the simulation results based on the analysis conducted in this section.
%We for average harvested energy and power transfer efficiency. 
%We now present the simulation results characterizing the energy efficiency of the considered system. 
We set the carrier frequency $f_c=1.8$ GHz, coherence time $T_{\rm{c}}=180$ ms, coherence bandwidth $B_{\rm{c}}=10$ kHz, frame size $S=1800$ symbols, system bandwidth $B=1$ MHz, BS transmit power $P_{\rm{dl}}=10$ W, noise power spectral density $\sigma^2=-174$ dBm/Hz, BS computational efficiency $\kappa_{\rm{BS}}=20\times10^9$ flops/W \cite{yael}, BS PA efficiency $\eta_{\rm{PA}}^{\rm{BS}}=0.39$ \cite{ref3}, 
user PA efficiency $\eta_{\rm{PA}}^{\rm{EH}}= 0.3$ \cite{ref3}, BS RF chain power consumption
$P_{\rm{BS}}=1$ W, BS fixed power consumption $P_{\rm{FIX}}=1$ W,
EH conversion efficiency $\eta_{\rm{EH}}= 0.5$, 
EH activation threshold $\theta_{{\rm{act}}}=10~\mu$W, EH saturation threshold $\theta_{{\rm{sat}}}=1~m$W, energy splitting parameter $\xi=0.1$, path-loss exponent $\alpha=3.2$, path-loss intercept $C=1.76\times10^{-4}$ (for a reference distance of 1 m), $r_{\min}=5$ m, and $r_{\max}=20$ m, unless noted otherwise. Geometrically, this setup is equivalent to the case where the users are located on a circle of radius 10.6 m centered at the BS. We set $\alpha_{\rm{Tr}}=\frac{K}{S}$, $\alpha_{\rm{WET}}=1-\alpha_{\rm{Tr}}$, and $\alpha_{\rm{WIT}}=0$.
%We obtain the plots for various values of the system bandwidth $B$.
\subsubsection*{Average harvested power vs. M}
In Fig. \ref{fig: power vs M}, we examine how the average harvested power $B\bar{\delta}_i$ at a user varies as a function of the number of BS antennas $M$ and users $K$. We consider the ideal case where the BS has perfect channel knowledge, and the realistic case where it has imperfect channel knowledge due to LS/MMSE estimation. 
We obtain the analytical (anl) results using Lemma $\ref{lem:harperfect}-\ref{lem:ls harv}$, and the simulation (sim) results using Monte Carlo simulations for $10^4$ trials. 
We include the results for both single-user ($K=1$) and multi-user systems ($K=2$). 

We can draw several useful insights from Fig. \ref{fig: power vs M}. First, with $K$ fixed and $M\geq M_{\rm{act}}$ (which is $7$ for $K=1$ and $13$ for $K=2$), the average harvested power increases\footnote{As evident from Corollary \ref{cor:m/k ratio}, this trend does not hold when $M$ and $K$ are scaled proportionally (for a fixed $M/K$) as the harvested power remains almost constant.} with $M$, until the harvester saturates. 
For example, for $K=1$, the harvested power saturates at $M=M_{\rm{sat}}=1089$ antennas (and at $M=2177$ antennas for $K=2$ which is not shown). 
Second, adding more users reduces the harvested energy at each user. This is due to a reduction in the energy beamformed at each user. Third, imperfect channel knowledge causes only a minor loss in the harvested energy. Similar trend was reported in prior work \cite{kayshap2015massivewet}. This suggests that the insights drawn with perfect channel knowledge may be applicable to realistic scenarios with imperfect channel knowledge. %Finally, the simulation-based results validate the analysis conducted in this section. 

%  %-------------------equation
%  \begin{figure*}[!t]
%  % ensure that we have normalsize text
%  \normalsize
%  % Store the current equation number.
%  \setcounter{MYtempeqncnt}{\value{equation}}
%  % Set the equation number to one less than the one
%  % desired for the first equation here.
%  % The value here will have to changed if equations
%  % are added or removed prior to the place these
%  % equations are referenced in the main text.
%  \setcounter{equation}{22}
%  \begin{equation}\label{eq:uplink achievable rate}
%  {R}_i=
%  \begin{cases} 
%  0 & M<M_{{\rm{act}},i}\\
%  \alpha_{\rm{WIT}}B\log_2\left(1+\left(1-\xi_i\right){\beta_i}\,\eta^{\rm{EH}}_{\rm{PA}}\,\frac{{\eta_{\rm{EH}}}{\bar{\gamma}}_i}{\alpha_{\rm{WIT}}\sigma^2} \left(M-K\right)\right), & \hspace{-0in}M_{{\rm{act}},i}\leq M< M_{{\rm{sat}},i}\\
%  \alpha_{\rm{WIT}}B\log_2\left(1+\left(1-\xi_i\right)\beta_i\,\eta^{\rm{EH}}_{\rm{PA}}\frac{\eta_{\rm{EH}}\theta_{{\rm{sat}}}}{B\alpha_{\rm{WIT}}\sigma^2} \left(M-K\right)\right), & M\geq M_{{\rm{sat}},i}
%  \end{cases}
%  \end{equation}
%  % Restore the current equation number.
%%  \setcounter{equation}{\value{MYtempeqncnt}}
%%  \setcounter{equation}{\value{MYtempeqncnt}}
%  % IEEE uses as a separator
%  \hrulefill
%  % The spacer can be tweaked to stop underfull vboxes.
%  \vspace*{4pt}
%  \end{figure*}

\begin{figure} [t]
	\centerline{
		\includegraphics[width=\columnwidth]{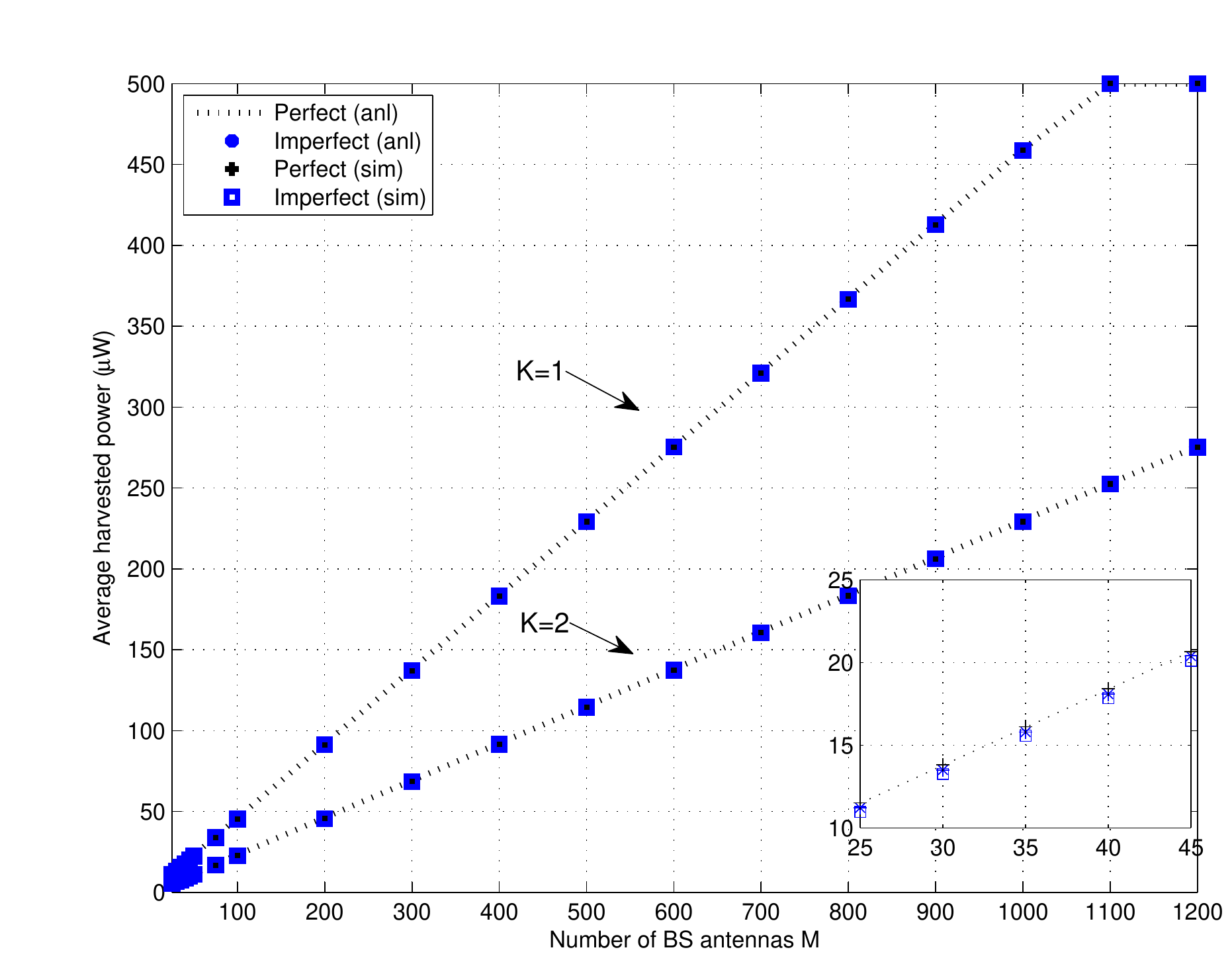}
	}
	\caption{The average harvested power $B\bar{\delta}_i$ increases with the number of BS antennas $M$, and decreases with the number of users $K$. The inset shows a zoomed-in version of the curves for $K=1$. Imperfect channel knowledge (LS/MMSE channel estimation) causes a minor degradation versus perfect channel knowledge. Simulation-based (sim) results validate the analytical (anl) results.}
	\label{fig: power vs M}
\end{figure}
%-------------------

\subsubsection*{PTE vs. M}
In Fig. \ref{fig: pte vs M}, we plot the power transfer efficiency versus the number of BS antennas for single-user and multi-user systems. It reveals how PTE behaves in terms of key system parameters, confirming the insights drawn in Proposition \ref{lem:pte M}. First, we observe that there is an optimal $M$ that maximizes the PTE.  
In a single-user system, it is optimal to operate with the maximum possible antennas $M_{\rm{sat}}$ in the linear regime, beyond which the PTE tends to decrease. In a multi-user system, this is not necessarily the case. For example,
 in the considered multi-user case $K=40$, the PTE is maximized using the fewest possible $M_{\rm{act}}$ antennas. This is because, for this example, the boost in harvested power due to additional antennas is overshadowed by the increase in the BS circuit power consumption. With fewer than $M_{\rm{act}}$ antennas, the PTE is zero as the received power fails to meet the activation threshold. Second, we observe that imperfect channel knowledge results in a minor degradation in PTE. %, which is more evident in the multi-user scenario. 
 Moreover, imperfect knowledge requires a larger number of antennas to activate the system compared to the case with perfect channel knowledge. 
 Third, we observe that multi-user system yields a higher PTE than the single-user system. This is because the sum harvested power increases as more users are added to the system, despite a decrease in the per user harvested power.
 %Though the individual harvested power is reduced, the sum power increases as more users are added to the system. 
 This trend holds as long as the number of users do not exceed $K_{\max}$, beyond which the individual harvested power -- and therefore the PTE -- drops to zero. This trend is in line with Proposition \ref{lem: pte K}.

\begin{figure} [t]
	\centerline{
\includegraphics[width=\columnwidth]{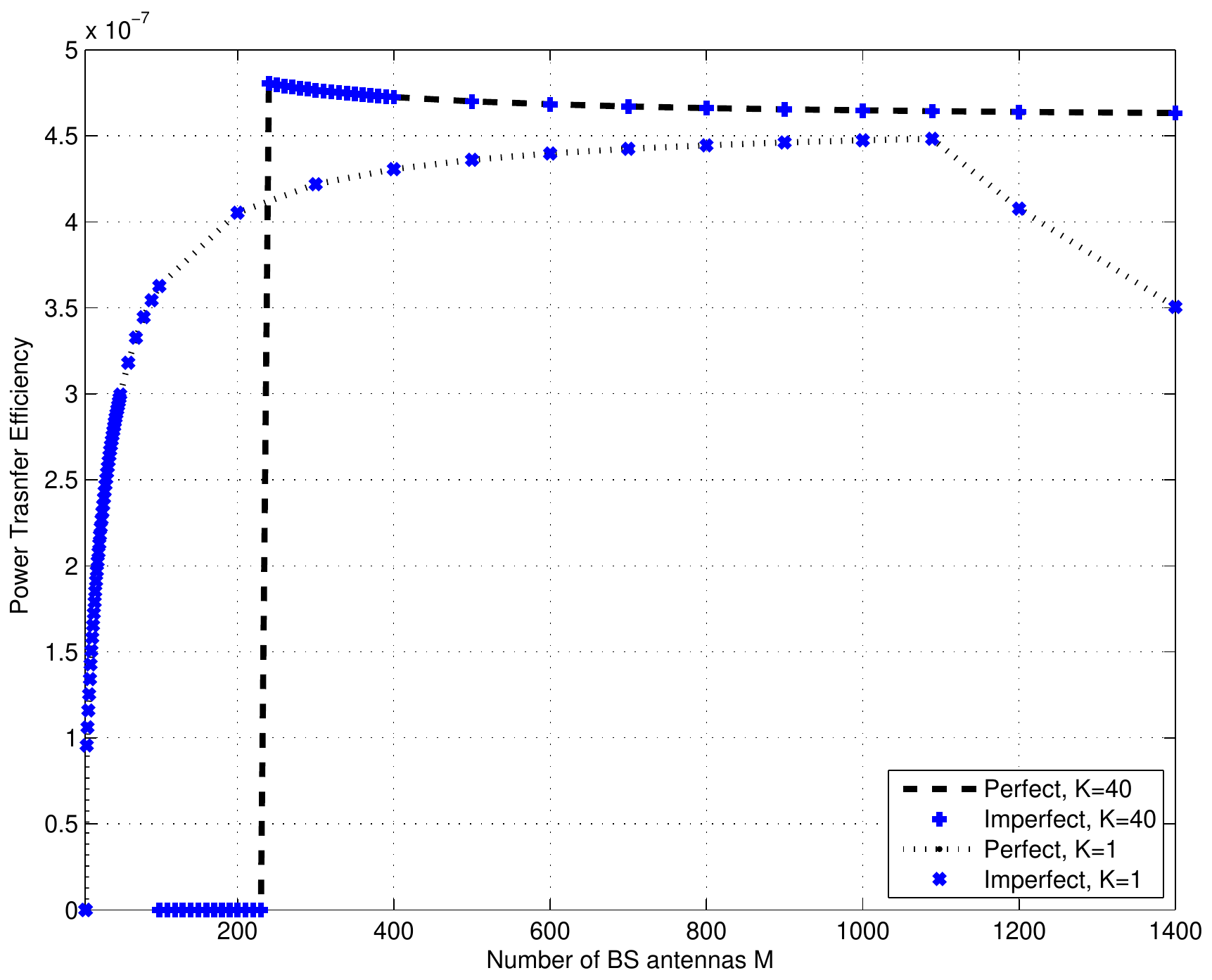}
	}
	\caption{Power transfer efficiency $\rm{PTE}$ vs. the number of BS antennas $M$ for $K=1$ and $K=40$ users. There is an optimal $M$ that maximizes the PTE, as reported in Proposition \ref{lem:pte M}: 
	For single-user system, PTE is optimized by operating with maximum possible antennas in the linear regime.
	For the considered multi-user system, operating with fewest possible antennas maximizes the PTE.
 %The inset shows a zoomed-in version of the curves for $K=1$. 
 Imperfect channel knowledge (LS/MMSE channel estimation) causes a negligible degradation versus perfect channel knowledge.}
	\label{fig: pte vs M}
\end{figure}
%-------------------

\subsubsection*{Optimizing energy splitting parameter $\xi$}
We now consider the effective harvested power at a user, after discounting the amount used for uplink pilot transmission (see Remark \ref{rem: eff}).
In 	Fig. \ref{fig: opt energy alloc}, we set $M=500$, $S=100$, $P_{\rm{dl}}=20$ W, $r_{\min}=5$ m, $r_{\max}=50$ m, and plot the effective harvested power at a user versus the energy splitting parameter $\xi$ for various values of $K$. 
We recall that the energy splitting parameter $\xi$ gives the fraction of the harvested energy that a user reserves for pilot transmission. 
For the considered system, we observe that dedicating around $1\%$ of the harvested power maximizes the effective power available to the user. We note that this fraction will be even smaller in systems with a larger frame size.
Moreover, deviating from this optimal value may cause a significant degradation in effective harvested power: 
allocating a smaller fraction reduces the uplink transmit power, which decreases the channel estimation accuracy at the BS. The resulting BS transmission based on inaccurate channel knowledge sacrifices the beamforming gain, which reduces the harvested power at the user. Conversely, allocating more energy for uplink training improves the harvested power. This improvement, however, is insufficient to justify the underlying increase in the uplink transmit power. As a result, the effective harvested power will reduce nonetheless. This trend is in line with the discussion in Remark \ref{rem: eff}. 
We further note that the optimal value is insensitive to the number of users. This suggests that a user does not need to tune this parameter when other users enter or leave the system. Finally, we observe that the effective harvested power at a user decreases as more users are being served. As explained earlier, this is due to a reduction in the energy beamformed to each user. 
\begin{figure} [t]
	\centerline{
		\includegraphics[width=\columnwidth]{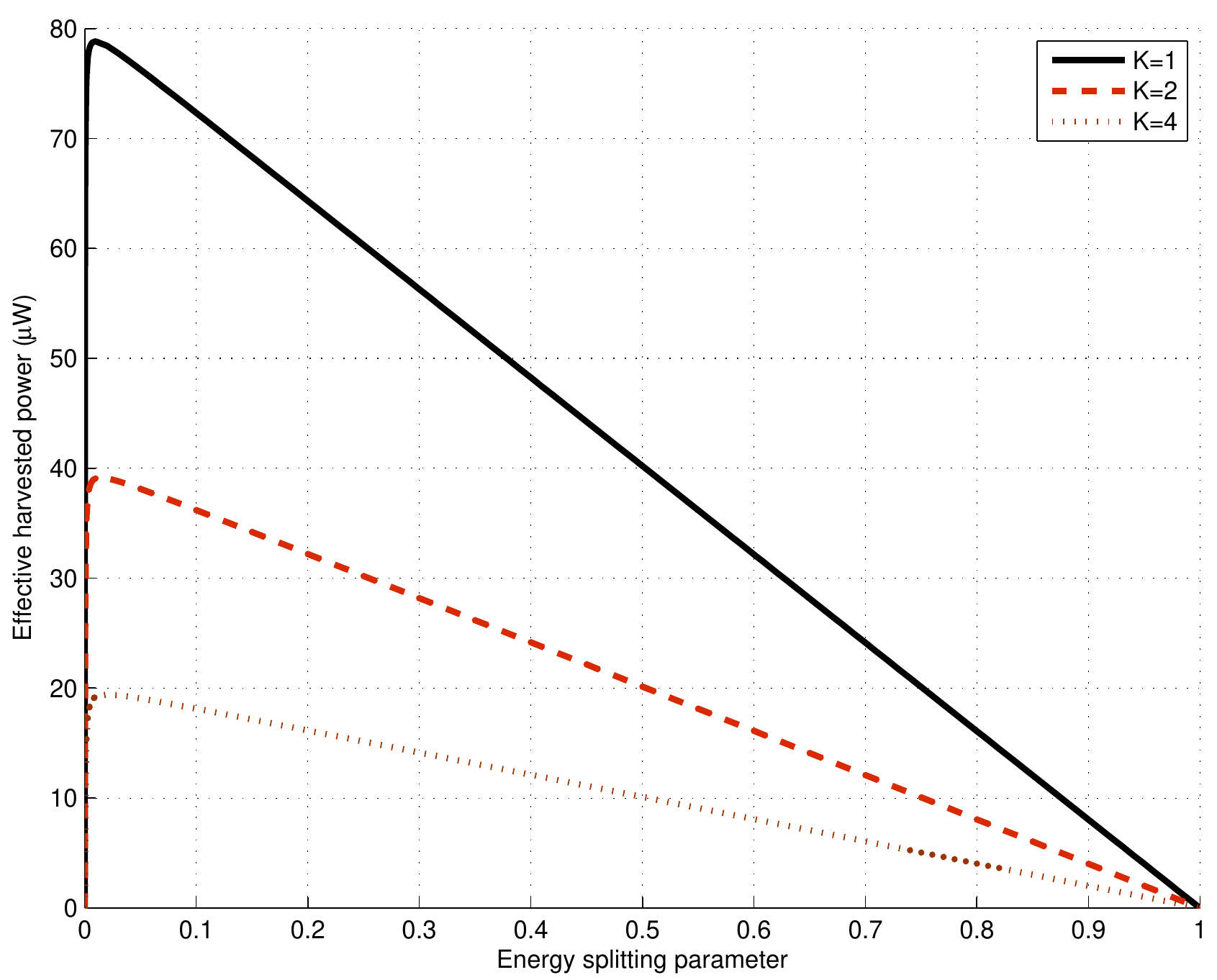}
	}
	\caption{Effective harvested power vs. energy splitting parameter $\xi$ for $M=500$. A user can maximize the effective harvested power by allocating the right amount of harvested energy for uplink pilot transmission. The optimal value is not particularly sensitive to the number of users in the system. The harvested power decreases as more users are added to the system.}
	\label{fig: opt energy alloc}
\end{figure}
%-------------------

\section{Wireless Energy and Information Transfer}\label{sec WIPT}
In this section, we consider wireless-powered communications where the BS charges users in the downlink, and the users leverage the harvested energy to communicate with the BS on the uplink. This is different from the previous section where no information transfer was considered on the uplink.
%defining 2-col equations for next page
\begin{figure*}[!t]
	% ensure that we have normalsize text
	\normalsize
	% Store the current equation number.
%	\renewcommand{\theequation}{C.\arabic{equation}}
	\setcounter{mytempeqncnt}{\value{equation}}
	% Set the equation number to one less than the one
	% desired for the first equation here.
	% The value here will have to changed if equations
	% are added or removed prior to the place these
	% equations are referenced in the main text.
	\setcounter{equation}{22}

% % % % %
\begin{align}\label{eq:uplink achievable rate}
{R}_i=
\begin{cases} 
0 & M<M_{{\rm{act}},i}\\
\alpha_{\rm{WIT}}B\log_2\left(1+\left(1-\xi_i\right){\beta_i}\,\eta^{\rm{EH}}_{\rm{PA}}\,\frac{{\eta_{\rm{EH}}}{\bar{\gamma}}_i}{\alpha_{\rm{WIT}}\sigma^2} \left(M-K\right)\right), & \hspace{-0in}M_{{\rm{act}},i}\leq M< M_{{\rm{sat}},i}\\
\alpha_{\rm{WIT}}B\log_2\left(1+\left(1-\xi_i\right)\beta_i\,\eta^{\rm{EH}}_{\rm{PA}}\frac{\eta_{\rm{EH}}\theta_{{\rm{sat}}}}{B\alpha_{\rm{WIT}}\sigma^2} \left(M-K\right)\right), & M\geq M_{{\rm{sat}},i}
\end{cases}
\end{align}
% % % % %
	% Restore the current equation number.
%	\setcounter{equation}{\value{mytempeqncnt}}
	% IEEE uses as a separator
	\hrulefill
	% The spacer can be tweaked to stop underfull vboxes.
	\vspace*{4pt}
\end{figure*}
\subsection{Uplink Achievable Rate}
We now provide analytical expressions for the uplink achievable rate for a wirelessly powered user. Note that the total harvested energy is used for sending both training and data symbols during uplink transmission.  
%We assume that a user reserves a fraction $\xi_i\in(0,1)$ of the harvested energy $\hat{\gamma}_i$ for the UL training phase, while the rest is used for the uplink WIT phase. 
As defined previously, $\xi_i\in (0,1)$ is the fraction of the harvested energy $\bar{\delta}_i$ at a user $i$ reserved for the uplink pilot transmission, while the remaining fraction $1-\xi_i$ of the harvested energy is used for uplink data transmission.
We assume that a user $i$ transmits uplink data symbols with an average energy $p^i_{\rm{ul}}=\frac{\eta^{\rm{EH}}_{\rm{PA}}\left(1-\xi_i\right)\bar{\delta}_i}{\alpha_{\rm{WIT}}}$ (in joules/symbol). 
%In particular, $p^i_{\rm{ul}}=\eta_{\rm{PA}}\xi_i{\gamma}_i$, where
Here, $\eta^{\rm{EH}}_{\rm{PA}}\in(0,1]$ denotes the user PA efficiency. At the users, while we explicitly model only the transmit power consumption, any additional power consumption (e.g., due to computation) could be equivalently handled by tuning (i.e., further reducing) the parameter $\eta^{\rm{EH}}_{\rm{PA}}$. 
While we account for the training overhead, we ignore the loss in harvested power due to channel estimation errors. This simplifies the analysis, and could be justified since the imperfect channel knowledge causes only a minor loss in the harvested power (Section \ref{secSim:wet}). 
For uplink detection, we assume that the BS uses a Zero-forcing (ZF) receive filter. 
Leveraging the convexity of the function $\log(1+x^{-1})$ where $x>0$, we use Jensen's inequality to obtain a lower bound on the ergodic uplink achievable rate, and call it the achievable rate in the ensuing analysis\cite{Ngo2013EE}. 
% XXX say why a lower bound, refer to Marzetta paper's reference
%\begin{corollary}\normalfont
\begin{lemma}
With an $M$-antenna BS serving $K<M$ users, the uplink achievable rate ${R}_i$ for a remotely-powered user $i$ is given by (\ref{eq:uplink achievable rate}),
%Using (\ref{eq: average harvested energy}), the uplink achievable rate can be expressed as 
%\begin{equation}\label{eq:uplink achievable rate}
%{R}_i=
%\begin{cases} 
%0 & M<M_{{\rm{act}},i}\\
%\alpha_{\rm{WIT}}B\log_2\left(1+\left(1-\xi_i\right){\beta_i}\,\eta^{\rm{EH}}_{\rm{PA}}\,\frac{{\eta_{\rm{EH}}}{\bar{\gamma}}_i}{\alpha_{\rm{WIT}}\sigma^2} \left(M-K\right)\right), & \hspace{-0in}M_{{\rm{act}},i}\leq M< M_{{\rm{sat}},i}\\
%\alpha_{\rm{WIT}}B\log_2\left(1+\left(1-\xi_i\right)\beta_i\,\eta^{\rm{EH}}_{\rm{PA}}\frac{\eta_{\rm{EH}}\theta_{{\rm{sat}}}}{B\alpha_{\rm{WIT}}\sigma^2} \left(M-K\right)\right), & M\geq M_{{\rm{sat}},i}
%\end{cases}
%\end{equation}
where $\bar{\gamma}_i$ is the average received energy as defined in Lemma \ref{lemma:avgrcvperfectcsi}. When $\zeta_i=\frac{1}{K}$, the achievable rate for $M_{{\rm{act}},i}\leq M< M_{{\rm{sat}},i}$ can be further simplified to
\begin{align}\label{eq:rate}
R_i=\alpha_{\rm{WIT}}B\log_2\left(1+\rho_i \left[M-K\right]\left[1+\frac{M-1}{K}\right]\right),
\end{align}
where 
\begin{align}\label{eq:rhoi}
\rho_i&\triangleq\frac{\left(1-\xi_i\right)p_{\rm{dl}}\,\alpha_{\rm{WET}}\,\eta_{\rm{EH}}\,\eta^{\rm{EH}}_{\rm{PA}}\,\beta_i^2}{\alpha_{\rm{WIT}}\sigma^2}\nonumber\\
&=
\frac{P_{\rm{dl}}\left(1-\xi_i\right)\,\,\eta_{\rm{EH}}\,\eta^{\rm{EH}}_{\rm{PA}}\,\beta_i^2}{B\alpha_{\rm{WIT}}\sigma^2}.
\end{align}
captures the effect of the system parameters (other than $M$ and $K$) on the uplink SNR.
\begin{proof}
This follows by noting that $\frac{p^i_{\rm{ul}}}{{\sigma}^2}=\frac{\eta^{\rm{EH}}_{\rm{PA}}\left(1-\xi_i\right)\bar{{\delta}}_i}{\alpha_{\rm{WIT}}\sigma^2}$ is the uplink (transmit) signal-to-noise ratio (SNR) for user $i$, applying Lemma 1, and invoking the result in \cite[Proposition 3]{Ngo2013EE}.
\end{proof}
\end{lemma}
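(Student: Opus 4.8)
The plan is to reduce the claim to a known massive-MIMO uplink rate formula and then substitute the harvested-energy expressions already derived. First I would identify the effective per-symbol transmit SNR of user $i$. Since the user spends a fraction $1-\xi_i$ of its harvested energy $\bar{\delta}_i$ on data and transmits with PA efficiency $\eta^{\rm{EH}}_{\rm{PA}}$, the transmit symbol energy is $p^i_{\rm{ul}}=\eta^{\rm{EH}}_{\rm{PA}}(1-\xi_i)\bar{\delta}_i/\alpha_{\rm{WIT}}$, so the uplink SNR is $p^i_{\rm{ul}}/\sigma^2$. With a ZF receiver at the BS serving $K<M$ users, nulling the $K-1$ co-channel users consumes $K-1$ spatial degrees of freedom, leaving an array gain proportional to $M-K$; invoking \cite[Proposition 3]{Ngo2013EE}---whose ergodic-rate lower bound follows from applying Jensen's inequality to the convex map $x\mapsto\log(1+x^{-1})$---gives a rate lower bound of the form $\alpha_{\rm{WIT}}B\log_2\bigl(1+\beta_i(M-K)\,p^i_{\rm{ul}}/\sigma^2\bigr)$, which is what the paper calls the achievable rate.

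Next I would split on the three operating regimes of the harvester, which is exactly the piecewise structure of $\bar{\delta}_i$ in Lemma~\ref{lem:harperfect}. When $M<M_{{\rm{act}},i}$ the harvester is inactive, $\bar{\delta}_i=0$, the SNR vanishes, and $R_i=0$. In the linear range $M_{{\rm{act}},i}\leq M<M_{{\rm{sat}},i}$ we have $\bar{\delta}_i=\eta_{\rm{EH}}\bar{\gamma}_i$; substituting this into the rate expression reproduces the second branch of (\ref{eq:uplink achievable rate}). In the saturated range $M\geq M_{{\rm{sat}},i}$ we have $\bar{\delta}_i=\eta_{\rm{EH}}\theta_{{\rm{sat}}}/B$, and substituting reproduces the third branch. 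Each case is a direct replacement of $\bar{\delta}_i$ followed by collecting the constants multiplying $(M-K)$.

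Finally, for the equal-allocation simplification I would set $\zeta_i=1/K$ and invoke Corollary~\ref{cor: rcv pwr perfect}, which gives $\bar{\gamma}_i=\alpha_{\rm{WET}}p_{\rm{dl}}\beta_i(1+(M-1)/K)$. Inserting this into the linear-regime rate and gathering the system constants---the factor $\beta_i$ from the ZF gain times the $\beta_i$ inside $\bar{\gamma}_i$ yields $\beta_i^2$---collapses the prefactor into exactly $\rho_i$ as defined in (\ref{eq:rhoi}); the identity $P_{\rm{dl}}=\alpha_{\rm{WET}}Bp_{\rm{dl}}$ then gives the second, equivalent form of $\rho_i$. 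This leaves $R_i=\alpha_{\rm{WIT}}B\log_2\bigl(1+\rho_i(M-K)(1+(M-1)/K)\bigr)$, as claimed. The only genuine subtlety, and the step I would be most careful with, is matching the normalization conventions of \cite{Ngo2013EE} to the present notation: one must verify that the $(M-K)\beta_i$ array gain and the definition of the uplink SNR there align with $p^i_{\rm{ul}}/\sigma^2$ as used here, and that the Jensen bound is applied to the correct convex function so that the stated rate is genuinely achievable rather than merely an approximation.
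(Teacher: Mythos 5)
Your proposal is correct and follows essentially the same route as the paper's proof: identify the uplink transmit SNR $p^i_{\rm{ul}}/\sigma^2$ from the harvested energy, invoke the ZF rate lower bound of \cite[Proposition 3]{Ngo2013EE}, and substitute the piecewise expression for $\bar{\delta}_i$ (Lemma \ref{lem:harperfect}) and the equal-allocation received energy (Corollary \ref{cor: rcv pwr perfect}) to obtain (\ref{eq:uplink achievable rate}) and (\ref{eq:rate}). The only difference is that you spell out the regime-splitting and constant-collection steps that the paper leaves implicit, which is a faithful elaboration rather than a different argument.
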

We note that the uplink rate is expressed in terms of the downlink transmit symbol energy $p_{\rm{dl}}$ since the user exploits the harvested energy to power its uplink transmission. This further means that the uplink communication link will typically operate in the low-SNR regime due to the limited energy available at the user. This case is further elaborated in the following corollary. 
\begin{corollary}
Let us consider a user in the linear mode such that $M_{{\rm{act}},i}\leq M< M_{{\rm{sat}},i}$. In the low-SNR regime, i.e., when $R_i\approx \alpha_{\rm{WIT}}B\rho_i \left(M-K\right)\left(1+\frac{M-1}{K}\right)$, the uplink achievable rate is the most susceptible to path loss\,\textemdash\,being proportional to the \emph{square} of the large-scale channel gain $\beta_i$. Fortunately, additional antennas are the most beneficial also in this regime as the rate approximately grows with the square of $M$. 
\begin{proof}
This follows from (\ref{eq:rhoi}) and by noting that $\log(1+x)=x+\mathcal{O}\left(x^2\right)$ for $|x|\leq 0.5$.
\end{proof}
\end{corollary}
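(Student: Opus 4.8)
The plan is to start from the exact rate expression for the active (linear) mode, equation (\ref{eq:rate}), and linearize its logarithm in the low-SNR regime. Writing the SNR argument as $x_i\triangleq\rho_i\left[M-K\right]\left[1+\frac{M-1}{K}\right]$ so that $R_i=\alpha_{\rm{WIT}}B\log_2\left(1+x_i\right)$, the defining feature of the low-SNR regime is that $x_i$ is small. I would then invoke $\ln\left(1+x\right)=x+\mathcal{O}\left(x^2\right)$ (valid for $|x|\leq 0.5$) and convert to base two, obtaining $\log_2\left(1+x_i\right)=\frac{1}{\ln 2}\left(x_i+\mathcal{O}\left(x_i^2\right)\right)$; dropping the higher-order term recovers the stated first-order approximation $R_i\approx\alpha_{\rm{WIT}}B\rho_i\left[M-K\right]\left[1+\frac{M-1}{K}\right]$, up to the absolute constant $\frac{1}{\ln 2}$ which affects neither scaling claim.

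With this linearization in hand, both assertions follow by inspection. To establish the path-loss susceptibility, I would substitute the explicit form of $\rho_i$ from (\ref{eq:rhoi}): because $\rho_i\propto\beta_i^2$ while the remaining factor $\left[M-K\right]\left[1+\frac{M-1}{K}\right]$ is free of $\beta_i$, the linearized rate is \emph{directly} proportional to $\beta_i^2$. This should be contrasted with the high-SNR regime, where the gain enters only through $\log_2\beta_i^2=2\log_2\beta_i$; the linear-versus-logarithmic distinction is exactly what makes the low-SNR rate the ``most susceptible'' to path loss.

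For the antenna scaling, I would examine the product $\left[M-K\right]\left[1+\frac{M-1}{K}\right]$ when $M$ is large relative to $K$: the first bracket grows like $M$ and the second like $\frac{M}{K}$, so their product grows like $\frac{M^2}{K}$, i.e., quadratically in $M$. Coupled with the linear (not logarithmic) dependence on the SNR, this shows that each added antenna is far more valuable here than in the high-SNR regime, where the same $M^2$ term sits inside a logarithm and contributes only $2\log_2 M$.

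The hard part is not the algebra but correctly qualifying the $M^2$ claim: there is a genuine tension between ``quadratic growth in $M$'' and the validity window $|x_i|\leq 0.5$, since increasing $M$ eventually pushes $x_i$ out of the low-SNR region and breaks the linearization. I would therefore frame the corollary as a \emph{local} description of the rate inside the low-SNR operating region rather than an asymptotic law. What justifies treating this regime as the relevant one is the wirelessly powered nature of the users: the uplink transmit energy is only the fraction $\left(1-\xi_i\right)$ of the harvested energy $\bar{\delta}_i$, which keeps the uplink SNR small in practice, so both the heightened path-loss sensitivity and the quadratic antenna gain are the operationally meaningful behaviors.
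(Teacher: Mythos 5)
Your proof is correct and follows essentially the same route as the paper's: linearizing $\log(1+x)$ via $x+\mathcal{O}(x^2)$ for $|x|\leq 0.5$ and reading the $\beta_i^2$ and $M^2$ scalings off $\rho_i$ in (\ref{eq:rhoi}) and the product $\left[M-K\right]\left[1+\frac{M-1}{K}\right]$. Your added care about the $\frac{1}{\ln 2}$ constant and the local validity window of the approximation is a sensible refinement, but it does not change the argument.
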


\begin{remark}\label{rem: rate growth}
The achievable rate reports a faster growth with $M$ in the non-saturated mode than the saturated mode. This follows from (\ref{eq:uplink achievable rate}) by noting that the effective uplink SNR grows approximately with the square of $M$ in the non-saturated mode, but only linearly in the saturated mode. 
In the linear mode, more antennas help improve the downlink energy transfer as well as the uplink information detection.
This is not the case in the saturated mode where only the uplink detection benefits from more antennas. We further note that, unlike the harvested power, the achievable rate does \emph{not} saturate in the saturated mode.  
\end{remark}
\subsection{Energy Efficiency}
We now characterize the total energy efficiency of the considered system by leveraging the power consumption model used in (\ref{eq:PTE}).
Similar to Section \ref{sec:PTE}, 
%onsider the average case where the large-scale channel gain is averaged over the user locations, i.e., we set 
% for a user $i$ to the mean 
we set $\beta_i~\forall~i\in\mathcal{I}_K$ 
%$\beta_i=\beta~\forall~i\in\mathcal{I}_K$.
to the average
$\mathbb{E}[\beta_i]=C\mathbb{E}[d_i^{-\alpha}]\triangleq\beta$, where $\mathbb{E}[d_i^{-\alpha}]$ is given in Section \ref{Sec: sytem model}. 
%we consider the symmetric case where the users are arranged on a circle around the BS such that
We assume all users have the same value for the energy splitting parameter $\xi_i\triangleq\xi$ such that $\rho_i\triangleq\rho$. With these simplifying assumptions, the users achieve an identical average rate, i.e., $R_i\triangleq R~\forall~i\in\mathcal{I}_K$.
We define the total {energy efficiency} $\left(\text{EE}\right)$ of the overall system (in bits/joule) as the ratio of the average uplink sum rate to the total average power consumed, i.e., 
\allowdisplaybreaks{
\begin{align}\label{eq:EE}
\text{EE}&\left(M,K\right)=\frac{K R}{P_{\rm{TX}}+P_{\rm{c}}}\nonumber\\
&=\frac{K R}{
P_{\rm{TX}}
+P_{\rm{FIX}}+ MP_{\rm{BS}}+P_{\rm{CE}}+P_{\rm{LP}}+P_{\rm{DEC}}KR
}
\end{align}}where 
%$P_{\rm{TX}}$ and $P_{\rm{c}}$ respectively denote the total average transmit power consumption and the total average circuit power consumption at the BS. %In particular, $P_{\rm{TX}}=\frac{P_{\rm{dl}}}{\eta_{\rm{BS}}}=\frac{\alpha_{\rm{WET}}p_{\rm{dl}}B}{\eta_{\rm{BS}}}$ where
%$\eta_{\rm{BS}}\in(0,1)$ denotes the BS PA efficiency. 
the power consumption model is similar to (\ref{eq:PTE}) except for two components: i) $P_{\rm{LP}}$ is modified to account for additional BS linear processing, i.e., in addition to the power required for computing the downlink precoder $\left(\frac{3MKB}{S\kappa_{\rm{BS}}}\right)$, it also includes the power required for computing the uplink ZF filter $\left(\frac{B(\frac{K^3}{3}+3MK^2+MK)}{S\kappa_{\rm{BS}}}\right)$ once per coherence block, and for evaluating a matrix-vector multiplication for each data symbol $\frac{2\alpha_{\rm{WIT}}MKB}{\kappa_{\rm{BS}}}$\cite{ref3,boyd2004convex}; ii) $P_{\rm{DEC}}KR$ is introduced to model the power consumed in decoding the received data, where $P_{\rm{DEC}}$ parameterizes the BS decoder power consumption (in W/bit/s)\cite{ref3}. 
These terms were absent in (\ref{eq:PTE}) since uplink data transmission was not considered. 
We note that the computational power consumption is usually negligible compared to the antenna power consumption in the large-antenna regime.
Moreover, the uplink transmit power, which is a fraction of the average harvested power, only appears in the numerator (via the expression for the achievable rate $R$). This is because the energy harvesting users do not have any power source except for the wireless energy delivered by the BS.
\begin{remark}\label{rem: EE vanishes}
We observe from (\ref{eq:EE}) that EE eventually vanishes in the large $M$ regime. This is because the data rate in the numerator grows only logarithmically whereas the power consumption in the denominator grows linearly with $M$.
% XXX what about low-power regime
\end{remark}

 \subsubsection*{Energy Efficiency Optimization}
We now characterize the optimal BS transmit power that maximizes the total energy efficiency for a given number of antennas and users. 
%\subsection{Rate-optimal Resource Allocation}
%Let us express the uplink sum rate as
%\begin{align}
%R=\alpha_{\rm{WIT}}K B\log_2\left(1+\alpha_{\rm{WET}}~\delta \left(M-K\right)\left(1+\frac{M-1}{K}\right)\right)
%\end{align}
%where the constant 
%\begin{align}
%\delta=\frac{p_{\rm{dl}}\xi\eta_{\rm{EH}} \beta^2}{\sigma^2}\left(M-K\right)\left(\frac{M-1}{K}+1\right).
%\end{align}
%With other parameters fixed, the rate maximizing resource allocation $\{\tilde{\alpha}_{\rm{WET}}, \tilde{\alpha}_{\rm{WIT}}\}$
%is given by
%\begin{align}\label{eq: rateoptimalalloc}
%\tilde{\alpha}_{\rm{WET}}=&\frac{e^{\text{W}\left[e\left(1+\delta(1-\alpha_{\rm{Tr}})\right)\right]-1}-1}{\delta},
%\end{align}
%where $\text{W}[\cdot]$ is the Lambert-W function, and $\tilde{\alpha}_{\rm{WIT}}=1-\tilde{\alpha}_{\rm{WET}}-\alpha_{\rm{Tr}}$. Note that $\tilde{\alpha}_{\rm{Tr}}$ does not feature in the allocation since it depends on the number of users, which we have assumed to be fixed here. 
%\subsection{Optimal Transmit Power}
\begin{lemma}\label{lem: optimal power}
	When the harvesters operate in the non-saturated mode, the EE-optimal transmit power at the BS is given by $P_{\rm{dl}}^*={{\alpha}_{\rm{WET}}Bp^*_{\rm{dl}}}$ where
	\begin{align}\label{eq:pdl}
	p^*_{\rm{dl}}=&\frac{e^{1+\text{W}\left[\frac{{\eta^{\rm{BS}}_{\rm{PA}}}\tilde{\rho}\left(\tilde{C}+M\tilde{D}\right)\left(M-K\right)\left(1+\frac{M-1}{K}\right)}{{eB\alpha_{\rm{WET}}}}-\frac{1}{e}\right]}-1}{\tilde{\rho}\left(M-K\right)\left(1+\frac{M-1}{K}\right)},
	\end{align}
	and $\text{W}[\cdot]$ is the Lambert-W function. The constants
	$\tilde{\rho}=\frac{\left(1-\xi_i\right)\,\alpha_{\rm{WET}}\,\eta_{\rm{EH}}\,\eta^{\rm{EH}}_{\rm{PA}}\,\beta_i^2}{\alpha_{\rm{WIT}}\sigma^2}$,
	 $\tilde{C}=P_{\rm{FIX}}+\frac{BK^3}{3S\delta_{\rm{BS}}}$, and $\tilde{D}=P_{\rm{BS}}+\frac{2B}{\delta_{\rm{BS}}}(1+\frac{2}{S})K+\frac{3B}{S\delta_{\rm{BS}}}K^2$. 
\end{lemma}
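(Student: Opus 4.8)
The plan is to regard the energy efficiency in (\ref{eq:EE}) as a function of the single decision variable $p_{\rm{dl}}$ (with $M$, $K$ and all other parameters held fixed) and to locate its maximizer through a first-order stationarity condition. First I would specialize to the symmetric allocation $\zeta_i=\frac{1}{K}$ assumed in this subsection, so that by the non-saturated branch of (\ref{eq:uplink achievable rate}) together with (\ref{eq:rate})--(\ref{eq:rhoi}) each user attains the common rate $R=\alpha_{\rm{WIT}}B\log_2\!\left(1+a\,p_{\rm{dl}}\right)$, where I set $a\triangleq\tilde{\rho}\left(M-K\right)\!\left(1+\frac{M-1}{K}\right)$ and use $\rho_i=p_{\rm{dl}}\,\tilde{\rho}$, which follows from (\ref{eq:rhoi}) and $P_{\rm{dl}}=\alpha_{\rm{WET}}Bp_{\rm{dl}}$; note $a$ is independent of $p_{\rm{dl}}$. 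Grouping the denominator of (\ref{eq:EE}) by its dependence on $p_{\rm{dl}}$, I would write $\text{EE}=N/D$ with $N=KR$ and $D=Q\,p_{\rm{dl}}+E+P_{\rm{DEC}}\,KR$, where $Q\triangleq\alpha_{\rm{WET}}B/\eta^{\rm{BS}}_{\rm{PA}}$ is the slope of $P_{\rm{TX}}$ in $p_{\rm{dl}}$ and $E\triangleq\tilde{C}+M\tilde{D}$ collects the $p_{\rm{dl}}$-independent fixed and computational terms. This grouping, obtained by collecting the constant and the $M$-linear contributions of $P_{\rm{FIX}}+MP_{\rm{BS}}+P_{\rm{CE}}+P_{\rm{LP}}$, is what identifies $\tilde{C}$ and $\tilde{D}$.

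The key step is the first-order condition. Differentiating $\text{EE}=N/D$ and setting $N'D=ND'$ gives $N'\!\left(Q\,p_{\rm{dl}}+E+P_{\rm{DEC}}N\right)=N\!\left(Q+P_{\rm{DEC}}N'\right)$; the two terms carrying $P_{\rm{DEC}}$ are both equal to $P_{\rm{DEC}}NN'$ and therefore cancel, leaving the clean stationarity condition $N'(p_{\rm{dl}})\left(Q\,p_{\rm{dl}}+E\right)=Q\,N(p_{\rm{dl}})$. This cancellation, which removes the decoder power entirely from the optimal-power equation, is the observation that makes the problem solvable in closed form, and I expect it to be the main conceptual obstacle to notice. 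Writing $N=c\ln(1+a p_{\rm{dl}})$ with $c\triangleq K\alpha_{\rm{WIT}}B/\ln 2$, the condition reduces to $\frac{a\left(Q p_{\rm{dl}}+E\right)}{1+a p_{\rm{dl}}}=Q\ln(1+a p_{\rm{dl}})$.

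To solve it, I would substitute $u\triangleq 1+a\,p_{\rm{dl}}>1$, which turns the equation into $u(\ln u-1)=\frac{aE}{Q}-1$, and then set $w\triangleq\ln u-1$ to obtain $w\,e^{w}=\frac{1}{e}\!\left(\frac{aE}{Q}-1\right)$. By the definition of the Lambert-W function this yields $w=\text{W}\!\big[\tfrac{1}{e}(\tfrac{aE}{Q}-1)\big]$, hence $u=e^{1+\text{W}[\cdots]}$ and $p^{*}_{\rm{dl}}=(u-1)/a$. Finally I would check that, after substituting $a=\tilde{\rho}(M-K)(1+\frac{M-1}{K})$, $Q=\alpha_{\rm{WET}}B/\eta^{\rm{BS}}_{\rm{PA}}$ and $E=\tilde{C}+M\tilde{D}$, the quantity $\frac{aE}{Q}$ equals $\frac{\eta^{\rm{BS}}_{\rm{PA}}\tilde{\rho}(\tilde{C}+M\tilde{D})(M-K)(1+\frac{M-1}{K})}{\alpha_{\rm{WET}}B}$, so that $\frac{1}{e}(\frac{aE}{Q}-1)$ is exactly the argument of $\text{W}[\cdot]$ in (\ref{eq:pdl}) and the stated expression for $p^{*}_{\rm{dl}}$ is recovered.

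It remains to confirm that this stationary point is the global maximizer and is well-defined. Since $aE/Q>0$ we have $aE/Q-1>-1$, and the map $g(u)=u\ln u-u$ is strictly increasing on $(1,\infty)$ (as $g'(u)=\ln u>0$) from $g(1^{+})=-1$ to $+\infty$, so the root $u>1$, and hence $p^{*}_{\rm{dl}}>0$, is unique; equivalently the Lambert-W argument exceeds $-1/e$, placing it in the domain of the principal branch. Because $\text{EE}(p_{\rm{dl}})\to 0$ both as $p_{\rm{dl}}\to 0^{+}$ (where $N\to0$) and as $p_{\rm{dl}}\to\infty$ (where $N$ grows logarithmically while $D$ grows linearly, cf. Remark \ref{rem: EE vanishes}), while $\text{EE}>0$ in between, the unique interior stationary point is the global maximum, completing the argument.
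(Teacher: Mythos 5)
Your proof is correct, and it reaches the paper's formula by a genuinely different route. The paper's own proof is a reduction-plus-citation: it casts $\mathrm{EE}$ as $f(z)=\frac{g\log(1+bz)}{c+dz+h\log(1+bz)}$ with $z=p_{\rm{dl}}$, $b=a$, $c=\tilde{C}+M\tilde{D}$, $d=\alpha_{\rm{WET}}B/\eta^{\rm{BS}}_{\rm{PA}}$, $h=P_{\rm{DEC}}$-related, and then invokes \cite[Lemma 3]{ref3}, which asserts quasiconcavity of $f$ and gives the maximizer $z^*=\frac{e^{\mathrm{W}\left[\frac{bc}{de}-\frac{1}{e}\right]+1}-1}{b}$; your $\frac{1}{e}\left(\frac{aE}{Q}-1\right)$ is exactly $\frac{bc}{de}-\frac{1}{e}$, so the two answers coincide. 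What you do instead is rederive that cited lemma from scratch: the first-order condition with the explicit cancellation of the $P_{\rm{DEC}}$ terms (which is precisely why $h$ never appears in the closed form --- the paper leaves this implicit inside the citation), the substitution $u=1+ap_{\rm{dl}}$, $w=\ln u-1$ leading to $we^w$, and then global optimality via the boundary behavior $\mathrm{EE}\to 0$ at $p_{\rm{dl}}\to 0^{+}$ and $p_{\rm{dl}}\to\infty$ combined with uniqueness of the stationary point (monotonicity of $u\ln u-u$ on $(1,\infty)$), rather than via quasiconcavity. Your version buys self-containedness and makes the decoder-power cancellation and the domain condition for the principal Lambert-W branch explicit; the paper's version buys brevity by outsourcing exactly these steps. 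One minor caveat on both sides: the identification of the constants $\tilde{C}$ and $\tilde{D}$ from $P_{\rm{FIX}}+MP_{\rm{BS}}+P_{\rm{CE}}+P_{\rm{LP}}$ is asserted rather than verified (and the paper's stated coefficients of $\tilde{D}$ do not exactly match the $P_{\rm{LP}}$, $P_{\rm{CE}}$ bookkeeping in (\ref{eq:EE})), but this is constant-collecting, not a gap in the optimization argument.
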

\begin{proof}
See Appendix.
\end{proof}

\begin{remark}
 The expression in (\ref{eq:pdl}) is applicable when the harvesters operate in the linear mode. 
 When $M<M_{\rm{act}}$, we should increase the BS transmit power to activate the harvesters, resulting in a non-zero EE,
 i.e., set $P_{\rm{dl}}=
  \frac{\theta_{{\rm{act}}}K}{\beta\left(M+K-1\right)}$. 
 %Similarly, when $M>M_{\rm{sat}}$, we should lower the BS transmit power to reduce the power consumption.
 %When $p^*_{\rm{dl}}$ (i.e., $P_{\rm{dl}}^*$) is insufficient to meet the activation threshold, it is EE-optimal to increase the BS transmit power so as to activate the users, 
 %\frac{\theta_{{\rm{act}}}K}{\alpha_{\rm{WET}}B\beta_i(M-1)}$
  Similarly, when $P^*_{\rm{dl}}$ satisfies $M>M_{{\rm{sat}}}$, %=\lfloor\frac{\theta_{{\rm{sat}}}K}{\alpha_{\rm{WET}}B\beta_ip^*_{\rm{dl}}}\rfloor$, 
  reducing the transmit power to be at least as small as  $\frac{\theta_{{\rm{act}}}K}{\beta\left(M+K-1\right)}$ helps improve the energy efficiency. This is because once the harvesters get saturated, the excess power only increases the power consumption without bringing any improvement in the achievable rate. Using this principle, Algorithm \ref{talha} provides a heuristic procedure for selecting the transmit power for an energy efficient operation. We validate this approach by exhaustively searching for the optimal transmit power in the next section.   
  \end{remark}
  
  \begin{remark}
  We note that it is EE-optimal to dedicate most of the resources for the uplink, i.e., $\alpha_{\rm{WIT}}$ should be much higher than $\alpha_{\rm{WET}}$. This is because the uplink sum rate tends to improve as $\alpha_{\rm{WIT}}$ is increased. This can be observed by plugging the EE-optimal transmit power from Lemma \ref{lem: optimal power} in the achievable rate expression in (\ref{eq:uplink achievable rate}). The prelog term of the resulting rate expression grows linearly, whereas the logarithmic term decays only sublinearly with $\alpha_{\rm{WIT}}$. Furthermore, the EE-optimal transmit power also increases to compensate for a reduction in $\alpha_{\rm{WET}}$. The increase in the sumrate overpowers the increase in total power consumption, leading to an increase in the overall energy efficiency. 
  \end{remark}

%When $M<M_{\rm{act}}$, we should increase the BS transmit power to activate the harvesters, resulting in a non-zero EE. 
%Similarly, when $M>M_{\rm{sat}}$, we should lower the BS transmit power to reduce the power consumption. %The energy harvester activation and saturation thresholds can be used for intelligently selection the transmit power given the number of antennas. 
\begin{algorithm}
	\caption{Selecting nearly optimal $P_{\rm{dl}}$}\label{talha}
	\begin{algorithmic}[1]
		\Procedure{}{}
		%\text{Input: $K$, $\theta_{{\rm{act}}}$, $\theta_{{\rm{sat}}}$, $B$, $\beta$, $\rho$, $\tilde{C}$, $\tilde{D}$, $\eta^{\rm{EH}}_{\rm{PA}}, \alpha_{\rm{WET}}$}\\
%		\text{Initialize} $M\gets K+1, \text{flag}\gets0, {\rm{EE}}_{\rm{old}}\gets1$\\
		%\quad \textbf{While} $\text{flag}=1$
		%\State \text{Update:} 
		\State $p_{{\rm{act}}}\gets\frac{K\theta_{{\rm{act}}}}{\alpha_{\rm{WET}}B\beta(M+K-1)}$,
		%\State 
		$p_{{\rm{sat}}}\gets\frac{K\theta_{{\rm{sat}}}}{\alpha_{\rm{WET}}B\beta(M+K+1)}$			
		\State $p_{\rm{dl}} \gets  
		\frac{e^{1+\text{W}\left[\frac{{\eta^{\rm{BS}}_{\rm{PA}}}\tilde{\rho}\left(\tilde{C}+M\tilde{D}\right)\left(M-K\right)\left(1+\frac{M-1}{K}\right)}{{eB\alpha_{\rm{WET}}}}-\frac{1}{e}\right]}-1}{\tilde{\rho}\left(M-K\right)\left(1+\frac{M-1}{K}\right)}
		$		
		
		\State $M_{{\rm{act}}}\gets\lceil\frac{K\theta_{{\rm{act}}}}{\alpha_{\rm{WET}}B\beta p_{\rm{dl}}}-(K-1)\rceil$	
		\State 
		$M_{{\rm{sat}}}\gets\lfloor\frac{K\theta_{{\rm{sat}}}}{\alpha_{\rm{WET}}B\beta p_{\rm{dl}}}-(K-1)\rfloor$	
		
		%\text{Update $p_{\rm{dl}}$}
		\State \textbf{If} $M >M_{{\rm{sat}}}$ 
		\State \quad$p_{\rm{dl}}\gets \min\left(p_{{\rm{sat}}},p_{\rm{dl}}\right)$
		\State \textbf{else} 
		\State \quad$p_{\rm{dl}}\gets \max\left(p_{{\rm{act}}},p_{\rm{dl}}\right)$
		\State \textbf{end}
		\State $P_{\rm{dl}}^*\gets \alpha_{\rm{WET}}p_{\rm{dl}}B$
%		\State Compute energy efficiency %$\textrm{EE}(M,p_{\rm{dl}})$
		
%		\State \textbf{If} ${\rm{EE}}_{\rm{old}} >{\rm{EE}}(M,p_{\rm{dl}})$
%		\State $\quad \text{flag} \gets 0$
%		\State \textbf{else} 
%		\State $\quad {\rm{EE}}_{\rm{old}}\gets \textrm{EE}(M,p_{\rm{dl}})$
%		\State $\quad M\gets M+1$
%		\State \textbf{end}\\
%		\quad \textbf{end}
%		\State $P_{\rm{dl}}^*\gets \alpha_{\rm{WET}}p_{\rm{dl}}B$, $M^*\gets M$												
%%		\text{Output: $M^*$, $P_{\rm{dl}}^*$}				
		\EndProcedure
	\end{algorithmic}
\end{algorithm}

\subsection{Simulation Results}\label{SecSim}
We now present simulation results to verify the analytical insights in this section. The simulation parameters are the same as described in Section \ref{secSim:wet}, unless noted otherwise. We set $\alpha_{\rm{WET}}=0.01$ such that $\alpha_{\rm{WIT}}=1-\alpha_{\rm{WET}}-\alpha_{\rm{Tr}}\approx0.98$. We dedicate more resources for the uplink because energy efficiency benefits from increasing $\alpha_{\rm{WIT}}$ thanks to an increase in the uplink sum rate. %We set  $T_c=180$ ms, $B_c=10$ kHz, $S=1800$, $\theta_{{\rm{act}}}=10~\mu$W, $\theta_{{\rm{sat}}}=1$ mW
%$\tau_{\rm{ul}}=1$, $B\sigma^2=-96$ dBm, $L_{\rm{BS}}=12.8\times10^9$ Flops/Watt, $\eta_{\rm{BS}}=0.39$, $\eta_{\rm{EH}}= 0.3$,	$\xi=0.7$, $\theta_{{\rm{act}}}=10~\mu$W,
%$P_{\rm{BS}}=1$ W, 
We set $P_{\rm{FIX}}=18$ W (we expect an increased fixed power consumption at the BS compared to the scenario in Section \ref{secSim:wet} since it now has to deal with data reception on the uplink similar to a traditional BS)\cite{ref3}, 
$P_{\rm{DEC}}=10^{-9}$ W/bits/sec \cite{ref3}, $r_{{\min}}=5$ m, and $r_{{\max}}=50$ m. Geometrically, this is equivalent to the setup where users are located on a circle of radius 18.3 m centered at the BS. 
%$\alpha=3$, $C=10^{-3}$, $\beta=\frac{1}{50^3}$, $\alpha_{\rm{Tr}}=\tau_{\rm{ul}}\frac{K}{S}$.
%We plot the results for two cases, one where the number of users $K$ is fixed, and the other where the number of BS antennas per user is held constant. 
In the following figures, \emph{ideal} curve is for ideal (linear) energy harvesters with $\theta_{{\rm{act}}}\rightarrow 0$ and $\theta_{{\rm{sat}}}\rightarrow \infty$. Similarly, \emph{practical} curve represents the case where practical energy harvesters are deployed in a system optimized for ideal energy harvesters.

\emph{{EE}-optimal BS transmit power vs. M:} In Fig. \ref{fig:pwr}, we plot the EE-optimal transmit power against $M$ for both ideal and practical energy harvesters for $K=2$ users. We note that the EE-optimal transmit power selection assuming ideal energy harvesters can be misleading for practical energy harvesters. When $M$ is large (say $> 2000$ in Fig. \ref{fig:pwr}), it is EE-optimal to reduce the transmit power with $M$ for practical harvesters. This helps avoid energy wastage when the energy harvesters operate in the saturated mode. This is contrary to the ideal case where the EE-optimal transmit power increases with $M$.
When $M$ is small (say $< 90$), it is EE-optimal to use a larger transmit power than the ideal case. A sufficient increase in the transmit power helps activate the nodes, resulting in a nonzero data rate and EE. We also note that the transmit power selection based on Algorithm \ref{talha} closely approximates the optimal solution. Similarly, Fig. \ref{fig:per-antenna} shows how the per-antenna transmit power scales with $M$. Unlike the total transmit power which may increase with $M$, the EE-optimal per-antenna transmit power typically reduces as $M$ is increased.  

\emph{Maximal EE vs. M:} In Fig. \ref{fig:EE}, we plot the maximal EE versus $M$ for $K=2$ users. We observe that a system designed for ideal energy harvesters may suffer a severe performance loss when used with practical energy harvesters. For example, when $M$ is small (say $< 90$ in Fig. \ref{fig:pwr} and \ref{fig:EE}), the EE is zero for practical energy harvesters. This is because the transmit power, though EE-optimal for an ideal harvester, is insufficient to activate a practical harvester.   
With a practical harvester, EE-optimality warrants increasing the transmit power so as to improve the uplink rate and EE. 
Conversely, when $M$ is large (say $> 2000$ in Fig. \ref{fig:pwr} and \ref{fig:EE}), the maximal EE is attained by sufficiently reducing the transmit power to avoid saturating the harvesters. For $K=2$ users, EE-optimality is achieved using around 56 antennas.
In Fig. \ref{fig:EEK50}, we observe similar trends for the case of $K=50$ users where EE-optimality is achieved using around 230 antennas. 
We note that the EE achieved using Algorithm \ref{talha} closely approximates the optimal solution. 
Further, in line with Remark \ref{rem: EE vanishes}, EE will eventually vanish in the large antenna regime due to excessive power consumption. 
Finally, %there exists an optimal $M$ that maximizes the EE. 
we observe that the EE-optimal operating point indeed lies in the massive antenna regime. 

\emph{Uplink sum rate vs. M:} In Fig. \ref{fig:rate}, we plot the uplink sum rate obtained using the EE-optimal policy considered in the previous figures. In contrast to the EE, the sum rate improves monotonically with $M$. Further, the rate grows with $M$ at a slower pace in the saturated mode (Remark \ref{rem: rate growth}). This is because the harvested power does not increase with $M$ in the saturated mode, leaving only the uplink detection to benefit from additional BS antennas.

\begin{figure} [t]
	\centerline{
		\includegraphics[width=\columnwidth]{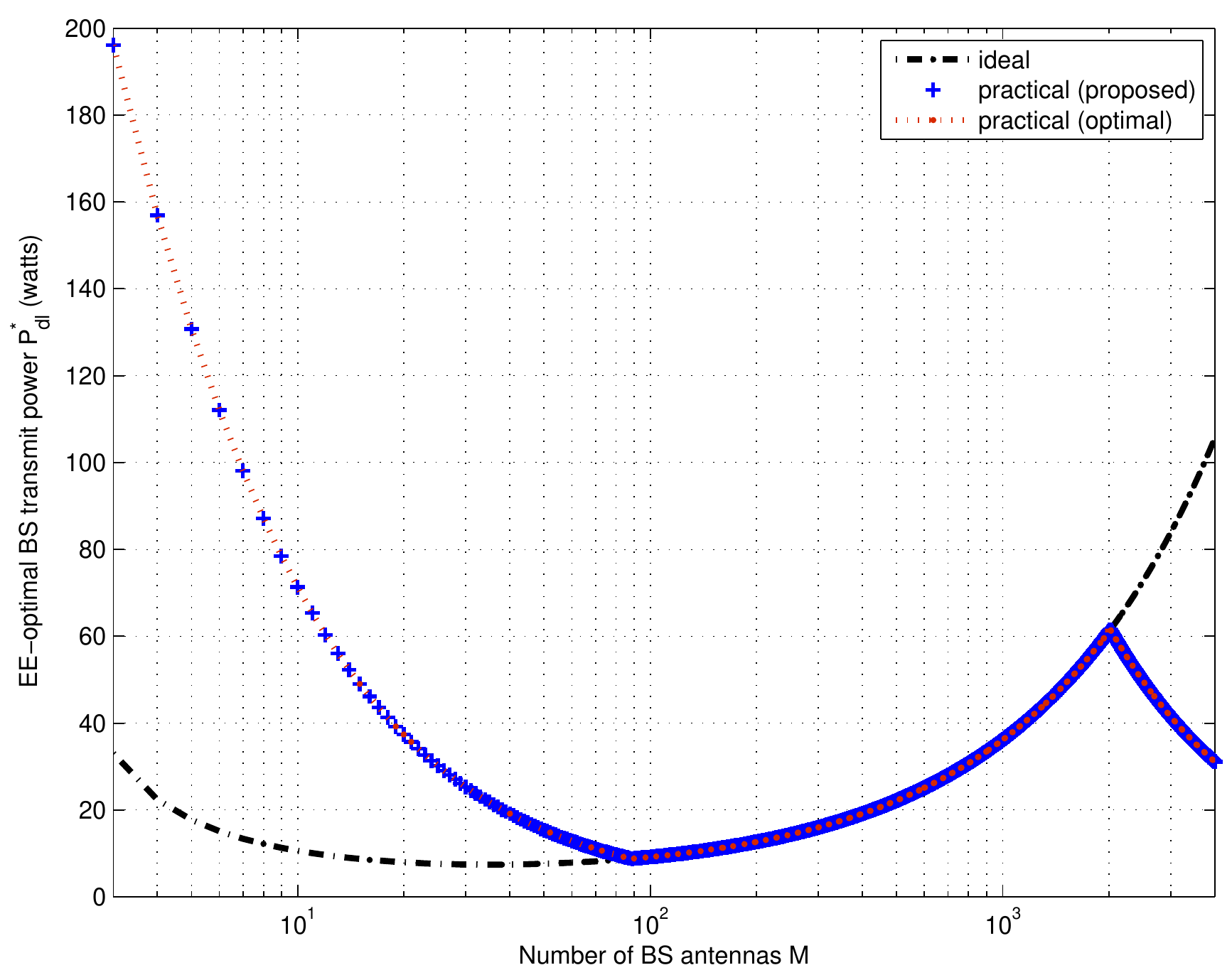}
	}
	\caption{EE-optimal transmit power vs. the number of BS antennas for ideal as well as practical energy harvesters. 
		%Pilot overhead does not grow with $M$, hence the rate does not saturate (also because beam strength could only improve with $M$).
	The EE-optimal approach for the ideal (linear) case could be misleading for the practical (nonlinear) case: 
	It is EE-optimal to i) sufficiently increase the transmit power to wake up the users; and ii) decrease it in the saturated mode to avoid energy wastage. The proposed approach  closely approximates the optimal solution.}
	\label{fig:pwr}
\end{figure}
%-------------------
\begin{figure} [t]
	\centerline{
		\includegraphics[width=\columnwidth]{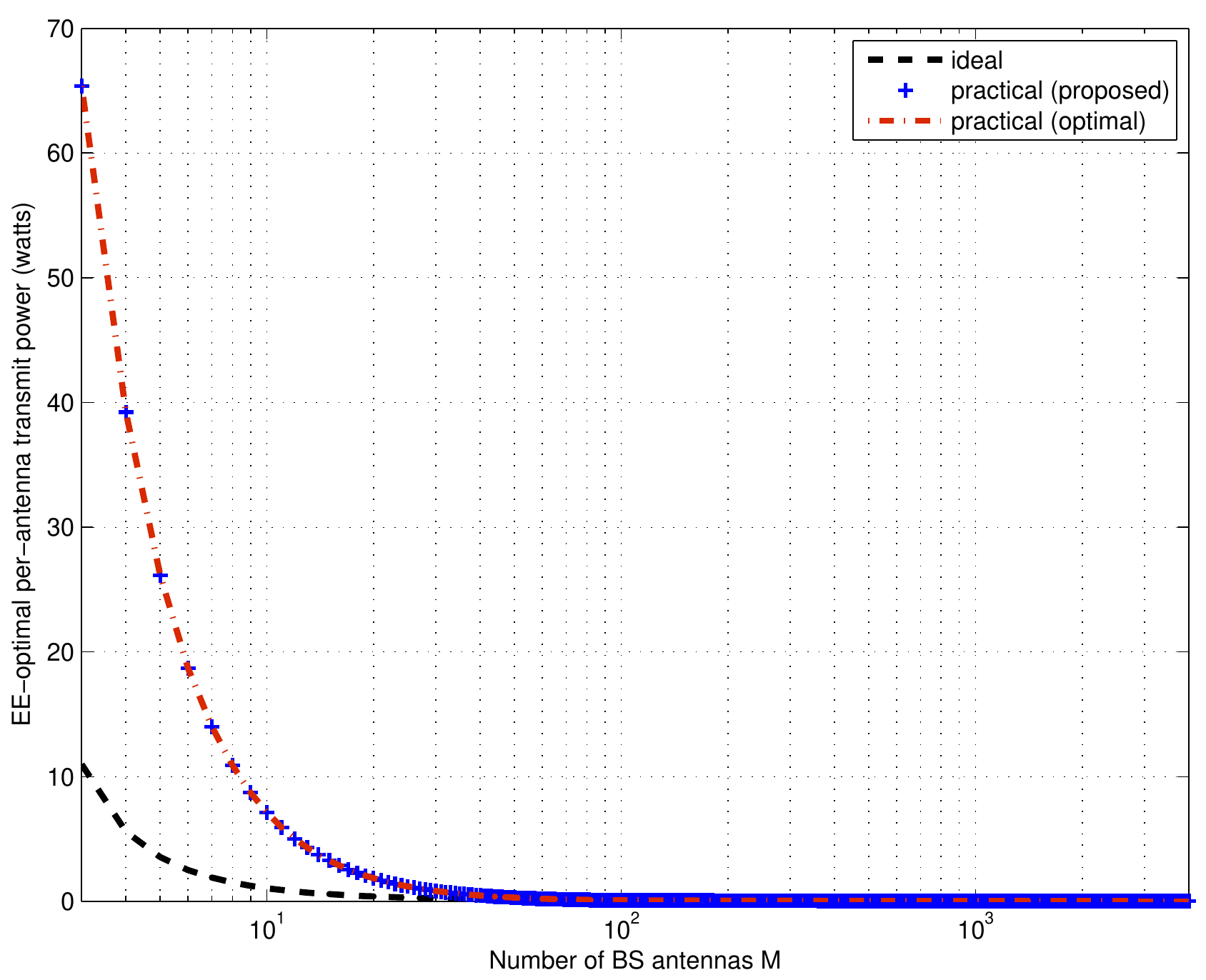}
	}
	\caption{The EE-optimal per-antenna transmit power vs. the number of BS antennas. %Pilot overhead does not grow with $M$, hence the rate does not saturate (also because beam strength could only improve with $M$).
		The per-antenna optimal transmit power tends to decrease with $M$ for both ideal and practical cases.}
	\label{fig:per-antenna}
\end{figure}
%-------------------

%-------------------
\begin{figure} [t]
	\centerline{
		\includegraphics[width=\columnwidth]{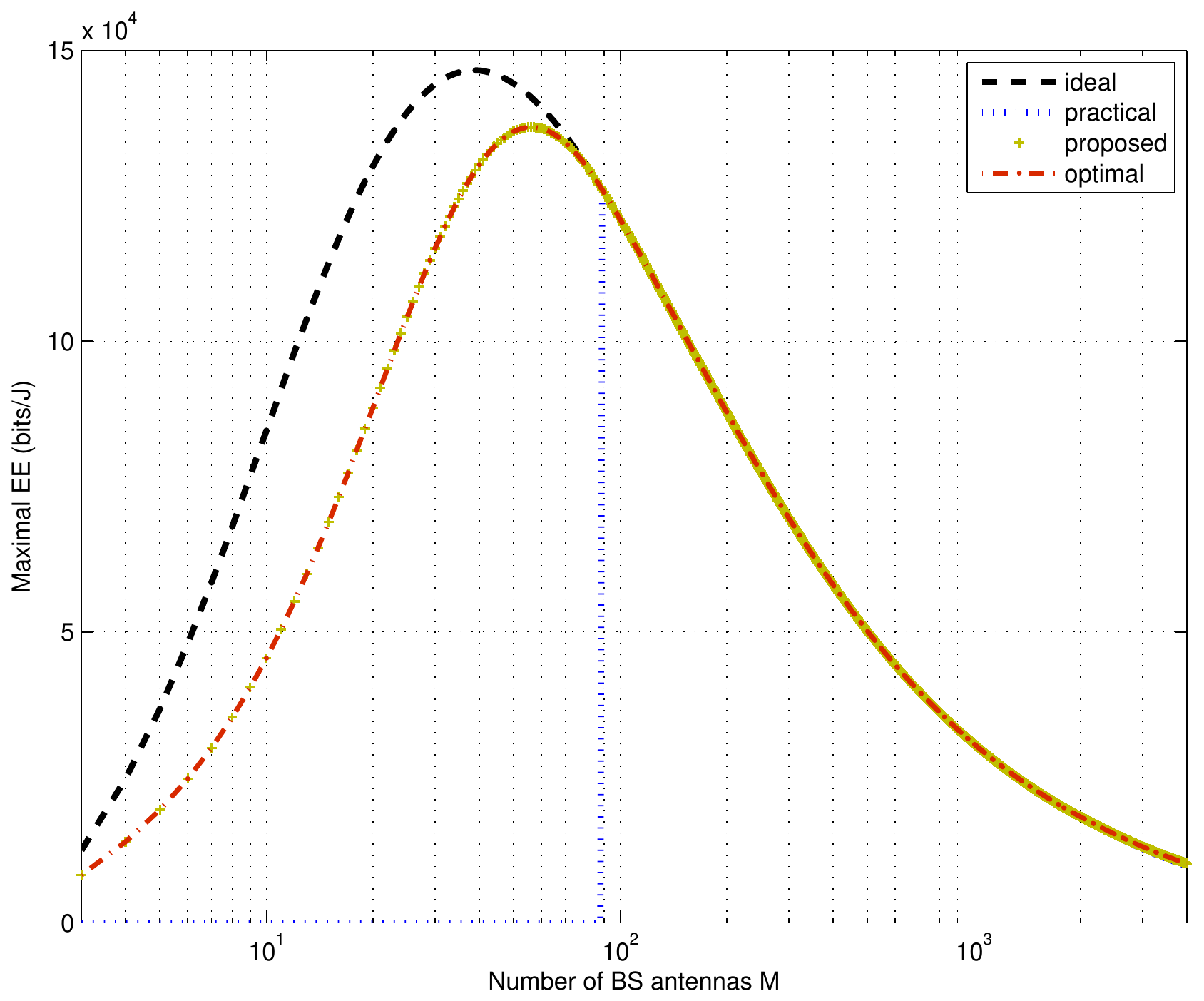}
	}
	\caption{The maximal EE vs. the number of BS antennas for $K=2$ users. A comparison between ``ideal" and ``practical' shows the performance actually achieved with practical energy harvesters in a system designed for ideal (linear) energy harvesters. 
	The EE-optimal approach for the ideal case could be very misleading for the practical case. Note that the proposed solution significantly improves the EE, and closely approximates the optimal solution. Moreover, there exists an optimal $M$ that maximizes the EE.}
	\label{fig:EE}
\end{figure}
%-------------------

%-------------------
\begin{figure} [t]
	\centerline{
		\includegraphics[width=\columnwidth]{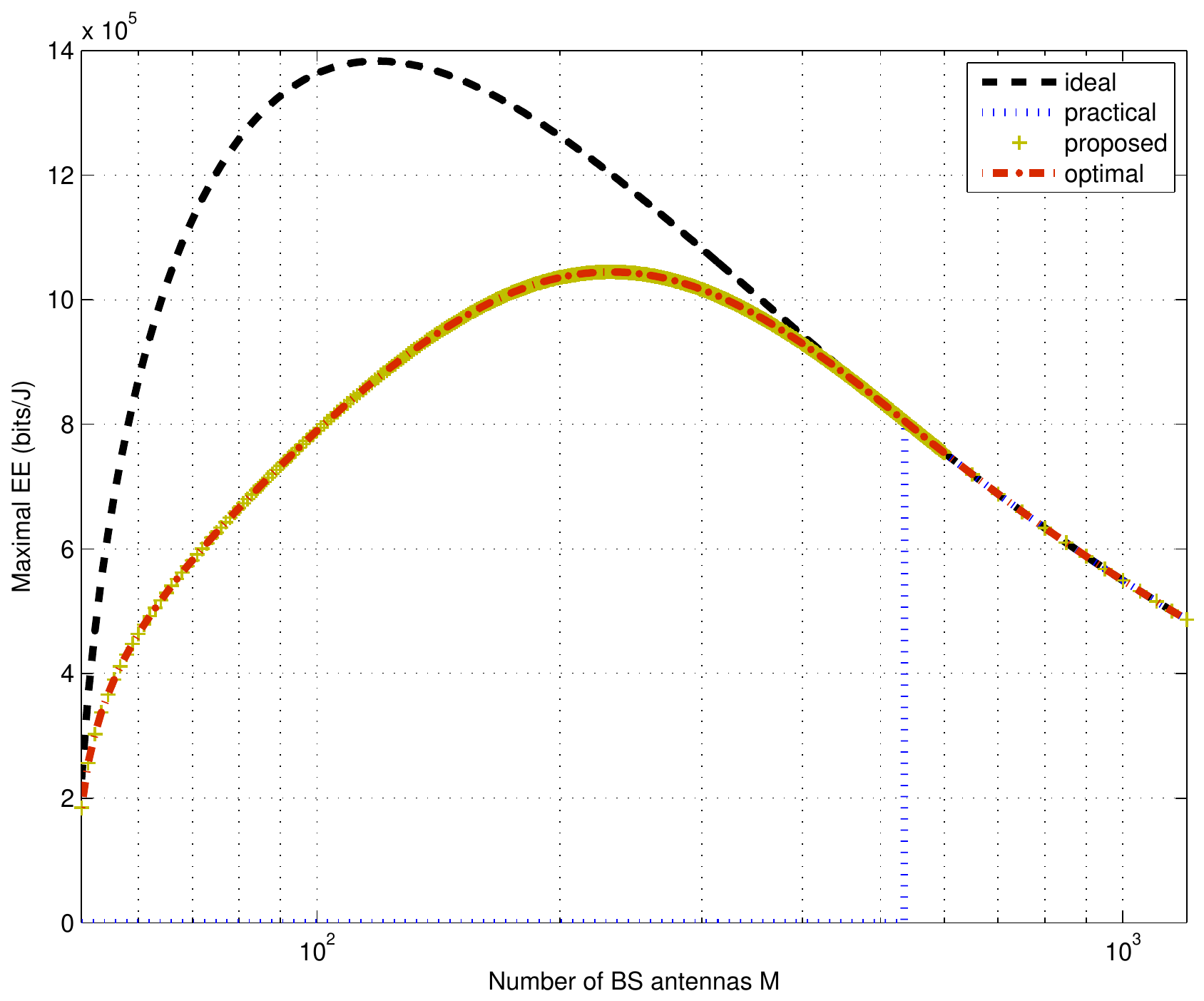}
	}
	\caption{The maximal EE vs. the number of BS antennas for $K=50$ users.}
	\label{fig:EEK50}
\end{figure}
%-------------------

%-------------------
\begin{figure} [t]
	\centerline{
		\includegraphics[width=\columnwidth]{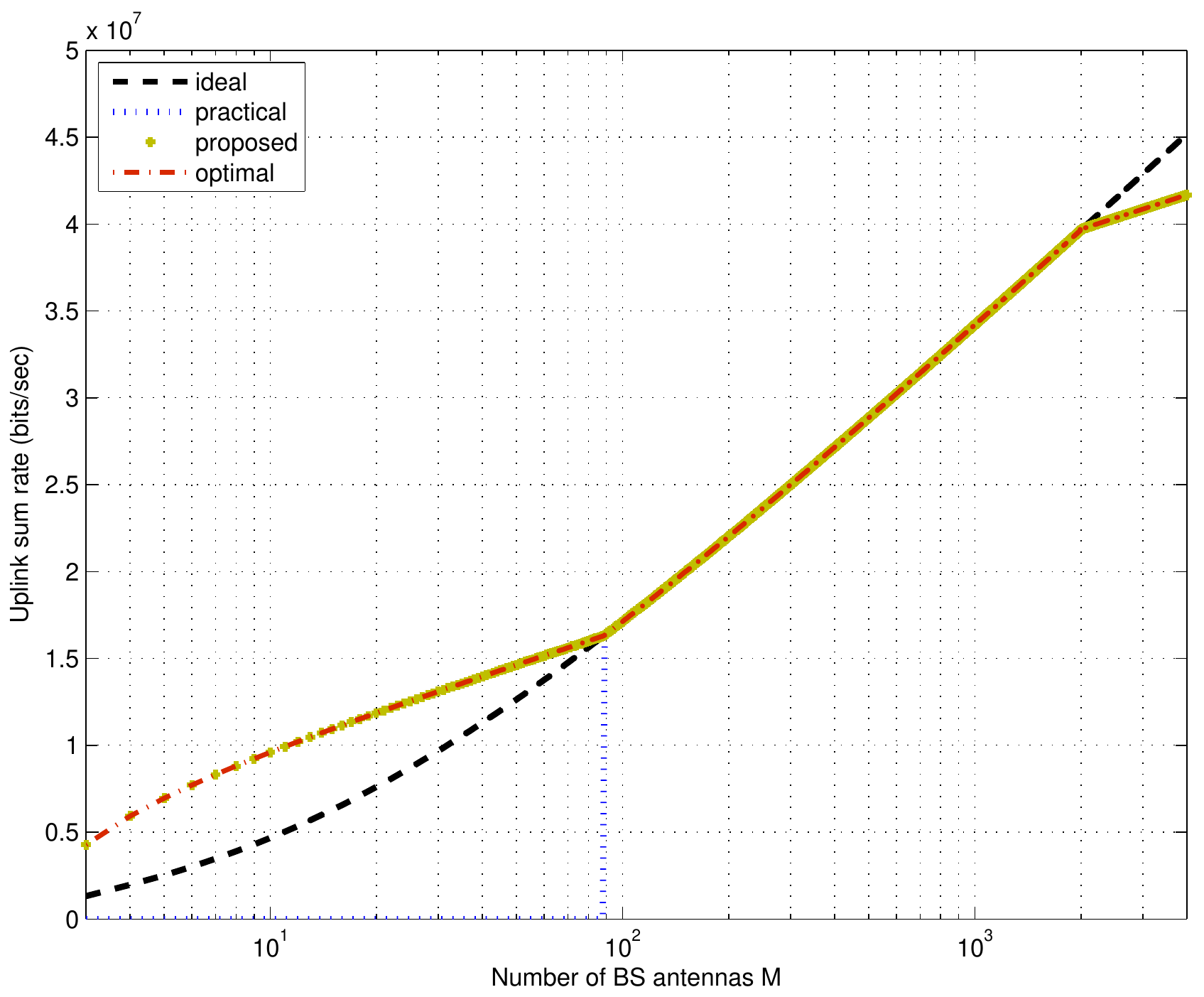}
	}
	\caption{Uplink achievable rate vs. the number of BS antennas.
		The proposed solution yields uplink rate almost similar to that obtained using the EE-optimal solution. The rate increases monotonically with $M$ even in the saturated mode due to improved uplink detection.
		 %Pilot overhead does not grow with $M$, hence the rate does not saturate (also because beam strength could only improve with $M$).
		}
	\label{fig:rate}
\end{figure}
%-------------------

\section{Conclusions}\label{secConc}
We optimized the system-level power transfer efficiency and 
energy efficiency of wireless energy and/or information transfer in a multi-user network, where  
a BS equipped with a massive antenna array remotely powers multiple single-antenna energy harvesting users.   
Using a piecewise linear function for modeling the harvester output, we derived the average harvested power at a user in terms of the system parameters. 
We then analyzed the power transfer efficiency of the overall system, while using a scalable power consumption model at the BS. 
We found that the overall PTE may increase or decrease by adding more BS antennas, depending on the system parameters such as the number of users, energy harvester specification, and the transmit/circuit power consumption at the BS.
We also found that it tends to improve by adding more users to the system.
We analytically characterized the PTE-optimal values for the number of BS antennas and users. The results suggest that it is PTE-optimal to operate the system in the massive antenna regime.

We also studied the energy efficiency of the overall system when the users communicate with the BS using the harvested energy.
 We characterized the EE-optimal BS transmit power for energy efficient system operation. 
 %while using a scalable power consumption model.
 The analysis, aided by simulations, revealed several useful insights. 
 While the energy efficiency eventually vanishes as the number of antennas becomes large, results suggest that it is energy efficient to operate the system in the massive MIMO regime. Moreover, increasing the transmit power 
%and serving more users 
helps improve the energy efficiency as the number of antennas is increased. 

 %In all these \appendices
\allowdisplaybreaks
\section*{Appendix}
\subsection{Proof of Lemma \ref{lemma:avgrcvperfectcsi}}
Let us derive the average received energy $\bar{\gamma}_i=\alpha_{\rm{WET}}\,\mathbb{E}\left[|y_i|^2\right]$. Using (\ref{eq:signal}), we express $\mathbb{E}\left[|y_i|^2\right]$ as 
\begin{align}\label{eq:proof Lemma 1}
	\mathbb{E}\left[|y_i|^2\right]&={\zeta_i}\mathbb{E}\left[{\|\bg_i\|}^2|s|^2\right]+
	\sum_{j\neq i}^{K}{\zeta_j}\mathbb{E}\left[\|\bg_i^{\rm{H}}{{\hat{\bg}_j}}\|^2 |s|^2\right]\nonumber\\
&\qquad +
\sum_{j\neq i}^{K}
	\sqrt{\zeta_i\zeta_j}	\mathbb{E}\left[\dot{\bg}_i^{\rm{H}}{\hat{\bg}_j}|s|^2\right]	\nonumber\\
&\qquad+\sum_{u\neq i}^{K}
\sum_{v\neq i,u}^{K}
\sqrt{\zeta_u\zeta_v}	\mathbb{E}\left[\hat{\bg}_v^{\rm{H}}{{\bg}_i}{{\bg}_i}^{\rm{H}}\hat{\bg}_u|s|^2	\right],
\end{align}
where ${\hat{\bg}_j}=\frac{{\bg}_j}{\|{\bg}_j\|}$ and ${\dot{\bg}_j}={\|{\bg}_j\|}{{\bg}_j}~\forall~j$.
The rest of the proof follows from the independence of the random vectors $\{\bg_k\}_{k=1}^{K}$ and by further noting that the entries of $\bg_k$ are independent and identically distributed with mean zero and variance $\beta_k$. Furthermore, $\mathbb{E}\left[|s|^2\right]=p_{\rm{dl}}$ and the transmitted symbol $s$ is independent of $\bg_k$. Specifically, the first term in (\ref{eq:proof Lemma 1}) $\zeta_i\mathbb{E}\left[{\|\bg_i\|}^2|s|^2\right]=\zeta_iM\beta_ip_{\rm{dl}}$. Similarly, the second term $\sum_{j\neq i}^{K}{\zeta_j}\mathbb{E}\left[\|\bg_i^{\rm{H}}{{\hat{\bg}_j}}\|^2 |s|^2\right]=\beta_ip_{\rm{dl}}\sum_{j\neq i}^{K}{\zeta_j}=\beta_ip_{\rm{dl}}\left(1-\zeta_i\right)$ since $\sum_{j=1}^{K}{\zeta_j}=1$. The remaining two terms in (\ref{eq:proof Lemma 1}) are zero because $\mathbb{E}\left[\dot{\bg}_i^{\rm{H}}{\hat{\bg}_j}|s|^2\right]=0$ and $\mathbb{E}\left[\hat{\bg}_v^{\rm{H}}{{\bg}_i}{{\bg}_i}^{\rm{H}}\hat{\bg}_u|s|^2\right]=0$.

\subsection{Proof of Lemma \ref{lem:ls}} 
Let us consider a user $i$ transmitting a training signal over $\tau$ symbols with an average symbol energy $p_{{\rm{Tr}},i}$. Assuming the BS uses MMSE channel estimation, the mean received energy $\bar{\gamma}_i^{\rm{I}}=\alpha_{\rm{WET}}\,\mathbb{E}\left[|y_i|^2\right]$ (where $y_i$ follows from (\ref{eq:signal main})) can be expressed as 
\begin{align}
\bar{\gamma}_i^{\rm{I}}=A_1M\left[1-\frac{M-1}{M}\frac{1}{1+\frac{\beta_i\tau p_{{\rm{Tr}},i}}{\sigma^2}}\right] + A_2.
\end{align}
Here, similar to \cite[Appendix B]{ref2}, we have leveraged the independence of random vectors $\{\bg_i\}_i$, and the fact that the variance of the estimation error is $\frac{\beta_i}{1+\frac{\beta_i\tau p_{{\rm{Tr}},i}}{\sigma^2}}$ (see discussion following (\ref{eq: mmse})). We recall that $\tau p_{{\rm{Tr}},i}=\eta_{\rm{PA}}^{\rm{EH}}\xi_i\bar{\delta}_i^{\rm{I}}S$ since the user employs a fraction $\xi_i$ of the per-frame average harvested energy for uplink transmission. For the non-saturated mode, we apply Lemma \ref{lem:ls harv} and substitute $\bar{\delta}_i^{\rm{I}}=\eta_{\rm{EH}}\bar{\gamma}_i^{\rm{I}}$ to obtain a quadratic equation in $\bar{\gamma}_i^{\rm{I}}$.  The solution of this quadratic equation yields (\ref{eq: rcv ls act}). Similarly, we substitute $\bar{\delta}_i^{\rm{I}}=\frac{\eta_{\rm{EH}}\theta_{\rm{sat}}}{B}$ for the saturated mode to obtain (\ref{eq:rcv ls sat}).

\subsection{Proof of Proposition \ref{lem:pte M}} First, note that the PTE is sub-optimal when $M\notin\left[M_{\rm{act}},M_{\rm{sat}}\right]$, as it is zero for $M<M_{\rm{act}}$
and is upper bounded by ${\rm{PTE}}\left(M_{\rm{sat}},K\right)$ for $M>M_{\rm{sat}}$.
The next step is to solve the linear fractional program $\underset{x}{\max}~f(x)=\underset{x}{\max}~\frac{N_1x+N_2}{D_1x+D_2}$ under the constraint 
$x\in\left[M_{\rm{act}},M_{\rm{sat}}\right]$, where
$N_1=BKA_1$, $N_2=BKA_2$, $D_1=P_{\rm{BS}}+{{\tilde{P}_{\rm{CE}}}}+{\tilde{P}_{\rm{LP}}}$, and $D_2=P_{\rm{TX}}
+P_{\rm{FIX}}$. 
We note that $f(x)$ is quasilinear and monotonic in $x$ \cite[Section 4.3]{boyd2004convex},\cite{charnes1962programming}. 
%This fractional program can be equivalently casted as a linear program , which results in the solution $x^*=M_{\rm{act}}$ or $x^*=M_{\rm{sat}}$ depending on the system parameters: 
When $N_1D_2>N_2D_1$ such that
$\frac{\partial}{\partial x}f(x)>0$, $x^*=M_{\rm{sat}}$ as $f(x)$ is an increasing function of $x$. Conversely, when $N_1D_2\leq N_2D_1$, $x^*=M_{\rm{act}}$ maximizes $f(x)$.

\subsection{Proof of Proposition \ref{lem: pte K}} 
The proof follows by noting that the function $f(x)=\frac{N_1x+N_2}{D_1x^2+D_2x+D_3}$ is quasiconcave for $x\in\mathbb{R}$ when the superlevel sets $\mathcal{S}_\nu=\{x:f(x)\geq\nu\}$ are convex for any $\nu\in\mathbb{R}$\cite[Section 3.4]{boyd2004convex}. Using differentiation, we can prove the convexity of the superlevel sets for nonnegative values of $\{N_i\}_{i=1,2}$ and $\{D_i\}_{i=1,2,3}$, where $D_1=\dot{P}_{\rm{CE}}$, $D_2=\dot{P}_{\rm{LP}}$, and $D_3={P_{\rm{TX}}+P_{\rm{FIX}}+MP_{\rm{BS}}}$.
$N_1=\eta_{\rm{EH}}\theta_{\rm{sat}}$ and $N_2=0$ in the saturated mode, whereas $N_1=\eta_{\rm{EH}}P_{\rm{dl}}\beta$ and $N_2=\eta_{\rm{EH}}P_{\rm{dl}}\beta\left(M-1\right)$ in the non-saturated mode.  
Solving $\frac{\partial}{\partial K} f(K)=0$, we obtain the optimal $K$ for the saturated $\left(K^{*}_{\rm{sat}}\right)$ or non-saturated mode $\left(K^{*}_{\rm{act}}\right)$, which follows from (i) the quasiconcavity of $f(K)$ since it is an increasing (or decreasing) function of $K$ for $K<K^*$ (or $K>K^*$) where $K^*$ is the stationary point of $f(K)$; and (ii) because $K \leq K_{\rm{sat}}$ and $K \leq K_{\rm{max}}$ in the saturated or non-saturated mode.
Finally, the condition in (\ref{eq:PTE K condition}) is obtained by comparing the maximal PTE in the saturated and non-saturated modes.

\subsection{Proof of Lemma \ref{lem: optimal power}} The proof follows by casting the EE expression in (\ref{eq:EE}) in the form $f(z)=\frac{g\log(1+bz)}{c+dz+h\log(1+bz)}$, where
the constants $c, h\geq 0$, and $b, d, g >0$. Using the quasiconcavity of the function $f(z)$, it was shown in \cite[Lemma 3]{ref3} that the optimal solution to the problem
%\begin{align}\label{eq:form}
$\max\limits_{z>-\frac{1}{b}} f(z)=\frac{g\log(1+bz)}{c+dz+h\log(1+bz)}$
%\end{align}	
is given by
%\begin{align}\label{eq:ans}
$z^*=\frac{e^{\rm{W}\left[\frac{bc}{de}-\frac{1}{e}\right]+1}-1}{b}$, which completes the proof.
%\end{align}	
%Using (\ref{eq:ans}), we obtain the closed-form expression for the EE-optimal transmit power in $(\ref{eq:pdl})$.
\balance
\bibliographystyle{ieeetr}
\bibliography{references_wrkshp,All_Ref_Ahmed_June.bib}
\end{document}